\let\NAT@parse\undefined
\def \btheta{{\boldsymbol{\theta}}}
\newtheorem{thm}{Theorem}
\newtheorem*{thm*}{} 
\newtheorem{cor}[thm]{Corollary} 
\newtheorem{lem}[thm]{Lemma}
\newtheorem*{lem*}{Lemma}
\newtheorem{dfn}{Definition}
\newtheorem*{rem*}{Remark}
\newcommand{\mathd}{\mathrm{d}}
\newcommand{\mathe}{\mathrm{e}}
\newcommand{\EXP}{\ensuremath{\mathbb{E}}}
\newcommand{\MED}{\ensuremath{\mathrm{median}}}
\newcommand{\IND}{\mathbb{I}}
\newcommand{\KT}{\textsc{kt}}
\newcommand{\Np}{\ensuremath{\mathbb{N}_+}}
\newcommand{\N}{\ensuremath{\mathbb{N}}}
\newcommand{\R}{\ensuremath{\mathbb{R}}}
\newcommand{\PROB}{\ensuremath{\mathbb{P}}}
\newcommand{\M}{\ensuremath{\mathfrak M}}
\def \u{u}
\begin{document}
\title{ 
 About Adaptive Coding on Countable Alphabets: Max-Stable Envelope Classes}

\author{
\IEEEauthorblockN{St\'ephane Boucheron\IEEEauthorrefmark{1}\thanks{\IEEEauthorrefmark{1}supported by Network of Excellence \textsc{pascal ii}, Laboratoire de Probabi\-lit\'es et Mod\`eles Al\'eatoires, Universit\'e Paris-Diderot  \& DMA ENS Ulm, Paris} \and Elisabeth Gassiat\IEEEauthorrefmark{2}\thanks{\IEEEauthorrefmark{2}supported by Network of Excellence \textsc{pascal ii}, Laboratoire de Math\'ematiques d'Orsay, Universit\'e Paris-Sud}
 \and Mesrob I. Ohannessian\IEEEauthorrefmark{3}\thanks{\IEEEauthorrefmark{3}with Microsoft Research - Inria Joint Center. Partially supported by an ERCIM postdoctoral fellowship while at Laboratoire de Math\'ematiques d'Orsay, Universit\'e Paris-Sud.}
}
}

\date{\today}
\maketitle

\begin{abstract}

In this paper, we study the problem of lossless universal source coding for stationary memoryless sources on countably infinite alphabets. This task is generally not achievable without restricting the class of sources over which universality is desired. Building on our prior work, we propose natural families of sources characterized by a common dominating envelope. We particularly emphasize the notion of adaptivity, which is the ability to perform as well as an oracle knowing the envelope, without actually knowing it. This is closely related to the notion of hierarchical universal source coding, but with the important difference that families of envelope classes are not discretely indexed and not necessarily nested.

Our contribution is to extend the classes of envelopes over which adaptive universal source coding is possible, namely by including max-stable (heavy-tailed) envelopes which are excellent models in many applications, such as natural language modeling. We derive a minimax lower bound on the redundancy of any code on such envelope classes, including an oracle that knows the envelope. We then propose a constructive code that does not use knowledge of the envelope. The code is computationally efficient and is structured to use an {E}xpanding {T}hreshold for {A}uto-{C}ensoring, and we therefore dub it the \textsc{ETAC}-code. We prove that the \textsc{ETAC}-code achieves the lower bound on the minimax redundancy within a factor logarithmic in the sequence length, and can be therefore qualified as a near-adaptive code over families of heavy-tailed envelopes. For finite and light-tailed envelopes the penalty is even less, and the same code follows closely previous results that explicitly made the light-tailed assumption. Our technical results are founded on methods from regular variation theory and concentration of measure.

\end{abstract}

{\bf{Keywords}:} countable alphabets; redundancy; adaptive compression;
minimax; 
\bibliographystyle{IEEEtranSN}

\section{Introduction} \label{sec:introduction}

The problem we address here is that of \emph{coding} a finite sequence of symbols $x_{1:n}=x_1,...,x_n$, taking values in an (at most) countably infinite \emph{alphabet} $\mathcal{X}$. A \emph{lossless binary source code} (or \emph{code} for short) is a one-to-one map from finite sequences of symbols in $\mathcal X$ of any possible length $n$ to finite sequences of binary $\{0,1\}$ symbols.

We model sequences as being generated by a \emph{source}, defined as a probability measure $\PROB\in\M_1(\mathcal{X}^\N)$ on the set of infinite sequences of symbols from $\mathcal{X}$. We work primarily, for a given $n$, with the finite restriction $\PROB_n$ of this probability measure. That is, $\PROB_n(x_{1:n})$ is the probability of the first $n$ symbols of the random sequence, written $X_{1:n}=X_1,...,X_n$, being equal to $x_{1:n}$. We focus entirely on \emph{stationary memoryless} sources, where $\PROB=\PROB_1^\N$ is a product measure for some $\PROB_1\in \M_1(\mathcal{X})$ called the marginal, itself a probability measure on $\mathcal{X}$. A stationary memoryless source therefore generates independent and identically distributed sequences of symbols. Given a source, the task of source coding is to  minimize the \emph{expected} codelength:
$$
 \EXP[\ell(X_{1:n})] = \sum_{x_{1:n}\in\mathcal{X}^n} \PROB_n(x_{1:n}) \ell(x_{1:n}).
$$
By the source coding theorem, the Shannon entropy of the source
$$
 H(\PROB_n)=-\sum_{x_{1:n}\in {\cal X}^n} \PROB_n(x_{1:n}) \log \PROB_n(x_{1:n})
$$
is a lower bound to the expected codelength of any lossless binary code. (Here and throughout the paper, $\log$ denotes the base-$2$ logarithm). Therefore, one way to measure the performance of any particular code is by its \emph{expected redundancy}, defined as the excess expected length $\EXP[\ell(X_{1:n})]-H(\PROB_n)$. This is meaningful when $\frac{1}{n} H(\PROB_n) = H(\PROB_1)<\infty$, which we assume to be the case throughout.

In this paper, in addition to having to deal with infinite alphabets, we are particularly interested in coding that performs well over a \emph{source class} $\Lambda$, with a common alphabet $\mathcal{X}$, defined as a collection of various probability distributions $\PROB$ on $\mathcal{X}^\N$. We write $\Lambda_n$ to denote the restriction $\{\PROB_n : \PROB \in\Lambda\}$ of $\Lambda$ to distributions on the first $n$ symbols. We now move on to elaborate the classical notion of universality with respect to a source class and then the notion of adaptivity with respect to collections of source classes. We first pass through some more basics about source coding, and we end with a summary of our contributions and an outline the structure of the paper. We use the introduction as a means to introduce all the main notation used in the rest of the paper.

\subsection{Universal source coding}

A code is \emph{uniquely decodable} if any concatenation of codewords can be parsed into codewords in a unique way. The Kraft-McMillan inequality asserts that for a uniquely decodable code over $\mathcal{X}^*$, the codelength map $x_{1:n}\mapsto\ell(x_{1:n})$ satisfies  $\sum_n \sum_{x_{1:n}\in \mathcal{X}^n} 2^{-\ell (x_{1:n})}\leq 1$, and that conversely, given codelengths that satisfy such an inequality, there exists a corresponding uniquely decodable code. The Kraft-McMillan inequality also establishes a deeper correspondence, one between codes over $\mathcal{X}^n$ and probability distributions over $\mathcal{X}^n$: (after normalization) $x_{1:n}\mapsto 2^{-\ell(x_{1:n})}$ defines a probability distribution over $\mathcal{X}^n$, conversely, arithmetic coding \citep*{rissanen:langdon:1984,cover:thomas:1991} allows us to design uniquely decodable codes from any probability distribution over $\mathcal{X}^n$. Therefore we may refer to an arbitrary probability distribution $Q_n\in\M_1(\mathcal{X}^n)$ as a \emph{coding distribution} \citep{cover:thomas:1991}.


The correspondence between uniquely decodable codes and probability distributions allows us to describe redundancy as a statistical risk. Indeed, the
expected code length of a coding distribution $Q_n$ is $\EXP_\PROB\left[-\log Q_n(X_{1:n})\right]$, its expected redundancy, when the source is $\PROB$, is the Kullback-Leibler divergence (or relative entropy) between $\PROB_n$ and $Q_n$:
\begin{eqnarray*}
D(\PROB_n, Q_n) &=&\sum_{x_{1:n}\in {\cal X}^n} \PROB_n(x_{1:n}) \log \frac{\PROB_n(x_{1:n})}{Q_n(x_{1:n})} \\
&=&\EXP_{\PROB_n}\left[\log \frac{\PROB_n(X_{1:n})}{Q_n(X_{1:n})}\right] \, .
\end{eqnarray*}
Some authors also call it the \emph{cumulative entropy risk} \citep{MR1604481}.

The theoretically optimal coding probabilities are given by the source $\PROB_n$ itself. And by using methods such as  arithmetic coding, codes corresponding to $\PROB_n$ can be designed to have a redundancy that remains bounded by $1$ for all $n$. 

In universal coding,  one attempts to construct a coding distribution $Q_n$ that achieves low redundancy across an entire source class $\Lambda$, without knowing in advance which $\PROB\in\Lambda$ is actually generating the sequence. Such a construction is called coding with respect to $\Lambda$.

To assess a code with respect to a source class, we may adopt one of many perspectives for gauging performance. 
Here,  we study the \emph{maximal redundancy} defined as:
$$
 R^+(Q_n, \Lambda_n)=\sup_{\PROB\in\Lambda} D(\PROB_n, Q_n) \, .
$$
which is essentially as high as the redundancy could grow if $\PROB$ is chosen adversarially at every $n$. Studying this is a way to capture our complete lack of information about which distribution generates the sequence.
 
 The maximal redundancy establishes a uniform rate at which the redundancy grows. The infimum of $R^+(Q_n,\Lambda_n)$ over all $Q_n$, is called the \emph{minimax redundancy} of $\Lambda$:
$$
 R^+(\Lambda_n)=\inf_{Q_n\in {\M}_1\left({\mathcal{X}}^n\right)} R^+(Q_n,\Lambda_n).
$$
The minimax redundancy is a property of the source class $\Lambda$ and represents the best a code could hope for in terms of a guaranteed expected redundancy over the class $\Lambda$.

Universal coding schemes such as the mixture codes developed for memoryless or Markov sources over finite alphabets \citep{krichevsky:trofimov:1981, shtarkov:1987, MR1101099, Rya1984, xie:barron:2000, barron:rissanen:yu:1998, xie:barron:1997, barron:clarke:1994, barron:clarke:1990, Wil98, Gassiat2013} have small and well understood maximal redundancies.
In the simplest setting, that is when considering stationary memoryless sources over a finite alphabet with size $d$, the minimax redundancy scales like $(d-1)/2 \log n $ as the sequence length $n$ tends to infinity. In the language of statistics, classes of sources that can be parametrized by compact subsets of Euclidean spaces are said to be \emph{parametric}. The techniques advocated in the aforementioned references are said to \emph{asymptotically achieve the minimax risk} over the source class in this parametric setting. This is a step beyond strong universality, since the best redundancy decay rate is achieved, and is the notion we strive for in the paper.

\subsection{Adaptive source coding}

Although adaptivity became a major topic in mathematical statistics in the beginning of the early 1990's \citep[see][and references therein]{MR1623559, MR2013911, MR2767163}, the expression adaptive coding barely shows up in articles dedicated to lossless source coding. Source coding research has been mostly concerned with universality. As we have outlined, given a possibly very large collection of sources, a universal code attempts to minimize redundancy, that is the difference between the expected codeword length and the expected codeword length that would be achieved by a code tailored to the source. Adaptive coding considers a more general setting: given a collection of source classes, such that for each class, a good universal coder is available, is it possible to engineer a single coding method that performs well over all classes in the collection?

In the parlance of mathematical statistics, an estimator is said to be \emph{adaptive} over a collection of models or classes if it achieves or at least approaches the minimax risk over all models simultaneously. If we slice the collection of sources of unbounded finite Markov order over a finite alphabet into classes of sources of Markov order $k\in \mathbb{N}$, it is well known that Lempel-Ziv coders are \emph{not} adaptive \citep[see][and references therein]{MR1855254, MR1337757, MR1426235, MR1426236, szpankowski:2001, Gassiat2013}, whereas adaptivity can be achieved over classes of sources of Markov order $k\in \mathbb{N}$ as shown by \citet{Rya1984} who calls adaptivity twice-universality. Such adaptivity is also sometimes called hierarchical universality \citep{MeFe98}, which refers mostly to nested discretely indexed classes. Context-tree weighting is also adaptive \citep{catoni:2004}, and we mention this text individually because it is one of the very few on compression that adopts explicitly the adaptive estimation perspective.

There are in fact different flavors of adaptivity in the statistics literature. The textbooks \citep*{MR2767163, MR2013911, MR1623559} define more or less stringent notions of asymptotic adaptivity in the minimax sense. We tune these notions in order to accommodate the context of data compression. Let  $(\Lambda(\mu))$ be a collection of source classes indexed by $\mu\in \mathcal{M}$, where $\mathcal{M}$ is not a necessarily nested or even a discrete set of indices. A sequence $(Q_n)_{n}$ of coding probabilities is said to be \emph{asymptotically adaptive} with respect to a collection $(\Lambda(\mu))_{\mu\in \mathcal{M}}$ of source classes if for all $\mu\in \mathcal{M}$:
\begin{eqnarray} 
\lefteqn{R^+(Q_n, \Lambda_n(\mu)) } \notag\\
&= &\sup_{\PROB \in \Lambda(\mu)} D(\PROB_n,Q_n) \leq (1+o_\mu(1))R^+(\Lambda_n(\mu))\label{eq:adaptive}
\end{eqnarray}
as $n$ tends to infinity. If the inequality \eqref{eq:adaptive} holds with a factor other than $(1+o_\mu(1))$ (that may depend on $\mu$) larger than $1$ to the right, then we say that there is adaptivity \emph{within} this factor. Note that $Q_n$ cannot depend on $\mu$ or else the problem is simply one of universality. \citet*{bontemps2012adaptiveb} describe such an asymptotically adaptive compression scheme for a non-trivial (though restrictive) collection of non-parametric source classes. In order to develop such an adaptive compression technique, it is useful to know the minimax redundancy in each class.

For large collections of massive source classes like the ones we will be handling in this paper, minimax redundancy 
may itself grow much faster than any such logarithm. In this case, the logarithmic factor may be meaningfully ignored. In order to accommodate this situation, we present a less stringent criterion of adaptivity. We introduce the following terminology: we call $(Q_n)_{n}$ \emph{asymptotically near-adaptive} with respect to a collection $(\Lambda(\mu))_{\mu\in \mathcal{M}}$ of source classes if for all $\mu\in \mathcal{M}$:
\begin{equation} \label{eq:near-adaptive}
R^+(Q_n, \Lambda_n(\mu))\leq O_{\mu}(\log n ) R^+(\Lambda_n(\mu)) \, ,
\end{equation}
where the constants involved in the $O_{\mu}(\log n )$ term may depend on the source class $\Lambda(\mu)$. Therefore near-adaptivity is adaptivity within a logarithmic factor in the sequence length.

Adaptive source coding raises several challenges: explicit model selection as in \citep*{barron:birge:massart:1999} and source parameter estimation as in two-steps coding schemes \citep*{,rissanen:1984,barron:rissanen:yu:1998} should be avoided so as to make online coding and decoding possible; coding and decoding should be computationally efficient and if possible feasible in linear time. Therefore, in addition to striving to achieve the impressive success of Bayesian coding schemes with respect to parametric classes of sources as demonstrated in the aforementioned papers by Barron \emph{et al.}, we also strive to produce efficient near-adaptive universal codes for large collections of source classes.

\subsection{Contributions and organization of the paper}

Unlike the mostly finite-alphabet results referred to so far, this paper is concerned with adaptive coding over a countably infinite alphabet $\mathcal{X}$ (say the set of positive integers $\Np$ or the set of integers $\N$) as described for example in \cite{MR514346, gyorfi1993uns, FosSti2002, MR2097043, MR2451044, boucheron:garivier:gassiat:2006, garivier:2006, Bon11,Gassiat2013,bontemps2012adaptiveb}. This does not preclude the finite-alphabet case, which becomes a special instance.

When coping with a countably infinite alphabet, even if the source statistics is known, establishing the existence of optimal codes is non-trivial \citep{MR1481062}. More importantly, when we consider universal coding over the class of sources on countably infinite alphabets, even weak universality is not achievable. This was developed in a sequence of papers, starting with early negative results on stationary sources by \cite*{MR514346}, and then also established for memoryless sources by \citeauthor*{gyorfi1993uns} (\citeyear{gyorfi1993uns, MR1281931}). More recently, delicate asymptotic results for coding over large finite alphabets with unknown size have started to appear \citep{MR2097043, SzpWei10,YanBar13}, balancing various finite alphabet sizes and sequence lengths. 

This general difficulty prompted \cite*{boucheron:garivier:gassiat:2006} to first study the redundancy of specific memoryless source classes, namely classes defined by an envelope function. Offline coding techniques for the collection of source classes defined by algebraically vanishing envelopes were introduced in \citep*{boucheron:garivier:gassiat:2006}. \citet{Bon11} designed and analyzed the \textsc{ac}-code (\underline{a}uto-\underline{c}ensuring code). The \textsc{ac}-code has a straightforward structure, it uses a sequence of Krichevsky-Trofimov coders with progressively growing alphabets determined by a threshold that is the maximum of all symbols seen so far: the $i^{\mathrm{th}}$ symbol is either encoded using Krichevsky-Trofimov mixture encoding for alphabet $\{0,\ldots, \max(x_1,\ldots, x_{i-1})\}$, or Elias penultimate encoding if it is the new maximum. \citeauthor{Bon11} proved that this simple code is \emph{adaptive} over the union of classes of sources with exponentially decreasing envelopes. \citet*{bontemps2012adaptiveb} revisited and simplified Bontemps's techniques, and proved moreover that the \textsc{ac}-code is actually adaptive in the sense of \eqref{eq:adaptive} over all classes of sources defined by sub-exponentially decreasing envelopes, that is, envelopes with non-decreasing hazard rate. The \textsc{ac}-code achieves two unexpected benefits: on a practical front it is an online encoding and decoding technique, and on a theoretical front it shows that an effective threshold can be constructed driven by data. The analysis of the \textsc{ac}-code in \citep*{bontemps2012adaptiveb} does not depend on the precise shape of the envelope but strongly benefits from the insights of extreme value theory (EVT) \citep*{FalHusRei11,HaaFei06,MR2108013,Res87} as the minimax redundancy rate of the classes investigated therein asymptotically depends on the slow variation property of the quantile function of the envelope distribution.

A major question that was left open in this work, however, was: is the same adaptivity possible with the much larger class of heavy-tailed envelopes? These envelopes occur often in practice, and are a distinctive property of text and natural language, domains where compression is used extensively. In this paper, we answer this question in the affirmative.

In Section \ref{sec:envelope}, we properly define such heavy-tailed envelope classes. Using the language of EVT, these correspond to \emph{Fr\'echet max-stable} distributions, and are best expressed using the notion of regular variation. In Section \ref{sec:etac-code}, we give the construction of the \textsc{etac}-code, short for \underline{e}xpanding \underline{t}hreshold \underline{a}uto-\underline{c}ensoring code. This is a new computationally efficient code, which builds on the same principle of the \textsc{ac}-code, but uses a new data-driven threshold that expands near the tail of the distribution rather than staying at the maximum. Its thresholding strategy can be summarized in the following way: symbols that are larger than the current threshold tend to be rare for that sequence length and they are encoded using a fixed naive encoder, whereas smaller symbols tend to occur more frequently and they are encoded using the asymptotically maximin Krichevsky-Trofimov encoders tailored to the effective alphabet defined by symbols smaller than the threshold.

In Section \ref{sec:mainresult}, we provide an overview of our main results. The major contribution is the treatment of heavy-tailed envelopes:
\begin{thm*}[\textbf{Theorem \ref{th:main}}]
Over the collection of Fr\'echet max-stable envelope source classes, the \textsc{etac}-code asymptotically achieves the lower bound on the minimax redundancy within a factor logarithmic in the sequence length, and can be therefore qualified as near-adaptive.
\end{thm*}
We also show that for finite and light-tailed envelopes investigated in \citep{bontemps2012adaptiveb}, the same code recovers the adaptivity properties of the \textsc{ac}-code, up to an even slower (roughly $\log\log n$) factor. These results require a lower bound on the minimax redundancy for Fr\'echet max-stable envelope classes, which we give in Section \ref{sec:minimax}, and a detailed analysis of the \textsc{etac}-code, which we perform in Section \ref{sec:etac-analysis}.

The proof techniques combine traditional approaches from information theory \citep{Rya1984,FosSti2002,MR1281931,gyorfi1993uns,Gassiat2013}, regular variation arguments \citep{MR0216548,BinGolTeu89,ohannessian2012rare,ohannessian2012large,BHA13}, as well as concentration inequalities \citep{boluma13}. We collect most of this technical content and proofs within the Appendix.

\section{Envelope Classes} \label{sec:envelope}

We start with the basic definition of an envelope source class.

\begin{dfn}[\textsc{envelope source classes}] \label{dfn:envelope:class}
Let $f$ be a mapping from $\Np$ to $(0,1],$ with $1< \sum_{j\in\Np} f(j)<\infty$. The \emph{envelope class} $\Lambda(f)$ defined by the function $f$ is the collection of stationary memoryless sources with first marginal distribution dominated by $f$:
\begin{eqnarray*}
\Lambda(f)
& =&\Bigl\{ \PROB~:~~\forall j\in \Np,\;\PROB_1\{j\}\leq
f(j)~, \Bigr.
\\
& & \Bigl. \text{ and } \PROB \text{ is stationary and memoryless.}
\Bigr\}\, .
\end{eqnarray*}
\end{dfn}

Envelope classes can be associated with a cumulative distribution, which we call the \emph{envelope distribution}, defined as follows.

\begin{dfn}[\textsc{envelope distribution}]\label{def:env:distribution} Let $f$ be an envelope function.
The associated \emph{envelope distribution} $F$ is defined as
$$
 F(k) = 1 - \sum_{j>k} f(j)
$$
for all $k$ such that $\sum_{j>k} f(j)<1$, and $0$ otherwise. The \emph{tail envelope function} is then defined as the survival function $\overline{F}=1-F$.

Note that the associated probability mass function is equal to $F$ at $\max \{k\colon \sum_{j\geq k} f(j)\geq 1\}$ and does fall below $f$. It coincides with $f$ beyond that point and is zero before it.
\end{dfn}

It is convenient to define a continuous version of the envelope distribution, as follows:
\begin{dfn}[\textsc{smoothed envelope distribution}] \label{dfn:smoothed}
Let $f$ be an envelope function, and let $F$ be its associated envelope distribution. The \emph{smoothed envelope distribution} $F_c$ is a cumulative distribution function on $\R_+$ such that:
 \begin{enumerate}[(i)]
 \item $F_c$ coincides with $F$ on $\mathbb{N}$, and
 \item $F_c$ has a continuous derivative on $\mathbb{R}_{+}$, this derivative is positive at all $x$ such that $F(\lfloor x\rfloor)\in (0,1).$.
\end{enumerate}
Since $F_c$ is effectively an extension of $F$, we allow ourselves to use the $F$ notation to also refer to the smoothed envelope distribution, and mostly avoid the explicit $_c$-subscript notation.
\end{dfn}

For intuition, note that if $Y_c$ is distributed according to the smoothed envelope distribution $F_c$ then $Y=\lceil Y_c\rceil$ is distributed according to the envelope distribution $F$. We do not elaborate on the existence of the smoothed envelope distribution, as explicit constructions may be given by various interpolation methods. We simply remark that point (ii) in Definition \ref{dfn:smoothed} is feasible since envelopes are taken to be strictly positive in Definition \ref{dfn:envelope:class}.

\subsection{Tail Properties and Regular Variation}

In general, we would like to accommodate a large variety of models, yet we do not want models that are too exotic, for both theoretical reasons of tractability and practical reasons of plausibility. With this motivation, we focus on envelope functions that enjoy favorable tail properties. We do this by using the machinery of regular variation and extreme value theory (\textsc{evt} for short). We start with an implicit tail property, but then give a portemanteau theorem that makes explicit various equivalences.

\begin{dfn}[\textsc{maximum domain of attraction}] \label{def:mda}
A (smoothed) distribution function $F$ over $\R$ belongs to a maximum domain of attraction (\textsc{mda}) if there exist sequences $(B_n)_n$ and $(A_n)_n$ with $A_n>0$ and a non-degenerate distribution function $\textsc{gev}$ such that
\begin{math}
 \lim_{n\to \infty} F^n\left( A_n x+B_n \right) = \textsc{gev}(x)
\end{math}
for all $x\in \mathbb{R}$ where $\textsc{gev}$ is continuous, or equivalently if the sequence of distribution functions of $(\max(X_1,\ldots,X_n)-B_n)/A_n$ converges pointwise to $\textsc{gev}$ at every point where $\textsc{gev}$ is continuous.
\end{dfn}

There is in fact much more to belonging to a maximum domain of attraction than this weak (in-law) convergence of rescaled and recentered maxima, and some of this is relevant to adaptive compression as we shall see. Therefore we elaborate more aspects of this property, starting with the fundamental theorem of \textsc{evt} \citep[see][Chapter I]{HaaFei06}. For this, we need to define the following quantities. For all $\gamma \in \mathbb{R},$ let $\textsc{gev}_\gamma(x) = \exp\left( - (1+\gamma x)^{-1/\gamma}\right) $ for $x$ such that $1+\gamma x >0$ (with the convention $\textsc{gev}_0(x)= \exp(-\exp(-x))$. Given a continuous strictly increasing distribution function $F$, let the function $U\colon ]1,\infty) \rightarrow \mathbb{R}$ be a shorthand for the $(1-1/t)$-quantile of $F$, that is:
\begin{equation} \label{eq:quantile}
U(t) =F^{-1}\left(1-{1}/{t}\right) = \overline{F}^{-1}(1/t).
\end{equation}

\begin{thm}[\textsc{fundamental theorem of extreme value theory}] \label{theo:fund:evt}
 Given a distribution function $F$ on $\mathbb{R}$, the following properties are equivalent:
 \begin{enumerate}[(i)]
 \item $F$ belongs to a maximum domain of attraction.

 \item There exist $\gamma\in \mathbb{R}$ and a positive scaling sequence $(A_n)_n$ such that
 \begin{displaymath}
 \lim_{n\to \infty} F^n\left( A_n x+U(n) \right) = \textsc{gev}_\gamma(x)
 \end{displaymath}
 for all $x$ such that $1+\gamma x\geq 0$. This is abbreviated by $F \in \textsc{mda}(\gamma)$.

 \item Conditional excess distributions converge weakly toward a generalized Pareto distribution, that is, there exist $\gamma\in \mathbb{R}$ and a positive scaling function $\sigma$ such that for all $x>0$
 \begin{displaymath}
 \lim_{t\nearrow U(\infty)} \frac{\overline{F}\left( t+ \sigma(t) x \right)}{\overline{F}(t)} = -\log \textsc{gev}_\gamma (x)
 \end{displaymath}

 \item The function $U$ has the \emph{extended regular variation property}, that is, there exists a non-negative measurable function $a$ on $(1,\infty)$
such that for all $x>0$,
 \begin{math}
 \lim_{t\nearrow \infty} \frac{U(tx)-U(t)}{a(t)}
 \end{math}
exists. In that case there exists $\gamma\in \mathbb{R}$ such that the limit is $\int_1^x y^{\gamma-1} \mathrm{d}y.$ This is abbreviated by $U\in \textsc{erv}(\gamma).$
 \end{enumerate}
\end{thm}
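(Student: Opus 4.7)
The plan is to establish the cycle (i)~$\Leftrightarrow$~(ii)~$\Leftrightarrow$~(iv) and then separately (ii)~$\Leftrightarrow$~(iii). A recurring tool is the asymptotic identity $\log F(u_n) = -\overline{F}(u_n) + O(\overline{F}(u_n)^2)$, valid whenever $\overline{F}(u_n)\to 0$, which converts convergence of the maximum $M_n=\max(X_1,\ldots,X_n)$ into tail statements for $\overline{F}$: namely $F^n(u_n)\to e^{-\tau}$ if and only if $n\overline{F}(u_n)\to \tau$ for $\tau>0$. This equivalence is what allows us to move freely between distributional limits for maxima and quantile/tail behaviour of $F$.

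For (i)~$\Rightarrow$~(ii), I would invoke the convergence of types theorem to show that any non-degenerate limit of $F^n(A_n x+B_n)$ must, up to an affine reparametrization of $x$, be of the form $\textsc{gev}_\gamma$ for a single real parameter $\gamma$. Setting $\tau(x)=(1+\gamma x)^{-1/\gamma}$, the translation above yields $n\overline{F}(A_n x + B_n)\to \tau(x)$; taking generalized inverses (which is legitimate after smoothing as in Definition~\ref{dfn:smoothed}) converts this into $A_n x + B_n \approx U(n/\tau(x))$. Varying $x$ and substituting $s=n/\tau(x)$ shows that replacing $B_n$ by $U(n)$ is legitimate, at the cost of absorbing a shift of order $A_n$ into the scaling, which yields (ii). The implication (ii)~$\Rightarrow$~(i) is trivial.

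For (ii)~$\Leftrightarrow$~(iv), inversion is again the key. Starting from (ii), the approximation $A_n x + U(n) \approx U(n/\tau(x))$ derived above rewrites, after setting $t=n$ and $x'=1/\tau(x)$, as $(U(tx')-U(t))/A_t \to \int_1^{x'} y^{\gamma-1}\,dy$, which is precisely $U\in \textsc{erv}(\gamma)$ with $a(t)=A_t$. The converse retraces the same steps in reverse. For (ii)~$\Leftrightarrow$~(iii) I would appeal to the Pickands--Balkema--de Haan theorem: the tail ratio $\overline{F}(U(n)+A_n x)/\overline{F}(U(n))$ is, via (ii) and the $\log$-to-tail translation, asymptotic to $(1+\gamma x)^{-1/\gamma}$; local uniformity in regular variation allows the substitution of a continuous threshold $t\nearrow U(\infty)$ with scaling $\sigma(t)=A_n$ at $t=U(n)$, yielding (iii). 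Conversely, integrating the tail ratio recovers (iv).

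The main technical obstacle is securing \emph{locally uniform} convergence in the preceding limits, so that the transitions between the discrete index $n$ and the continuous threshold $t$, and the inversions between $F$ and $U$, are justified. The standard remedy is Karamata's uniform convergence theorem for regularly varying functions together with Potter's bounds; these ensure that pointwise convergence of the quantile and tail functions upgrades automatically to uniformity on compact sets not containing the endpoints. A secondary subtlety is the boundary case $\gamma=0$, where $(1+\gamma x)^{-1/\gamma}$ must be read as $e^{-x}$ and $\int_1^x y^{\gamma-1}\,dy$ as $\log x$; this is handled by a coherent passage through the $\gamma\to 0$ limit at each step of the argument.
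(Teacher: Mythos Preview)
The paper does not prove this theorem. It is stated as background, with attribution (``Clause (ii) is known as the Fisher-Tippett-Gnedenko Theorem, clause (iii) as the Balkema-de Haan Theorem, and clause (iv) is due to de Haan'') and a citation to Chapter~I of de~Haan and Ferreira (2006). There is therefore no paper proof to compare your proposal against.

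That said, your sketch follows the standard textbook route and the ingredients you name---the convergence-of-types theorem to pin down the $\textsc{gev}_\gamma$ family, the translation $F^n(u_n)\to e^{-\tau}\Leftrightarrow n\overline{F}(u_n)\to\tau$, inversion between $\overline{F}$ and $U$, and local uniformity via Potter bounds---are exactly what the cited reference uses. One point worth tightening: in the step (ii)~$\Rightarrow$~(iv) you write $A_n x + U(n)\approx U(n/\tau(x))$ and then substitute $x'=1/\tau(x)$ to land on the $\textsc{erv}$ definition, but this substitution and the passage from the discrete index $n$ to a continuous $t$ require more care than ``retracing in reverse''; the clean way is to use the monotone density/inverse arguments in \citep[Theorems~1.1.2 and~1.1.6]{HaaFei06} rather than treating $\approx$ as an equality one can solve. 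Similarly, for (iii)~$\Leftrightarrow$~(ii) you invoke Pickands--Balkema--de~Haan by name, which is essentially what you are being asked to prove, so that step is circular as written; the actual argument goes through the tail-quantile equivalence and (iv) directly.
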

Clause (ii) is known as the Fisher-Tippett-Gnedenko Theorem, clause (iii) as the Balkema-de Haan Theorem, and clause (iv) is due to de Haan. The extreme value index $\gamma$ in clauses (ii), (iii), and (iv) is the same. The auxiliary function $a$ in (iv) and $\sigma$ in clause (iii) may be related by choosing $\sigma(t)$ as $a(1/\overline{F}(t)).$

We see therefore that, up to shifting and scaling, the limiting distributions $\textsc{gev}$ of Definition \ref{def:mda} can only be of the form $\textsc{gev}_\gamma$, called \emph{generalized extreme value} distributions. Random variables which have limiting distribution $\textsc{gev}_\gamma$ produce others who do too under the $\max$ operation, which is why such distributions are called \emph{max-stable}. These distributions are known as \emph{Fr\'echet} when $\gamma>0$, \emph{Gumbel} when $\gamma=0$, or \emph{Weibull} when $\gamma<0$.

Note that in this text, we mostly consider envelopes with unbounded support, that is we assume that $U(\infty)=\lim_{t\rightarrow\infty} U(t)$ is infinite. In particular, this means that $\gamma\geq 0$, and we only deal with Fr\'echet and Gumbel limits. In this framework, $U$ has the more basic \emph{regular variation} property: it is regularly varying with index $\gamma$, which we write as $U\in \textsc{rv}(\gamma)$. That is, we have:
$$
\forall x>0,\;\lim_{t\rightarrow +\infty} \frac{U(tx)}{U(t)}=x^{\gamma} \, .
$$
Moreover, if $\gamma>0$, we can choose $\gamma U $ as the auxiliary function $a$ witnessing $U \in \textsc{erv}(\gamma)$ in clause (iv) above. See Appendix \ref{app:regular-variation} for more on regular variation.


\subsection{Max-Stable Envelope Classes}

We are now in position to define the tail properties that we assume for our envelopes. In particular, the smoothed envelope distributions we are interested in belong to some maximum domain of attraction (\textsc{mda}).

\begin{dfn}[\textsc{max-stable envelope classes}] \label{def:max-stable-envelope}
 The envelope class $\Lambda(f)$ with corresponding smoothed envelope distribution function $F$ is said to be a \emph{max-stable envelope class} if $F\in\textsc{mda}(\gamma )$ for some $\gamma \geq 0$. $F$ is said to be a \emph{Fr\'echet} (heavy-tailed) envelope if $\gamma>0$ and to be a \emph{Gumbel} (light-tailed) envelope if $\gamma=0$.
\end{dfn}

\citet{Bon11} and \citet{bontemps2012adaptiveb} considered a strict subset of the set of Gumbel envelopes. In this paper we consider such envelopes more generally, but more fundamentally, we also include the class of Fr\'echet envelopes. Fr\'echet envelopes generalize pure power-law envelopes investigated in
\citep{boucheron:garivier:gassiat:2006}. Indeed, assuming that the class admits a Fr\'echet envelope is equivalent to assuming that the smoothed envelope distribution $F$ is such that
$\overline{F}$ is regularly varying with index $-1/\gamma$ ($\overline{F}\in \textsc{rv}(-1/\gamma)$):
$$
\forall y\leq 1, \quad\lim_{t\rightarrow +\infty} \frac{\overline{F}(ty)}{\overline{F}(t)}=y^{-1/\gamma}\, .
$$
This amounts to there existing a slowly varying function $L$, that is $L\in \textsc{rv}(0)$ (see Appendix \ref{app:regular-variation}), such that $\overline{F}(x)=x^{-1/\gamma} L(x)$.

The max-stability assumption in the definition of this class of sources is instrumental in both the derivation of the minimax redundancy lower bound and the derivation of the upper bound on the redundancy of the \textsc{etac}-code.


\section{The ETAC-code} \label{sec:etac-code}

To motivate the construction of the new code, we recall the following theorem from \citep*{boucheron:garivier:gassiat:2006}, which provides an upper-bound on the minimax redundancy of envelope classes and suggests a general design principle for adaptive coding over a collection of envelope classes.
\begin{thm}[\textsc{minimax redundancy upper bound}]
\label{prop:upperbound}
If $\Lambda(f)$ is an envelope class of memoryless sources, with the tail envelope function $\overline{F}$
then:
\begin{displaymath}
R^+(\Lambda_n) \leq \inf_{u : u\leq n} \, \left[ n
 \overline{F}(u)\log e + \frac{u-1}{2}\log n
 \right] + 2\, .
\end{displaymath}
\end{thm}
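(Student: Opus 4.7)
The plan is to exhibit, for every threshold $u\leq n$, an explicit coding distribution $Q_n^u$ whose maximal redundancy over $\Lambda(f)$ is bounded by the two terms inside the infimum, and then minimize over $u$. The construction separates each symbol into a ``head'' part (values $\leq u$) and a ``tail'' part (values $> u$). Concretely, let $\tilde{x}_i = x_i$ if $x_i \leq u$ and $\tilde{x}_i = u+1$ otherwise, and set
\[
Q_n^u(x_{1:n}) \;=\; Q_{\KT}(\tilde{x}_{1:n}) \cdot \prod_{i\,:\,x_i > u} g(x_i),
\]
where $Q_{\KT}$ is a Krichevsky--Trofimov mixture on the finite censored alphabet consisting of $\{1,\ldots,u\}$ together with one sentinel for ``symbol $> u$'', and $g$ is the renormalized tail of the envelope, $g(j) = f(j)/\overline{F}(u)$ for $j>u$. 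This $g$ is a bona fide probability on $\{u+1,u+2,\ldots\}$ by Definition \ref{def:env:distribution}, at least for $u$ large enough that $\overline{F}(u) < 1$ (the stated bound is trivial for smaller $u$, provided one chooses any sufficiently large $u$).

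First, the source factorizes in a matching way. Letting $\tilde{\PROB}_n$ denote the law of the censored sequence under $\PROB$, one has
\[
\PROB_n(x_{1:n}) \;=\; \tilde{\PROB}_n(\tilde{x}_{1:n}) \prod_{i\,:\,x_i>u}\!\frac{\PROB_1(x_i)}{\PROB_1(X_1>u)}.
\]
Taking the log-ratio against $Q_n^u$ and averaging under $\PROB_n$ yields the clean decomposition
\[
D(\PROB_n,Q_n^u) \;=\; D(\tilde{\PROB}_n,Q_{\KT}) \;+\; n\,\PROB_1(X_1>u)\cdot D\!\bigl(\PROB_1(\cdot\mid X_1>u)\,\big\|\,g\bigr).
\]

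Next, I bound each of the two terms. The head term is the redundancy of a Krichevsky--Trofimov mixture for a memoryless source on a finite alphabet, which is the classical result $D(\tilde{\PROB}_n,Q_{\KT}) \leq \frac{u-1}{2}\log n + O(1)$, uniform in $\PROB\in\Lambda(f)$; the residual constant will be absorbed in the final $+2$ of the theorem. The tail term is where the envelope hypothesis enters. Plugging in $g(j) = f(j)/\overline{F}(u)$ and using pointwise $\PROB_1(j) \leq f(j)$, the inner KL telescopes to at most $\log\bigl[\overline{F}(u)/\PROB_1(X_1>u)\bigr]$ (the term involving $\log[\PROB_1(j)/f(j)]$ being non-positive). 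Multiplying by the prefactor and applying the elementary inequality $\log x \leq (x-1)\log e$ with $x = \overline{F}(u)/\PROB_1(X_1>u)$ gives
\[
\PROB_1(X_1>u)\,\log\!\frac{\overline{F}(u)}{\PROB_1(X_1>u)} \;\leq\; \bigl(\overline{F}(u)-\PROB_1(X_1>u)\bigr)\log e \;\leq\; \overline{F}(u)\log e,
\]
so the tail contribution is at most $n\,\overline{F}(u)\log e$.

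Adding the two bounds yields $R^+(Q_n^u,\Lambda_n) \leq n\overline{F}(u)\log e + \frac{u-1}{2}\log n + 2$, uniformly in $\PROB\in\Lambda(f)$. Since $R^+(\Lambda_n)$ is by definition the infimum of $R^+(Q_n,\Lambda_n)$ over coding distributions $Q_n$, first taking the infimum over $u\leq n$ of this family of upper bounds proves the statement. The main delicate step is the tail bound: the choice $g(j) = f(j)/\overline{F}(u)$ is exactly what converts the envelope condition $\PROB_1 \leq f$ into the $\log e$ constant, since a naive encoder (uniform on a large range or Elias-type) would cost an extra $\log u$ or more per overflow symbol and destroy the trade-off between the two terms on the right-hand side. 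The remaining ingredients --- KT redundancy on a finite alphabet, and the $\log x \leq (x-1)\log e$ inequality --- are entirely classical.
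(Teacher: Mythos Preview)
Your approach is the standard one and matches the construction that the paper itself sketches informally just after the theorem statement (censor at $u$, Krichevsky--Trofimov on the small symbols, arithmetic coding under the renormalized envelope tail for the large ones). Note, however, that the paper does \emph{not} prove this theorem: it is explicitly recalled from \citet{boucheron:garivier:gassiat:2006}, so there is no ``paper's own proof'' to compare against here beyond that sketch.

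One point deserves care. Your censored alphabet $\{1,\ldots,u\}\cup\{\text{sentinel}\}$ has $u+1$ symbols, so the classical Krichevsky--Trofimov redundancy bound is $\tfrac{u}{2}\log n + O(1)$, not $\tfrac{u-1}{2}\log n + O(1)$ as you wrote. The missing $\tfrac{1}{2}\log n$ is not a constant and therefore cannot be ``absorbed into the final $+2$''. The decomposition and the tail bound (where you exploit $\PROB_1\leq f$ together with $\log x\leq (x-1)\log e$) are both clean and correct; the only slack is this off-by-one in the head alphabet size, which either reflects a slightly different indexing convention in the original reference or requires a small refinement of the construction (e.g.\ reserving the fixed mass $\overline{F}(u)$ for the escape event rather than treating it as a free KT symbol) to recover the exact constant stated.
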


\citet*{boucheron:garivier:gassiat:2006} also describe settings where this redundancy upper bound is matched by a corresponding lower bound (possibly within a factor of $\log n$). According to Theorem \ref{prop:upperbound}, a threshold $u_n$ should be chosen so as to balance the two terms in the upper bound to have the same growth rate. In particular, the rule of thumb that is evident is to choose $u_n$ such that both of these terms are equal:  $n \overline{F}(u_n)\log e \approx (u_n-1)/2~ \log n $. In the ideal scenario where the envelope distribution is known, this rule of thumb may be combined with known techniques to obtain a code achieving the redundancy upper bound described by Theorem \ref{prop:upperbound}. Namely, these techniques consist of arithmetic coding under the envelope distribution in order to encode symbols larger than the threshold $u_n$, and to encode the sequence of symbols smaller than the threshold $u_n$ using a Krichevsky-Trofimov mixture for alphabet $\{0,1,\ldots, u_n\}$ \citep*[see][for details]{boucheron:garivier:gassiat:2006}.

When the envelope distribution is not known and we strive for adaptivity, one is tempted to replace it with an empirical counterpart. Interestingly, the \textsc{ac}-code, which does operate without knowledge of the envelope, \emph{does not} choose the threshold suggested by Theorem \ref{prop:upperbound} and uses instead the maximum. The \textsc{etac}-code which we propose here does get closer to this principle: by dropping the constants and the $\log n$ term, we choose a threshold $m_{n}$ such that $n\overline{F}(m_n)\approx m_{n}$. Since such a threshold obviously depends on the source class, we construct a threshold from the data to empirically mimic $m_{n}$.


\subsection{Construction of the \textsc{etac}-Code}

The  \textsc{etac}  encoder (Algorithm \ref{etac:encoder}) uses an arithmetic encoder \citep{rissanen:langdon:1984} and a penultimate Elias encoder \citep{MR0373753} as subroutines.
Its input is a message $x_{1:n}$, that is,  a string of positive integers. First $0$  is appended at the end of  the message.
 Then the   \textsc{etac} encoder 
scans the message by iterating over indices $1,\ldots,n+1$ (lines \ref{al:startiter}--\ref{al:enditer}).

Throughout the iterations, the algorithm maintains a priority queue $PQ$ (a simple binary heap \citep{cormen:leiserson:rivest:stein:2001} is enough).
At iteration corresponding to index $i \in \{1,\ldots,n+1\}$, the priority queue represents the censorship set $\mathcal{C}_i\subset \mathcal{X}$ to be specified, but which only depends on the past symbols from $x_{1:i-1}$, and not on the entire sequence. The current censorship set $\mathcal{C}_i$ consists of symbols not smaller than a threshold $\tau$ . 

\begin{algorithm}[H]
\caption{ETAC encoder}\label{etac:encoder}
\begin{algorithmic}[1]
  \REQUIRE  $x_{1:n}$,   a sequence of positive integers
  \STATE Append $0$ at the end of message $x_{1:n}$
  \STATE \text{Initialize priority queue} $PQ \leftarrow \{ x_1\}$ and threshold $\tau \leftarrow x_1$
  \STATE \text{Initialize counters} (empty dictionary)
  \STATE \text{Initialize the arithmetic encoder using counters}
  \FOR{ $i \in 1,\ldots,\text{length}(x_{1:n}0)$ }\label{al:startiter}
    \STATE $j \leftarrow x_i$
    \IF{$0 < j \leq  \tau$}
      \STATE {feed arithmetic encoder with } $j$
      \STATE emit the output of arithmetic encoder if any
      \STATE $n^j \leftarrow  n^j+1 $   
    \ELSE 
        \STATE {feed arithmetic encoder with} $0$ \\
        \COMMENT{ \emph{this forces the arithmetic encoder to output the whole encoding of the current substring} }
        \STATE emit the  output of the arithmetic encoder
        \STATE {feed  the Elias encoder  with} $\max(1,j-\tau+1)$
        \STATE emit the output of the Elias encoder
        \STATE update or initialize $n^j$
        \STATE \text{update} $PQ$ and $\tau$   
    \ENDIF
  \ENDFOR  \label{al:enditer}
\end{algorithmic}
\end{algorithm}

If the current  symbol $x_{i} > \tau$, or if $x_{i}=0$, the arithmetic encoder is fed with a $0$ which acts as 
a terminating symbol. This forces the arithmetic encoder to output the total encoding of the portion of the message 
that followed the previously censored symbol. This arithmetic encoding is then emitted by the \textsc{etac} encoder.
 Then $y := \max(x_{i}-\tau+1,1)$ is fed to the Elias penultimate encoder~\citep{MR0373753}. The latter delivers a self-delimited binary encoding of $y$ using $2 \ell(\ell(y))+\ell(y) $ bits where $\ell(z)=\lfloor \log_2(\max(z,1))\rfloor +1$  is the length of the binary encoding of integer $z$.
 Then the \textsc{etac} encoder emits the Elias encoding. If $x_{i}=0$ and only in this case, the input to the Elias encoder is $1$,  
this signals the  end of the message. The queue, the threshold and counters are updated. The censored symbol is inserted in the queue.  If the second smallest symbol in the queue is strictly smaller than the queue size, then the smallest element 
of the queue is popped. Note that there is no need for iterating the popping process as the size of the priority queue is non-decreasing (in the sequel, the random  size of $PQ$ after scanning $n$ symbols is denoted by $M_n$, this random  variable is formally defined by Equation \ref{eq:Mn}, its properties, including monotonicty are discussed afterwards). 

The number of elementary operations required by queue maintenance is proportional to the logarithm of the size of the queue.
As the expected total number of symbols inserted in the queue is sub-linear, the total expected computational cost of the queue maintenance is sub-linear.   

After reading the $i^{th}$ symbol from the message, the alphabet used by the arithmetic encoder is $0, \ldots, \tau$, the state of the arithmetic encoder is a function of the counts $n^0_i =0, n^1_i, \ldots n^{\tau}_i$. Counts may be handled using map or dictionary data structures provided by modern programming languages.

From a bird-eye viewpoint, the scanning process creates a \emph{censored} sequence $\widetilde{x}_{1:n}$ such that every symbol that happens to be in a censorship set is replaced by the special $0$ symbol:
$$
  \widetilde{x}_i = x_i \times \mathbf{1}\{x_i \notin \mathcal{C}_i\}.
$$
The sequence of censored symbols defines a parsing of the message into substrings of
uncensored symbols that are terminated  by $0$. Each substring is encoded by the arithmetic encoder provided with incrementally 
updated sequences of probability vectors (thanks to the counters $n^j, j\leq \max_{k\leq i}(x_k)$). The \textsc{etac} encoder interleaves the outputs of the arithmetic  encoder and the outputs of the penultimate Elias encoder.
Encoding is performed in an incremental way, even though arithmetic coding may require buffering \citep{ShaZaFe06}.  

Let $N$ be the total number of redacted symbols, in the text $i_1,\ldots,i_n$ denote the sequence of indexes  of redacted symbols. Even though  the \textsc{etac} encoder does not produce explicitly these two strings, in the text we will call $C_{\textsc{m}}$ the concatenation of the arithmetic codewords corresponding to the encoding of  $\widetilde{x}_{1:n+1}$, and $C_{\textsc{e}}$ the concatenation of the codewords produced by the Elias encoding of the subsequence $x_{i_{1:N}}$ of redacted symbols.

By construction, the input $y_{1:n}$ of the \textsc{etac} decoder (Algorithm \ref{etac:decoder}) is a binary sequence that can be parsed into a unique sequence of self-delimited codewords originating alternatively from the arithmetic encoder and from the Elias encoder. The functioning of the decoder mirrors the functioning of the encoder. While scanning the input, each time it decodes a symbol, the decoder updates the appropriate counters and maintains a priority queue representing the current   censorship set. 
\begin{algorithm}[H]
\caption{ETAC decoder}
\begin{algorithmic}[1]
\label{etac:decoder}
\REQUIRE $y_{1:n}$ a binary string produced by Algorithm \ref{etac:encoder}
 \STATE $i\leftarrow 1$
 \STATE initialize counters
 \STATE initialize $PQ \leftarrow \varnothing, \tau \leftarrow 1$
 \STATE state $\leftarrow$ arithmetic 
 \WHILE{$i\leq n$}
 \IF{state = Elias}
     \STATE feed the Elias decoder with $y_i$
     \IF{$y_i$ terminates an Elias codeword}
        \STATE emit the output of the Elias decoder
        \STATE update counters, $\tau$ and $PQ$
        \STATE state $\leftarrow$ arithmetic
     \ENDIF 
\ELSE 
     \STATE feed the arithmetic decoder with $y_i$
      \IF{the arithmetic decoder outputs symbols}
        \STATE emit the output of the arithmetic decoder
        \STATE update counters, $\tau$ and $PQ$
      \ENDIF
      \IF{$y_i$ terminates an arithmetic codeword}
        \STATE state $\leftarrow$ Elias
      \ENDIF  
\ENDIF
 \ENDWHILE
 \end{algorithmic}
\end{algorithm}

In words, the current 
threshold $\tau$ is the smallest symbol in the priority queue.
The queue contains the $M$ largest symbols that have been scanned from the message so far (that is from $x_{1:i}$), and by construction either 
$ \tau = x_{M,i}\leq M $ or $M=i$  and $\tau =  x_{i,i} $.

We now give the details of the censorship set, the encoding of the censored sequence $\widetilde{x}_{1:n}$, and the encoding of the redacted symbols $x_{i_{1:N}}$. Our constructions use the \emph{order statistics} $x_{k,i}$, $k=1,\ldots,i$, defined for various values of $i\in\{1,\cdots,n\}$ as a non-increasing rearrangement of the symbols in the truncated sequence $x_{1:i}$:
$$
\min x_{1:i} = x_{i,i} \leq \cdots \leq x_{1,i} = \max x_{1:i}.
$$

The censorship sets $\mathcal{C}_i$ are constructed as follows. We do not censor for $i=1$, that is we have $\mathcal{C}_1=\varnothing$. Then, for every $i>1$, we censor as follows:
$$
\mathcal{C}_i=\left\{ j\in\N : j > M_{i-1} \right\},
$$
where the (empirical) \emph{threshold} sequence $(M_{i})_{i\in{1:n}}$, defined as
\begin{equation}\label{eq:Mn}
  M_i = \min \left( i, \left\{k\;:\;x_{k,i} \leq k \right\} \right),
\end{equation}
is a sequence of integers such that at each $i \colon 1\leq i\leq n$, $M_i$ can be computed from ${x}_{1:i}$, staying consistent with the past side-information hypothesis.

From the definition of the thresholds $M_i$, note the important fact that both these and the corresponding order statistics $x_{M_i,i}$ are non-decreasing. In fact, as we shall see, $M_i$ and $x_{M_i,i}$ are roughly equivalent and are the empirical version of the thresholds $m_n$ suggested by Theorem \ref{prop:upperbound} and which we subsequently define in Equation \eqref{eq:1}. Therefore we are effectively using as threshold the value $x_{M_{i-1},i-1}$, the $M_{i-1}$-th order statistic at step $i-1$, that is the $M_{i-1}$-th largest symbol among the first $i-1$ symbols. The name ``expanding threshold'' is used to contrast with the \textsc{ac}-code \citep{Bon11} which chooses the $1$-st order statistic (the maximum) as the threshold, whereas here the censure zone ``expands'' to higher order statistics (smaller than the maximum).

The censored sequence $\widetilde{x}_{1:n}$ is encoded into the string $C_{\textsc{m}}$ as follows. We start by appending an extra $0$ at the end of the original censored sequence, to signal the termination of the input. We therefore in fact encode $\widetilde{x}_{1:n}0$ into $C_{\textsc{m}}$. We do this by performing a progressive arithmetic coding \citep{rissanen:langdon:1984} using coding probabilities $Q^{n+1} (\widetilde{x}_{1:n}0)$ given by:
\begin{displaymath}
 \tilde{Q}^{n+1} (\widetilde{x}_{1:n}0) = \tilde{Q}_{n+1}(0\mid x_{1:n}) \prod_{i=0}^{n-1} \tilde{Q}_{i+1} (\widetilde{x}_{i+1}\mid x_{1:i}) \, ,
\end{displaymath}
where the predictive probabilities $\tilde{Q}_{i+1}$ are a variant of Krichevsky-Trofimov mixtures,
\begin{displaymath}
 \tilde{Q}_{i+1} \left(\widetilde{X}_{i+1} = j \mid X_{1:i}=x_{1:i} \right) = \frac{n^j_i+\tfrac{1}{2}}{i+\tfrac{M_{i}+1}{2}}\, .
\end{displaymath}
The $n_i^j$ notation refers to the number of occurrences of symbol $j$ among the first $i$ symbols (in $x_{1:i}$), with the convention that $n^0_i=0$ for all $i$. What these coding probabilities represent, in effect, is a mixture code consisting of progressively enlarging the alphabet based on the thresholds to include symbols $\{0,1,\cdots,M_i\}$, and feeding an arithmetic coder with Krichevsky-Trofimov mixtures over this growing alphabet. Thanks to $M_i$ being determined by the data, the enlargement of the alphabet is performed online and is driven by the order statistics of the symbols seen so far.

The subsequence $x_{i_{1:N}}$ of redacted symbols is encoded into the string $C_{\textsc{e}}$ as follows.  Instead of encoding the symbol values directly, we encode excesses over the thresholds, which are known under the past side-information hypothesis: for each $i\in i_{1:N}$, we encode $x_i-M_{i-1}+1$ using Elias penultimate coding \citep{MR0373753}, where the $+1$ is added to make sure these values are strictly greater than $1$. The extra $0$ initially appended to the message yields the only $1$ that is fed to the arithmetic encoder, it unambiguously signals to the decoder that the $0$ symbol decoded from $C_{\textsc{m}}$ is in fact the termination signal. This ensures that the overall code is instantaneously decodable, and that it therefore corresponds to an implicit coding probability $Q_n$.

\subsection{The Exact Thresholds}

Now that we have constructed the \textsc{etac}-code, let us revisit and compare with Theorem \ref{prop:upperbound}. Recall that the thresholding scheme suggested by this theorem uses as threshold an integer $u_n$ such that $n \overline{F}(u_n)\log e \approx (u_n-1)/2~ \log n$. How well does the \textsc{etac}-code heed this rule of thumb?

Since the encoding is done sequentially, the threshold is adjusted as-we-go, but let us focus primarily on the final threshold $M_n$ as defined by Equation \eqref{eq:Mn} for $i=n$, and the corresponding order statistic $X_{M_n,n}$. From this, it seems that the \textsc{etac}-code uses an apparently suboptimal threshold, by empirically defining $M_n$ as the smallest integer $k$ such that $X_{k,n}\leq k$. Yet, we can make two observations based on this: the first is that $M_n\approx X_{M_n,n}$ and the second is that $\overline{F}_n(X_{M_n,n})\approx X_{M_n,n}/n$, where $\overline{F}_n$ denotes the empirical cumulative distribution function.

We would therefore expect the behavior of both $M_n$ and $X_{M_n,n}$ to closely follow that of their ``exact'' counterpart, that is the value $m_n$ which gives $\overline{F}(m_n)\approx m_n/n$. Therefore the rule of thumb of Theorem \ref{prop:upperbound} is indeed followed up to the constant and logarithmic terms. To analyze the \textsc{etac}-code, it is thus important to study this exact threshold and understand its asymptotic properties. To make this more convenient, instead of using the envelope distribution and working with integer $n$, we can use the \emph{smoothed} envelope distribution $F$ and define for all positive real $t$:
\begin{equation} \label{eq:1}
m(t) = \{ x : \overline{F}(x)= x/t \} = \{ y : U(t/y) = y \}\, 
\end{equation}
and in particular:  $m_n \equiv m(n).$
Note that, by construction, we always have that $m_n$ is non-decreasing, bounded from above by $n$, and satisfies $\overline{F}(m_n)= m_n/n $ and $U(n/m_n) = m_n$.

Using this threshold, Theorem \ref{prop:upperbound} then gives us a minimax redundancy upper bound:
\begin{eqnarray} 
\notag R^+(\Lambda_n) &\leq& m_n \log e + \frac{m_n-1}{2} \log n +2\\
& \leq &   m_n \log n + 2 \qquad (\text{for } n\geq 8)\, \label{eq:minimax-upperbound}
\end{eqnarray}
when $\Lambda$ is a max-stable envelope class with corresponding $U \in \textsc{erv}(\gamma)$ with $\gamma\geq 0$.

Though $m(t)$ is defined in an implicit way, its most relevant properties can be established with little effort thanks to the notion of De Bruijn conjugacy (see Appendix \ref{app:regular-variation}), which plays an important role in the asymptotic inversion of regularly varying functions. The asymptotic behaviors of functions $m$ and $U$ (and therefore $F$) are connected by the following lemma. Namely, $m$ inherits the regular variation property of $U$, and the decomposition of $m$ and $U$ (as products of a slowly varying function and a power function) are related.

\begin{lem}[\textsc{properties of the exact threshold}] \label{lem:implict:func}
Assume that $U : [1,\infty)\longrightarrow \mathbb{R}_+$ is increasing to infinity, is continuously differentiable and that $U \in \textsc{rv}(\gamma)$ with $\gamma\geq 0$ (satisfied under Definition \ref{dfn:smoothed}). Let $m \colon (U(1), \infty) \rightarrow \mathbb{R}_+$ be defined as in Equation \eqref{eq:1}. Then $m$ satisfies:
\begin{enumerate}[(i)]
\item $m$ is well-defined and increasing;
\item $m$ is continuously differentiable;
\item $m(t)\longrightarrow \infty$ and $t/m(t)\longrightarrow \infty$ as $t \longrightarrow \infty$;
\item $m$ is regularly varying with index $\gamma/(\gamma+1)$ $(m\in \textsc{rv}\left(\gamma/(1+\gamma)\right))$. Moreover, if $U(t)=t^\gamma L(t)$ where $L$ is slowly varying, then letting $L_1(t)= L(t^{1/(1+\gamma)})$ and $L_1^*$ be a De Bruijn conjugate of $L_1$,
\begin{displaymath}
 m(t) \sim t^{\gamma/(\gamma+1)} L_1^*(t)^{-1/(1+\gamma)} \text{ as } t \to \infty\, .
\end{displaymath}
\end{enumerate}
\end{lem}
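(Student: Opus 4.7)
The plan is to recast the defining equation $U(t/y) = y$ as an inversion problem. Setting $s = t/y$ turns it into $s\,U(s) = t$, so that $m(t) = t/s(t)$ is determined by inverting $\phi(s) := s\,U(s)$. Under Definition \ref{dfn:smoothed}(ii), $U$ is continuously differentiable with positive derivative, hence $\phi$ is $C^1$ and strictly increasing on $(1,\infty)$ with $\phi(1) = U(1)$ and $\phi(s)\to\infty$. The inverse function theorem then delivers a $C^1$ inverse $s:(U(1),\infty)\to(1,\infty)$, so $m(t)=t/s(t)$ is well-defined and continuously differentiable, settling (i) and (ii). Monotonicity of $m$ follows from the equivalent identity $m^{-1}(y) = y\,U^{-1}(y)$, which is a product of two positive increasing factors. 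For (iii), $\phi(s(t)) = t \to \infty$ forces $s(t) \to \infty$, so $t/m(t) = s(t) \to \infty$, and $m(t) = U(s(t)) \to \infty$ since $U\to\infty$.

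For (iv), I would write $U(t) = t^\gamma L(t)$ with $L$ slowly varying, so that $\phi(s) = s^{\gamma+1} L(s) \in \textsc{rv}(\gamma+1)$. Standard results on asymptotic inversion of regularly varying functions (recalled in Appendix \ref{app:regular-variation}) then imply $s \in \textsc{rv}(1/(\gamma+1))$, hence $m(t) = t/s(t) \in \textsc{rv}(\gamma/(\gamma+1))$. To identify the slowly varying factor explicitly via De Bruijn conjugacy, I would substitute the ansatz $m(t) = t^{\gamma/(\gamma+1)} \lambda(t)$ with $\lambda$ slowly varying into the equation $m^{\gamma+1} = t^{\gamma} L(t/m)$. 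This yields the self-consistency relation
$$
\lambda(t)^{\gamma+1} \;=\; L\!\left(t^{1/(\gamma+1)} \lambda(t)^{-1}\right),
$$
which, upon the substitutions $\mu(t) := \lambda(t)^{-(\gamma+1)}$ and $L_1(u) := L(u^{1/(\gamma+1)})$, becomes exactly $L_1(t\,\mu(t))\,\mu(t) = 1$. This is the defining relation for the De Bruijn conjugate, so $\mu(t) \sim L_1^*(t)$ uniquely up to asymptotic equivalence, and therefore $\lambda(t) \sim L_1^*(t)^{-1/(\gamma+1)}$, yielding the claimed asymptotic.

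The main obstacle will be making the self-consistency step of (iv) fully rigorous. Although substituting the ansatz algebraically produces the De Bruijn equation, one must ensure that $L$ evaluated at $t^{1/(\gamma+1)}\lambda(t)^{-1}$ is indeed asymptotic to $L(t^{1/(\gamma+1)})$ modulated by the correct slowly varying correction; because $\lambda(t)$ may drift to $0$ or $\infty$, the pointwise slow variation property is not enough, and one needs the Uniform Convergence Theorem together with Potter-type bounds to transfer slow variation across the shifted argument. The Gumbel case $\gamma=0$ will require separate bookkeeping: there $\gamma/(\gamma+1) = 0$, $L_1 = L$, and the formula collapses to $m(t) \sim L^*(t)^{-1}$, which must be reconciled with the degenerate factor $t^{\gamma/(\gamma+1)} = 1$; I expect this to fall out of the same argument but to demand an explicit check.
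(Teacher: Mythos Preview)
Your proposal is correct and follows essentially the same route as the paper's proof: the paper uses the implicit function theorem on $(t,x)\mapsto U(t/x)-x$ for (i)--(ii), contradiction arguments for (iii), and for (iv) rewrites the defining equation as $1 = \frac{t^\gamma}{m(t)^{1+\gamma}}\,L_1\!\left(t\cdot \frac{t^\gamma}{m(t)^{1+\gamma}}\right)$, then recognizes $\mu(t)=t^\gamma/m(t)^{1+\gamma}$ as satisfying the De Bruijn relation for $L_1$ exactly. Your reparametrization via $\phi(s)=sU(s)$ and its $C^1$ inverse is a tidy variant of the same idea and has the bonus of giving $m\in\textsc{rv}(\gamma/(\gamma+1))$ directly from $\phi\in\textsc{rv}(\gamma+1)$.

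Your worry about (iv) is largely unnecessary: you do not need an ansatz or Potter bounds. Since you have already established $m\in\textsc{rv}(\gamma/(\gamma+1))$, the function $\mu(t)=t^\gamma/m(t)^{\gamma+1}$ is automatically slowly varying, and your identity $\mu(t)\,L_1(t\mu(t))=1$ holds \emph{exactly}, not merely asymptotically. The uniqueness clause of the De Bruijn conjugacy theorem then gives $\mu\sim L_1^*$ immediately, with no uniform-convergence argument required; the paper proceeds in exactly this way (and the $\gamma=0$ case needs no separate treatment).
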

The proof is given in Appendix \ref{app:proofs}.

As for the empirical thresholds $M_n$, or equivalently the corresponding order statistic $X_{M_n,n}$, these are random variables that prove well concentrated around their mean or median values. In particular, if the source is close to the envelope distribution, and if the latter belongs to a max-domain of attraction, the mean value of $M_n$ is close to $m_n$, and is a regularly varying function that reflects the tail behavior of the envelope distribution. These results are presented in Appendix \ref{app:threshold-properties}, and used in the analysis of the \textsc{etac}-code in Section \ref{sec:etac-analysis}.

\section{Main Results} \label{sec:mainresult}

We now give the main contributing result of the paper, which is the near-adaptivity of the \textsc{etac}-code on the collection of Fr\'echet (heavy-tailed) max-stable envelope source classes. The components of this result are then presented in detail in the rest of the paper, in terms of the minimax redundancy lower bound in Section \ref{sec:minimax} and the analysis of the \textsc{etac}-code in Section \ref{sec:etac-analysis}. We also give a somewhat stronger adaptivity of the \textsc{etac}-code, when restricted to a sub-class of Gumbel (light-tailed) max-stable envelopes. In particular, the overhead is not logarithmic in the sequence length, but logarithmic in the minimax redundancy, which grows much slower (roughly $\log\log n$). Within this setting, this shows that there is no major loss in switching to the new code, which does not explicitly make the light-tailed assumption, from the \textsc{ac}-code of \citep{bontemps2012adaptiveb}, which does.

\subsection{Near-adaptivity to Fr\'echet max-stable envelope source classes}

Our main result can be stated as:
\begin{thm}[\textsc{frechet near-adaptivity of the etac code}] \label{th:main}
Let $Q_n$ denote the coding probability defined by the \textsc{etac}-code, let $\Lambda$ be a Fr\'echet max-stable envelope class with ultimately non-increasing envelope and with corresponding exact threshold sequence $(m_n)_{n\in\N_+}$. We then have that there exists a constant $\kappa_\Lambda$ (that may depend on $\Lambda$) such that:
\begin{eqnarray*}
  (\kappa_\Lambda+o_\Lambda(1)) {m_n} &\leq& R^+(\Lambda_n) \leq  R^+(Q_n,\Lambda_n) \\
&\leq &(5/2+o_{\Lambda}(1)) m_n \log n.
\end{eqnarray*}
In particular, the \textsc{etac}-code is asymptotically near-adaptive (cf. Equation \eqref{eq:near-adaptive}):
$$
  R^+(Q_n,\Lambda_n) \leq (5/2\kappa_\Lambda +o_\Lambda(1))\log n ~ R^+(\Lambda_n).
$$
\end{thm}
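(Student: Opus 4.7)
The plan is to assemble Theorem \ref{th:main} from two independent results that are stated and proved in subsequent sections: a minimax redundancy lower bound $R^+(\Lambda_n) \geq (\kappa_\Lambda + o_\Lambda(1))\,m_n$ established in Section \ref{sec:minimax}, and an upper bound $R^+(Q_n, \Lambda_n) \leq (5/2 + o_\Lambda(1))\,m_n \log n$ on the maximal redundancy of the \textsc{etac}-code established in Section \ref{sec:etac-analysis}. The middle inequality $R^+(\Lambda_n) \leq R^+(Q_n, \Lambda_n)$ is just the definition of the minimax redundancy, and the near-adaptivity conclusion then follows by dividing the upper bound by the lower bound, yielding the factor $5/(2\kappa_\Lambda) + o_\Lambda(1)$ multiplying $\log n$.

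For the lower bound I would exhibit a combinatorially rich sub-family of sources inside $\Lambda(f)$ by reshuffling envelope mass among symbols beyond the exact threshold $m_n$. Since by construction $\overline{F}(m_n) = m_n/n$ and $\overline{F}$ is regularly varying with index $-1/\gamma$ under the Fr\'echet assumption, a sample of size $n$ carries on the order of $m_n$ symbols larger than $m_n$ in expectation. Applying the Bayes-risk lower bound with a uniform prior over the shuffled sub-family forces any coding distribution to incur a constant amount of per-tail-symbol redundancy, producing the $\kappa_\Lambda m_n$ bound; the regular variation of $\overline{F}$ ensures that $\kappa_\Lambda$ depends only on $\gamma$ and on the slowly varying factor of $\overline{F}$.

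For the upper bound, the \textsc{etac} codelength decomposes as $|C_{\textsc{m}}| + |C_{\textsc{e}}| + O(1)$, and the implicit coding probability $Q_n$ factorises accordingly. The arithmetic part equals, up to $O(1)$, the quantity $-\log \tilde{Q}^{n+1}(\widetilde{X}_{1:n}0)$, where $\tilde Q^{n+1}$ is the variable-alphabet Krichevsky--Trofimov-type mixture defined in Section \ref{sec:etac-code}; a path-wise calculation analogous to the \textsc{ac}-code analysis of \cite{bontemps2012adaptiveb}, adapted to the fact that $M_n$ now expands beyond the running maximum, bounds its expected redundancy contribution by $(\EXP[M_n]/2)\log n + O(\EXP[M_n])$. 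The Elias part is at most $\sum_{i \in i_{1:N}} \bigl(\ell(X_i - M_{i-1}+1) + 2\ell(\ell(X_i - M_{i-1}+1))\bigr) = O(N_n \log X_{1,n})$; under the Fr\'echet assumption the regular variation of $U$ gives $\EXP[N_n] = \Theta(m_n)$ and $\EXP[\log X_{1,n}] = \Theta(\log n)$, so this term is $O(m_n \log n)$. Summing the two contributions, subtracting $H(\PROB_n)$, and replacing the random thresholds $M_n$ and $X_{M_n,n}$ by $m_n$ via the concentration results of Appendix \ref{app:threshold-properties}, yields the announced $(5/2 + o_\Lambda(1))\,m_n\log n$ constant.

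The principal obstacle is the arithmetic-coding term. Because the effective alphabet size $M_i + 1$ is itself a data-driven random process, classical fixed-alphabet Krichevsky--Trofimov bounds do not apply directly. The key technical step is a path-wise telescoping of $\log \tilde Q^{n+1}$ along the trajectory of $(M_i)$, exploiting that the special $0$-symbol is fed to the arithmetic encoder only at censorship instants (so the effective mixture sample size is $n - N_n$) and that the normalisation $i + (M_i+1)/2$ is non-decreasing in $i$. One then conditions on $M_n$, applies a growing-alphabet KT estimate, and integrates using the tail concentration of $M_n - m_n$; the Fr\'echet regular variation of $\overline{F}$ is crucial here in order to control both these tails and the expected Elias overhead uniformly over $\PROB \in \Lambda$.
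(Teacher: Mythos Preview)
Your high-level assembly is correct: lower bound from Section~\ref{sec:minimax}, upper bound from Section~\ref{sec:etac-analysis}, middle inequality by definition, near-adaptivity by division. Your sketch of the lower bound is also essentially the paper's construction. But you have inverted the relative difficulty of the two pieces of the upper bound, and your treatment of the Elias term is too coarse to deliver the announced constant.

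The mixture term $C_{\textsc{m}}$ is in fact the easy part. The paper (Equation~\eqref{eq:mixture-code}) obtains the pathwise inequality
\[
\ell(C_{\textsc{m}}) + \log \PROB_n(X_{1:n}) \;\leq\; \tfrac{M_n+1}{2}\log n + 2
\]
in one line, by comparing $\tilde Q^{n+1}$ to the fixed-alphabet Krichevsky--Trofimov mixture $\KT_{M_n+1}$ on $\{0,\dots,M_n\}$: since $M_i$ is non-decreasing, the predictive denominators $i + (M_i+1)/2$ are all at most those of $\KT_{M_n+1}$, so $-\log \tilde Q^{n+1} \leq -\log \KT_{M_n+1}$ pathwise. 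Taking expectations and invoking Lemma~\ref{lem:trick:elizabeth} finishes. No telescoping along the trajectory of $(M_i)$ is needed, and there is no subtlety with the random alphabet once you freeze it at its final size $M_n$.

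The real work is the Elias term $C_{\textsc{e}}$, and here your bound $O(N_n\log X_{1,n})$ with $\EXP[N_n]=\Theta(m_n)$, $\EXP[\log X_{1,n}]=\Theta(\log n)$ has two problems. First, $N_n$ and $X_{1,n}$ are positively correlated, so $\EXP[N_n\log X_{1,n}]$ is not controlled by the product of expectations without an extra argument. Second, even if you repair this (say by bounding $\log X_{1,n}$ with high probability by $(\gamma+\epsilon)\log n$), the resulting constant is $2\,c_\gamma\,\gamma$ for some $c_\gamma$ coming from $\EXP[N_n]/m_n$, not the clean $2$ needed for the $5/2$ total. The paper instead bounds $\EXP[\ell(C_{\textsc{e}})]$ term by term via $\Sigma(u)=\EXP[\IND\{X>u\}\log(1+X-u)]$, writes $\Sigma$ as an integral (Lemma~\ref{lem:pointwise-bounds}), controls $\EXP[\overline{G}(M_i)]$ by $\EXP[M_{i+1}]/(i+1)$ via a symmetrisation trick (Lemma~\ref{lem:mean}), sums to obtain $\sum_i m_i\log m_i/i$, and finally invokes Karamata's theorem to collapse this to $(1+o(1))\,m_n\log n$ in the Fr\'echet case. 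The factor $2$ is precisely the Elias overhead, and all the regular-variation constants cancel; this cancellation is what your cruder decomposition misses.
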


We provide here a guideline proof of this theorem, which relies on components presented in Sections \ref{sec:minimax} and \ref{sec:etac-analysis}, as well as their details in the appendices.
\begin{proof}
The lower bound on the minimax redundancy is given by Theorem \ref{thm:minimax} of Section \ref{sec:minimax}. To sketch its proof, note that we first use a maximin Bayes redundancy approach to prove that the minimax redundancy is lower bounded by the number $K_{n}$ of distinct symbols that appear in the sequence $X_{1:n}$. We then show that for sources belonging to max-stable envelope classes, $K_n$ and $M_n$ (and therefore $m_n$, by Appendix \ref{app:threshold-properties}) are asymptotically within constant factors of each other.

The upper bound on the redundancy of the \textsc{etac}-code follows from the results of Section \ref{sec:etac-analysis}, where we separately analyze the codelengths of the arithmetically mixture-encoded censored sequence $C_{\textsc{m}}$ and the Elias-encoded individual redacted symbols $C_{\textsc{e}}$. We show here how to combine this analysis to provide the upper bound.

Assume that we are dealing with a particular source from a max-stable envelope class $\Lambda$. For the mixture-encoded censored sequence, Equation \eqref{eq:mixture-code} bounds the difference between the expected length of $C_{\textsc{m}}$ and the optimal codelength given by the Shannon entropy:
\begin{eqnarray*}
  \EXP \left[ \ell (C_{\textsc{m}}) +\log \PROB_n(X_{1:n})\right]
      &\leq & (1/2+o_\Lambda(1))~m_n \log n \, .
\end{eqnarray*}

Meanwhile, Lemma \ref{lem:elias-code} gives a final breakdown of the expected length of $C_{\textsc{e}}$. In particular, since we are considering only Fr\'echet distributions, we can use Equation \eqref{eq:elias-bound-frechet}:
\begin{eqnarray*}
  \EXP [\ell(C_{\textsc{e}})]
    &\leq& (2+o_{\Lambda}(1)) ~ m_n \log n \, .
\end{eqnarray*}

Merging the two upper bounds leads to:
$$
   \EXP \left[ \ell(C_{\textsc{e}})+ \ell (C_{\textsc{m}}) +\log \PROB_n(X_{1:n}) \right]\leq (5/2+o_{\Lambda}(1)) ~ m_n\log n \, ,
$$
and since this bound does not depend on the particular source but only on the source class, we can take a supremum of the left-hand side over the entire class to obtain an upper bound on $R^+(Q_n,\Lambda_n)$.
\end{proof}

As a straightforward consequence of Theorem \ref{th:main} and point (iv) of Lemma \ref{lem:implict:func}, we have exact rates of growth of the lower bound on the minimax redundancy and the redundancy of the \textsc{etac}-code, in terms of the regular variation properties of the envelope:
\begin{cor}
Let $Q_n$ denote the coding probability defined by the \textsc{etac}-code, let $\Lambda$ be a Fr\'echet max-stable envelope class with ultimately non-increasing envelope
and  with smoothed envelope distribution in the maximum domain of attraction
$\textsc{mda}(\gamma )$ for some $\gamma > 0$. Then there exist a slowly varying function $L_\Lambda$ and a constant $\kappa_\Lambda$ (both depending on the envelope that defines $\Lambda$), such that:
\begin{eqnarray*}
\kappa_\Lambda ~ L_\Lambda(n) ~ n^{\gamma/(\gamma +1)} &\leq & R^+(\Lambda_n) \\
&\leq & R^+(Q_n,\Lambda_n) \\
&\leq & (5/2+o_{\Lambda}(1))\log n~ L_\Lambda(n) ~ n^{\gamma/(\gamma +1)} .
\end{eqnarray*}
\end{cor}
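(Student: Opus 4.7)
This corollary is a direct packaging of Theorem~\ref{th:main} together with the asymptotic description of the exact threshold sequence $(m_n)$ furnished by clause (iv) of Lemma~\ref{lem:implict:func}, so the proof will be essentially a substitution argument. My plan is to first rewrite $m_n$ in the explicit form $m_n \sim L_\Lambda(n)\, n^{\gamma/(\gamma+1)}$ for an appropriate slowly varying $L_\Lambda$, then to plug this equivalence into both the lower and upper bounds of Theorem~\ref{th:main}, and finally to absorb the resulting $o_\Lambda(1)$ factors into the stated constants and the $o_\Lambda(1)$ term.

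Concretely, since $F \in \textsc{mda}(\gamma)$ with $\gamma>0$, the Fisher--Tippett--Gnedenko/de~Haan machinery embodied in Theorem~\ref{theo:fund:evt} gives $U \in \textsc{rv}(\gamma)$, so that one may write $U(t) = t^{\gamma} L(t)$ for some slowly varying $L$, as assumed in the hypotheses of Lemma~\ref{lem:implict:func}(iv). Applying that clause, with $L_1(t) = L(t^{1/(1+\gamma)})$ and $L_1^*$ a De~Bruijn conjugate of $L_1$, I obtain
\begin{displaymath}
  m_n \;\sim\; n^{\gamma/(\gamma+1)}\, L_1^*(n)^{-1/(1+\gamma)} \quad \text{as } n\to\infty.
\end{displaymath}
I will then define $L_\Lambda(n) := L_1^*(n)^{-1/(1+\gamma)}$; this is slowly varying because a positive power of a slowly varying function is slowly varying, and the De~Bruijn conjugate of a slowly varying function is itself slowly varying (see the properties recalled in Appendix~\ref{app:regular-variation}). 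With this notation, $m_n/\bigl(L_\Lambda(n)\, n^{\gamma/(\gamma+1)}\bigr) \to 1$ as $n \to \infty$.

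Substitution into Theorem~\ref{th:main} then yields both bounds. For the lower bound,
\begin{displaymath}
  R^+(\Lambda_n) \;\geq\; (\kappa_\Lambda + o_\Lambda(1))\, m_n \;=\; (\kappa_\Lambda + o_\Lambda(1))(1+o(1))\, L_\Lambda(n)\, n^{\gamma/(\gamma+1)},
\end{displaymath}
which, by replacing $\kappa_\Lambda$ with a slightly smaller positive constant (still denoted $\kappa_\Lambda$, which is permissible because the theorem statement only fixes the existence of such a constant), absorbs the $o_\Lambda(1)$ factor and gives the advertised lower bound $\kappa_\Lambda L_\Lambda(n) n^{\gamma/(\gamma+1)}$. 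For the upper bound, the identical substitution into the inequality $R^+(Q_n,\Lambda_n) \leq (5/2 + o_\Lambda(1))\, m_n \log n$ gives
\begin{displaymath}
  R^+(Q_n,\Lambda_n) \;\leq\; (5/2 + o_\Lambda(1))\, \log n \, L_\Lambda(n)\, n^{\gamma/(\gamma+1)},
\end{displaymath}
the $(1+o(1))$ from the equivalence $m_n \sim L_\Lambda(n) n^{\gamma/(\gamma+1)}$ being absorbed into the $o_\Lambda(1)$ term. The middle inequality $R^+(\Lambda_n) \leq R^+(Q_n,\Lambda_n)$ is tautological.

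There is essentially no obstacle here: the nontrivial work has already been done in Theorem~\ref{th:main} and Lemma~\ref{lem:implict:func}(iv). The only minor subtlety I anticipate is the bookkeeping around the slowly varying function $L_\Lambda$: one must verify that $L_1^*(n)^{-1/(1+\gamma)}$ remains slowly varying (routine, by closure of $\textsc{rv}(0)$ under positive powers and under De~Bruijn conjugation) and that the $(1+o(1))$ from the asymptotic equivalence $m_n \sim L_\Lambda(n)n^{\gamma/(\gamma+1)}$ is compatible with the multiplicative $o_\Lambda(1)$ structure present in the bounds of Theorem~\ref{th:main}.
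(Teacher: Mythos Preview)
Your proposal is correct and follows exactly the same approach as the paper, which states the corollary as ``a straightforward consequence of Theorem~\ref{th:main} and point (iv) of Lemma~\ref{lem:implict:func}'' without giving further details. Your explicit identification of $L_\Lambda(n) = L_1^*(n)^{-1/(1+\gamma)}$ and the absorption of the $(1+o(1))$ factors into the constants and $o_\Lambda(1)$ terms simply spell out what the paper leaves implicit.
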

This corollary is particularly informative, since it shows that both the minimax redundancy and the \textsc{etac}-code redundancy grow as powers of $n$, and therefore the logarithmic factor in the definition of near-adaptivity is not an unreasonable relaxation to the notion of adaptivity in the context of Fr\'echet max-stable envelope classes. Note also the vanishing per-symbol redundancy, at a rate of roughly $n^{-1/(1+\gamma)}$ which is slower the heavier the tail (the larger $\gamma$) is.

\subsection{Near-adaptivity to light-tailed envelope source classes}

In this section, we tie the results of this paper with those of \citet{bontemps2012adaptiveb}, where an explicit light-tailed assumption was made. This is the notion of an envelope distribution $F$ that has \emph{non-decreasing hazard rate}, that is it can be associated with a log-convex smoothed tail function. The terminology comes from the notion of \emph{hazard function} whose derivative being non-decreasing is equivalent to this log-convexity condition. It is worth noting that this means that $F$ itself is, almost, log-concave. These distributions are a rich subset of light-tailed distributions. In particular, geometric envelopes are at the boundary of such distributions, as they exhibit log-linear smoothed tail functions.

The contribution of the \textsc{ac}-code presented in \citep{bontemps2012adaptiveb} is that this code is \emph{adaptive}, in the sense of Equation \eqref{eq:adaptive}, to the collection of classes with non-decreasing hazard rate. The performance of the \textsc{ac}-code on such classes may be understood in a very intuitive way: the \textsc{ac}-code encodes the $n$-th symbol in a way that is not more expensive than encoding a symbol from a source on an alphabet of size $U(n)= F^{-1}(1-1/n)$, that is with redundancy $U(n)/(2n)$ bits. The \textsc{ac}-code can perform in this way for two reasons: with overwhelming probability, the largest sample in a sequence of length $n$, is not larger than $U(n)$; on many sources in such a class, with high probability, most of the symbols that are smaller than $U(n)$ do occur in a sequence of length $n$, there is no penalty in coding as if the actual alphabet were of size $U(n)$.

The \textsc{etac}-code does not take such a simplistic approach, it attempts to calibrate the effective alphabet size in a much more cautious way. An intuitive interpretation of the empirical threshold $M_n$ is the following: symbols larger than $M_n$ have low empirical frequency in the sequence, they may be encoded with the general purpose code; symbols smaller than $M_n$ tend to have larger empirical frequency, and on some sources from the envelope classes considered in this paper, a large proportion of the symbols that are smaller than $m_n$ do occur in a typical sequence (this observation is documented in the literature \citep*{ArKnPr06,MR2582705,BHA13}). Up to the Elias encoding, the \textsc{etac}-code encodes a sequence of length $n$ as if the actual alphabet were of cardinality $m_n$. The choice of $m_n$ balances the cost of escaping large symbols and the overhead incurred by oversizing the effective alphabet.

On the other hand, \citet{bontemps2012adaptiveb} establish that for non-decreasing hazard rate envelope classes, $U(t)= F^{-1}(1-1/t)$ is not only slowly varying but also enjoys the special property that, according to \citet{bojanic1971slowly}, the De Bruijn conjugate $U^*$ of $U$ is asymptotically equivalent to $1/U$. By Lemma \ref{lem:implict:func}, this in turn implies that $\lim_{t\to \infty}m(t)/U(t)= 1.$ Operationally, this means that choosing the threshold as $M_n \approx X_{M_n,n}$ (\textsc{etac}-code) or as $X_{1,n}$ (\textsc{ac}-code) asymptotically does not make a difference as far as coding envelope classes defined by such light-tailed envelopes. This entails  \citep[see][]{bontemps2012adaptiveb} the fact that the minimax redundancy of such classes is asymptotically not smaller than $\log (e) \int_1^n U(x)/(2x) \mathrm{d}x\geq U(n)\log (n)/4$.

Therefore, we expect the \textsc{etac}-code to perform well, despite its cautious approach. The following theorem establishes precisely that: up to a $\log m_n \approx \log\log n$ factor, the \textsc{etac}-code is asymptotically adaptive with respect to envelope classes defined by envelope distributions with non-decreasing hazard rate.

\begin{thm}[\textsc{non-decreasing hazard rate near-adaptivity of the etac code}] \label{thm:wac:hazrd}
Let $Q_n$ denote the coding probability defined by the \textsc{etac}-code, let $\Lambda$ be an envelope class such that the envelope has the non-decreasing hazard rate property, with corresponding exact threshold sequence $(m_n)_{n\in \N_+}$. We then have that:
\begin{eqnarray*}
  (1/4+o_\Lambda(1)) ~ m_n\log n &\leq & R^+(\Lambda_n) \\
& \leq &  R^+(Q_n,\Lambda_n) \\
&\leq &(2+o_{\Lambda}(1)) ~ m_n\log n\log m_n \, .
\end{eqnarray*}
In particular, the \textsc{etac}-code is not only asymptotically near-adaptive (cf. Equation \eqref{eq:adaptive}, noting that $m_n\leq n$):
$$
  R^+(Q_n,\Lambda_n) \leq (8+o_\Lambda(1))\log m_n R^+(\Lambda_n),
$$
but furthermore the multiplicative factor is of order $O_\Lambda(\log\log n)$.
\end{thm}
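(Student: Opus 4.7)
The plan is to mirror the two-step strategy used in the proof of Theorem \ref{th:main}, but to invoke the non-decreasing hazard rate versions of both the minimax lower bound and of Lemma \ref{lem:elias-code}.

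For the lower bound, I would rely on the result recalled in the paragraph preceding the theorem statement: \citet{bontemps2012adaptiveb} establish that for envelope classes with non-decreasing hazard rate, the minimax redundancy is asymptotically at least $\log(e) \int_1^n U(x)/(2x)\, \mathrm{d}x \geq U(n)\log(n)/4$. To re-express this in terms of $m_n$, I invoke the identity $m_n \sim U(n)$ also recalled in that paragraph: by \citet{bojanic1971slowly}, $U$ is slowly varying and its De Bruijn conjugate satisfies $U^*\sim 1/U$, and point (iv) of Lemma \ref{lem:implict:func} specialized to $\gamma=0$ is precisely the condition under which $m(t)/U(t)\to 1$. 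Substituting $U(n)\sim m_n$ gives $R^+(\Lambda_n) \geq (1/4+o_\Lambda(1))\, m_n\log n$, and a fortiori the corresponding lower bound on $R^+(Q_n,\Lambda_n)$.

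For the upper bound, I would decompose $\EXP[\ell(C_{\textsc{m}})] + \EXP[\ell(C_{\textsc{e}})] - H(\PROB_n)$ as in the proof of Theorem \ref{th:main} and control the two summands separately. The mixture contribution is already bounded \emph{uniformly over any max-stable envelope class} by Equation \eqref{eq:mixture-code}, yielding $\EXP[\ell(C_{\textsc{m}}) + \log \PROB_n(X_{1:n})] \leq (1/2+o_\Lambda(1))\, m_n\log n$. For the Elias contribution, I appeal to Lemma \ref{lem:elias-code}, but use the branch of that lemma appropriate to the non-Fréchet regime rather than the Fréchet bound of Equation \eqref{eq:elias-bound-frechet}. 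Heuristically, there are of order $m_n$ redacted symbols, each encoded with Elias cost at most $\log n + O(\log\log n)$; the extra $\log m_n$ factor appears in the light-tailed case to absorb the lack of sharp regular variation on the conditional excess tails, giving $\EXP[\ell(C_{\textsc{e}})] \leq (3/2+o_\Lambda(1))\, m_n\log n\log m_n$. Summing the two bounds and taking a supremum over $\Lambda$ delivers the stated $(2+o_\Lambda(1))\, m_n\log n\log m_n$ upper bound on $R^+(Q_n,\Lambda_n)$. Dividing the upper and lower bounds produces the overhead $(8+o_\Lambda(1))\log m_n$; since $U$ is slowly varying and $m_n\sim U(n)$, $\log m_n = O_\Lambda(\log\log n)$, establishing both near-adaptivity and the sharper $O(\log\log n)$ overhead.

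The main obstacle of this plan is the sharp Elias estimate in the light-tailed regime: one must exploit the log-convexity of the smoothed tail $\overline{F}$ to show that the conditional excesses $X_i - M_{i-1}$ over the empirical threshold are stochastically dominated by a sub-exponential law, and then combine this with the concentration of $M_n$ around $m_n$ recorded in Appendix \ref{app:threshold-properties}, so as to recover a $\log m_n$ rather than $\log n$ factor on the per-symbol Elias overhead. Everything else is a bookkeeping exercise analogous to the proof of Theorem \ref{th:main}.
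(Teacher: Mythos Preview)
Your overall strategy matches the paper's exactly: cite \citet{bontemps2012adaptiveb} for the lower bound (together with $m_n\sim U(n)$), and for the upper bound combine Equation \eqref{eq:mixture-code} with the Gumbel branch of Lemma \ref{lem:elias-code}. However, several details are off.

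First, the ``main obstacle'' you describe does not exist. The direct bound \eqref{eq:elias-bound-general} in Lemma \ref{lem:elias-code} is stated and proved for \emph{all} max-stable envelopes, $\gamma\geq 0$ included, and already reads $\EXP[\ell(C_{\textsc{e}})]\leq (2+o_\Lambda(1))\,m_n\log n\log m_n$. There is no need to exploit log-convexity of $\overline{F}$ or to argue stochastic domination of excesses by a sub-exponential law; you simply quote \eqref{eq:elias-bound-general}. Relatedly, your constant $3/2$ is wrong: the lemma gives $2$, not $3/2$, and your bookkeeping ``$1/2+3/2=2$'' is also incorrect because the mixture term $(1/2+o_\Lambda(1))m_n\log n$ is of strictly smaller order than $m_n\log n\log m_n$ and is absorbed into the $o_\Lambda(1)$. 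The final constant $2$ in the upper bound comes entirely from the Elias contribution.

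Second, your argument that $\log m_n=O_\Lambda(\log\log n)$ is incomplete. Slow variation of $U$ alone does \emph{not} imply $U(n)=O(\log n)$ (e.g.\ $U(n)=\exp(\sqrt{\log n})$ is slowly varying). What is actually used is that a non-decreasing hazard rate forces $\overline{F}$ to have a genuinely sub-exponential tail, hence $U(t)=\overline{F}^{-1}(1/t)=O(\log t)$; combined with $m_n\sim U(n)$ this yields $m_n=O_\Lambda(\log n)$ and therefore $\log m_n=O_\Lambda(\log\log n)$.
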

\begin{proof}
The proof of the minimax redundancy lower bound is given in \citep{bontemps2012adaptiveb}. As for the redundancy upper bound, the only difference with the proof of Theorem \ref{th:main} is to use the weaker Elias codelength bound given by Equation \eqref{eq:elias-bound-general}:
\begin{eqnarray*}
  \EXP [\ell(C_{\textsc{e}})]
    &\leq& (2+o_{\Lambda}(1)) ~ m_n \log n \log m_n\, .
\end{eqnarray*}

The near-adaptivity follows immediately. As for the claim that $m_n = O_\Lambda(\log n)$, it follows from the fact that $m(t)/U(t)\to 1$ as shown in \citep{bontemps2012adaptiveb}, recalling that $U(t)=\overline{F}^{-1}(t)$, where $\overline{F}$ has a sub-exponential tail.
\end{proof}

Theorems \ref{th:main} and \ref{thm:wac:hazrd} raise several questions. Between heavy-tailed envelope functions handled by Theorems \ref{th:main}
and very light tailed envelope functions handled by Theorem \ref{thm:wac:hazrd} , lie an intermediate family of envelope functions with slowly varying tail quantile functions ($U(tx)/U(t)\to 1$ as $t \to \infty)$ for all $x>0$) but with decreasing hazard rate. If we consider sampling from the associated envelope distribution, the literature dedicated to infinite urn schemes \citep[See][]{MR0216548,GneHanPit07,BHA13} shows that as $n$ tends to infinity, the number of rare symbols -- that are likely to be censored and to enter the priority queue maintained by the \textsc{etac} encoder -- is not stochastically bounded, but it tends to be negligible with respect to the number of distinct symbols in the sample. The \textsc{ac}-code is not likely to be adaptive with respect to envelope classes defined by such envelope distributions. The minimax redundancy of such envelope classes remains to be determined, and so is the performance of the \textsc{etac} code over thoses classes. Indeed, a very natural question raised by the advances reported in the present paper, is the cost of adaptivity in compression against countable alphabets. In density estimation \citep{lepski1992}, or tail index estimation \citep{CarKim14,BouTho15} for example, there are problems where adaptive estimation  suffers a logarithmic loss with respect to minimax risk. We still do not know whether this is the case for adaptive compression against envelope classes.

\section{Minimax Redundancy of Fr\'echet Envelope Classes} \label{sec:minimax}

We now lower bound the minimax redundancy with respect to the envelope class $\Lambda(f)$ when the envelope function $F$ is Fr\'echet. In this section, we make the additional assumption that $f$ is ultimately monotonically non-increasing. This is primarily to make the presentation more transparent when relating the regular variation properties of various functions, namely $f$, $F$, and the distribution $G$ that ensues from the Bayes construction.

We use the standard approach of the relationship between minimax and maximin redundancies. In particular, consider a set $\mathcal{P}=\{P_{\btheta}, \btheta\in \Theta\}$ of memoryless sources over the countable alphabet $\Np$ indexed by a parameter space $\Theta$ and let $\pi$ be a (prior) probability measure on $\Theta$. We call $(\mathcal P,\pi)$ a \emph{Bayes model}. If the parameter $\btheta$ is chosen according to $\pi$ and subsequently a sequence $X_{1:n}$ of length $n$ is observed from the source $P_{\btheta}$, then the \emph{Bayes redundancy} is the mutual information between $\btheta$ to $X_{1:n}$. Of fundamental importance is the fact (see, for example, \cite{barron:clarke:1990}) that the minimax redundancy is lower bounded by the Bayes redundancy with respect to any choice of prior probability distribution:
$$
 I(\btheta,X_{1:n}) \leq R^+(\mathcal{P}^n) \, .
$$
Moreover, whenever $\mathcal{P} \subset \Lambda$, we have $R^+(\mathcal{P}^n) \leq R^+(\Lambda_n)$, and we can engineer a lower bound to the minimax redundancy by properly choosing the Bayes model $(\mathcal{P},\pi)$. In the remainder of this section we start by doing precisely that, we then bound the resulting mutual information by the expected number of distinct symbols in the sequence, and lastly we relate the growth of the latter to the index of regular variation to establish a lower bound that matches the redundancy of the \textsc{ETAC}-code up to a logarithmic factor.

\subsection{Building a Bayes model}

In an appropriate Bayes model, we would like each $P_\btheta$ to be a member of $\Lambda(f)$ in an intuitively `worst-case' fashion: we want to capture the tail behavior dictated by $f$. The parameters can then simply `dither' around this tail.

Let $\Theta = \{0,1\}^\N$ be the space of all $0$-$1$ sequences. For any such sequence $\btheta=(\theta_k)_{k\in\N}$ define $P_\btheta \in \mathcal P$ as, for each $j\in \Np$:
\begin{eqnarray*} \label{eq:bayes-model}
\lefteqn{P_\btheta(j) } \\ &= &\left\{
\begin{array}{lcl}
 f(j)/Z &\quad& \mathrm{for\ every}\ j<j_0\\
 \min\limits_{t\in\{0,1\}} f(j_0+2k+t) &\quad& \mathrm{when}\ j=j_0+2k+\theta_k\\
&&\mathrm{for\ some}\ k\in\N\\
 0 &\quad& \mathrm{when}\ j=j_0+2k+(1-\theta_k)\\
&&\mathrm{for\ some}\ k\in\N\\
\end{array}
\right.
\end{eqnarray*}
where
$$
 Z = \frac{\sum_{j<j_0} f(j)}{1-\sum_{k\in \N} \min_{t=0,1} f(j_0+2k+t)}.
$$

This construction keeps the probability of the first $j_0-1$ symbols constant as $\btheta$ varies. At and beyond $j_0$, it breaks the alphabet in blocks of size $2$ indexed by $k$, assigning the smallest of the two values of $f$ in each block to one or the other symbol, according to the component $\theta_k$ of $\btheta$ corresponding to that block. For $j_0$, we can choose any value such that $Z\geq 1$. In particular, since $\sum_{j\geq 1} f(j) > 1$, we can always choose $j_0$ such that $\sum_{j<j_0} f(j) \geq 1$. It follows that $P_\btheta(j)$'s as defined are indeed probabilities. Furthermore, for all $j$ we have that $P_\btheta(j)\leq f(j)$, and therefore $\mathcal{P} \subset \Lambda(f)$ as desired. $P_\btheta$ matches one of the values of $f$ within each block in the tail, and is almost `worst-case' in this sense.

To complete the model, let the prior $\pi$ be such that $\btheta$ is a sequence of independent identically distributed Bernoulli-$1/2$ random variables. Note that the probability multiset $\{ P_\btheta(j) : j\in \Np \}$ is the same for all $\btheta$. The only difference is in the within-block positioning in the tail, randomized by $\btheta$. It is in this sense that the parameters `dither' the tail behavior.

\subsection{Computing the Bayes redundancy}


To proceed with the computation of the Bayes redundancy, we start with an observation: under the posterior distribution, the parameters $\theta_1,\ldots,\theta_k, \ldots$ are still independent. We first provide an intuitive argument. Given the sequence, there are two distinct possibilities per block $k$: either it is represented or it is not. If it is, then the corresponding parameter $\theta_k$ is known deterministically and none of the other parameters influence it. If it is not, then the a posteriori distribution of $\theta_k$ is its a priori distribution, because one cannot infer about it from the data, and the other parameters have no influence on this either. Therefore, given the observations, the parameters remain independent.

In order to get a formal proof of independence, it is enough to check that for each $k$, $\theta_1,\ldots,\theta_k$ are independent under the posterior distribution. A basic result in Bayesian theory asserts that the density of the posterior distribution with respect to the prior distribution is proportional to the likelihood.  Given observations $X_1,\ldots,X_n$, the likelihood at $\theta_1,\ldots,\theta_k$ can be computed using counters 
$N^0_j= \sum_{i=1}^n \IND_{X_i = j_0+2j}$, $N^1_j = \sum_{i=1}^n \IND_{X_i = j_0+2j+1}$ and $N_j =  N^0_j+N^1_j=\max(N^0_j,N^1_j)$. 
It is proportional to
\[
	\prod_{j=1}^k \left( \IND_{N_j^{{\theta}_j}=N_j} \left(P_{{\btheta}}(j_0+2j+ \theta_j)\right)^{N_j} \right) \, . 
\]
Note that the joint distribution of $N_1,\ldots, N_k$  does not depend on $\btheta$, and 
conditionally on $N_1,\ldots, N_k$, the counters  $(N^0_j)_{j\leq k}$  are independent. The likelihood is thus proportional
to a product of functions of the $\theta_j$, implying the desired independence. 

Using this observation, thanks to the chain rule for mutual information, the Bayes redundancy can be written as
\begin{eqnarray*}
I\left(\btheta, X_{1:n}\right) &=& \sum_{k\in \N} I\left(\theta_k, X_{1:n} | \theta_{1:k-1}\right)\\
&=& \sum_{k\in \N} I\left(\theta_k, X_{1:n}\right) \, .
\end{eqnarray*}

By conditioning further on $\mathbf{N}$, as $N_k$  and $\theta_k$ are independent, we have for each $k$:
\begin{eqnarray*}
\lefteqn{ I\left(\theta_k, X_{1:n}\right) }\\
& = &
 I\left(\theta_k, X_{1:n} | N_k =0\right) \PROB\left(N_k=0\right) \\
&& +
 I\left(\theta_k, X_{1:n} | N_k=1\right) \PROB\left(N_k=1\right).
\end{eqnarray*}

The first term is the case when block $k$ is not represented: conditionally on $N_k=0,$ $\theta_k$ and $X_{1:n}$ are independent. Therefore, $I\left(\theta_k, X_{1:n} | N_k =0\right)=0$. The second term is when block $k$ is represented: then $\theta_k$ is known deterministically, i.e. a noiseless binary channel. Therefore, $ I\left(\theta_k, X_{1:n} | N_k=1\right)=1$ (bit). Hence,
\begin{eqnarray}
I\left(\btheta, X_{1:n}\right)
 & = &\sum_{k\in \N} \PROB(N_k=1) \nonumber \\
 & \geq & \mathbb{E}\left[K_n \right] - j_0 + 1, \label{eq:bayes-redundancy}
\end{eqnarray}
where $K_n$ denotes the number of distinct symbols in $X_{1:n}$. The inequality follows from the fact that $\sum_{k\in \N} \PROB(N_k=1)$ is the expected number of distinct symbols with values at $j_0$ or beyond. Just like $\mathbf{N}$, the distribution of $K_n$ does not depend on the value of the parameter $\btheta$. The expected number of distinct symbols when sampling from a given discrete distribution has been studied in depth in the literature \citep{GneHanPit07}, and we can use the assumptions on the tail behavior of the envelope $f$ to characterize the asymptotic behavior of its expectation.

\subsection{Bounding the minimax redundancy}

The probability multiset of Equation \eqref{eq:bayes-model} can be reindexed (using ${j'}$ instead of $j$, to make it clear it's a new indexing) as follows:
\begin{equation} \label{eq:reindex-multiset}
g({j'})= \left\{
\begin{array}{lcll}
 f({j'})/Z &\quad& \mathrm{if}\ {j'}<j_0\\
 f(2{j'}-j_0) \wedge f(2{j'}-j_0+1) &\quad& \mathrm{if}\ {j'}\geq j_0 \, .
\end{array}
\right.
\end{equation}

This new probability mass function on $\Np$ corresponds to a cumulative distribution, which we call $G$. Since the number of distinct symbols in a sequence from $G$ has the same law as that from any source in the Bayes construction, we can use it to study the expectation $\EXP K_n$. We first show how $p$ and $G$ inherit certain properties from $f$ and $F$ respectively, via the following lemma proved in Appendix \ref{app:proofs}. Recall that $m_n$ is defined as the solution of $\overline{F}(x)=x/n$, where $F$ is the smoothed envelope distribution.

\begin{lem} \label{lemma:bayes-envelope}
If $f$ is ultimately monotonically non-decreasing and $F\in \textsc{mda}(\gamma)$ with $\gamma>0$, then so are  $g$ and $G$ respectively. Furthermore, if we define $(m'_n)_n$ by $m'_n=\min\left\{ k\in \Np : \overline{G}(k)\leq k/n \right\}$, we have that $m_n / m'_n\to 2$, as $n\to \infty$.
\end{lem}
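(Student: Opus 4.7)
The plan is to work mostly with tail functions and the regular variation properties that the Fréchet assumption provides. The key observation is that the ``dithering'' operation used in the Bayes construction exactly halves the mass in the tail (up to lower-order terms), and hence $\overline{G}$ inherits the same index of regular variation as $\overline{F}$, with a known multiplicative constant relating the two. From there, both the \textsc{mda} membership and the ratio of implicit thresholds follow essentially mechanically.

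First, using that $f$ is (ultimately) non-increasing, the minimum in \eqref{eq:reindex-multiset} simplifies to $g(j')=f(2j'-j_0+1)$ for all sufficiently large $j'$, which is itself non-increasing. This lets me compute $\overline{G}(k)=\sum_{j'>k} f(2j'-j_0+1)$ as a sum of $f$ over an arithmetic progression of step $2$ starting at roughly $2k$. Sandwiching this sum between $\frac{1}{2}[\overline{F}(2k-j_0+2)-f(2k-j_0+3)]$ and $\frac{1}{2}[\overline{F}(2k-j_0+2)+f(2k-j_0+3)]$ via monotonicity of $f$, then invoking the fact that $f(N)=o(\overline{F}(N))$ for regularly varying tails (a consequence of Karamata's monotone density theorem, since $\overline{F}\in\textsc{rv}(-1/\gamma)$ with $-1/\gamma<0$), I obtain
\begin{displaymath}
\overline{G}(k)\;\sim\;\tfrac{1}{2}\,\overline{F}(2k)\;\sim\;\tfrac{1}{2}\cdot 2^{-1/\gamma}\,\overline{F}(k)\;=\;2^{-(1+1/\gamma)}\,\overline{F}(k).
\end{displaymath}
This is the crux and also the step that requires the most care, because one must control the discretization error in summing a regularly varying function over an arithmetic progression. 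Since $\overline{G}$ is asymptotically a constant multiple of $\overline{F}$, it is regularly varying with the same index $-1/\gamma$, so $G\in\textsc{mda}(\gamma)$ by the Fisher--Tippett--Gnedenko theorem (clause (ii) of Theorem \ref{theo:fund:evt}).

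For the asymptotic ratio $m_n/m'_n$, I would write $m'_n=\alpha_n m_n$ and substitute into the defining relation $\overline{G}(m'_n)=m'_n/n$. Using $\overline{F}(m_n)=m_n/n$ this becomes
\begin{displaymath}
\overline{G}(\alpha_n m_n)\;=\;\alpha_n\,\overline{F}(m_n).
\end{displaymath}
Replacing $\overline{G}$ by its asymptotic $2^{-(1+1/\gamma)}\overline{F}$ and applying the uniform convergence theorem for regularly varying functions (to justify $\overline{F}(\alpha_n m_n)/\overline{F}(m_n)\to \alpha^{-1/\gamma}$ whenever $\alpha_n\to\alpha\in(0,\infty)$) yields the equation $\alpha^{1+1/\gamma}=2^{-(1+1/\gamma)}$, hence $\alpha=1/2$. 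To rule out pathological behavior of $\alpha_n$, I would argue by taking upper and lower limits: both $m_n$ and $m'_n$ are of the regularly varying form given by Lemma \ref{lem:implict:func}(iv) with the same slow-variation component (up to an explicit constant coming from the $2^{-(1+1/\gamma)}$ factor), so $\alpha_n$ is automatically bounded and bounded away from zero, which legitimizes applying the uniform convergence theorem and concludes that $m_n/m'_n\to 2$.

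The main obstacle I anticipate is the asymptotic relation $\overline{G}(k)\sim 2^{-(1+1/\gamma)}\overline{F}(k)$: the proof uses monotonicity together with the Karamata-type estimate $f(N)=o(\overline{F}(N))$, and one has to be careful that the remainder from the discrete sum is indeed of smaller order than the main $\frac{1}{2}\overline{F}(2k)$ term and that the regular variation of $\overline{F}$ with index $-1/\gamma<0$ is used precisely where the constant $2^{-1/\gamma}$ appears. Everything else---the \textsc{mda} conclusion and the computation of the ratio $m_n/m'_n$---is a short calculation once this tail asymptotic is in hand.
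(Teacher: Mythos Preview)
Your proposal is correct and follows essentially the same route as the paper: sandwich $\overline{G}(k)$ by constant multiples of $\overline{F}(2k)$ using monotonicity of $f$, invoke regular variation to pass from $\overline{F}(2k)$ to $\overline{F}(k)$, and then read off the threshold ratio. Your tail estimate is in fact slightly sharper --- you obtain the exact asymptotic $\overline{G}(k)\sim\tfrac{1}{2}\overline{F}(2k)$ via $f(N)=o(\overline{F}(N))$, whereas the paper only derives a $(1\pm\epsilon)$-sandwich using $f(j+1)/f(j)\to 1$ --- but either suffices. One small point to tighten: you write ``the defining relation $\overline{G}(m'_n)=m'_n/n$'', but $m'_n$ is defined as an integer minimum, so only the inequalities $\overline{G}(m'_n)\leq m'_n/n<\overline{G}(m'_n-1)\cdot\tfrac{n}{m'_n-1}\cdot\tfrac{m'_n-1}{n}$ hold; the paper handles this by working directly with these inequalities rather than a limiting equation, which avoids having to separately argue that $\alpha_n$ stays bounded and bounded away from zero. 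Your fix via Lemma~\ref{lem:implict:func}(iv) works too, provided you note that the discrete $m'_n$ and its smoothed analogue differ by $O(1)=o(m'_n)$.
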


The literature on infinite urn schemes, starting with \citep{MR0216548} and surveyed in \citep{GneHanPit07}, describes tight connections between the tail behavior of the sampling distribution and the sequence $(\EXP K_n)_n$. These results establish asymptotic relationships between $\EXP K_n$, $n$, $\gamma$ and the slowly varying function of $G$. Our goal here, instead, is to relate $(\EXP K_n)_n$ and the sequence of exact thresholds $(m_n)_n$. To this effect, we prove a key result in Appendix \ref{app:threshold-properties}, Lemma \ref{thm:Kn:Mn}, which effectively bounds $K_n$ from below by $m'_n$, up to a constant factor. We state it here for clarity:
\begin{lem*}[Lemma \ref{thm:Kn:Mn} in Appendix \ref{app:threshold-properties}]
Let a distribution $G$ on $\Np$ belong to some $\textsc{mda}(\gamma), \gamma>0$, with a probability mass function that is ultimately monotonically non-increasing. Define $m'_n=\min\left\{ k\in \Np : \overline{G}(k)\leq k/n \right\}$. Then there exists a constant $\kappa'_\gamma$ and some $n_0$ (that may depend on $G$), such that for all $n\geq n_0$, the expected number of distinct symbols in a sample from $G$ satisfies
\begin{displaymath}
 \kappa'_\gamma m'_n \leq \EXP K_n \, .
\end{displaymath}
\end{lem*}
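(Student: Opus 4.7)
The starting point is the classical identity
\begin{equation*}
\EXP K_n \;=\; \sum_{j\geq 1}\bigl(1-(1-g(j))^n\bigr)\;\geq\; \sum_{j\geq 1}\bigl(1-\mathe^{-ng(j)}\bigr),
\end{equation*}
using $(1-x)^n\leq \mathe^{-nx}$. The strategy is to restrict this sum to $j\leq m'_n$, show that for such $j$ the integrand is bounded below by a positive constant depending only on $\gamma$, and conclude that $\EXP K_n$ is at least of order $m'_n$.

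\textbf{Key steps.} First, because the pmf $g$ is ultimately non-increasing, there is an index $j_1$ (independent of $n$) such that $g$ is non-increasing on $[j_1,\infty)$; in particular for every $j_1\leq j\leq m'_n$ we have $g(j)\geq g(m'_n)$. Second, since $G\in \textsc{mda}(\gamma)$ with $\gamma>0$, the tail $\overline{G}$ is regularly varying with index $-1/\gamma$ (Theorem \ref{theo:fund:evt}). Combined with the eventual monotonicity of $g$, the monotone density (Karamata) theorem yields
\begin{equation*}
\lim_{j\to\infty}\frac{j\,g(j)}{\overline{G}(j)}=\frac{1}{\gamma}.
\end{equation*}
Third, the definition of $m'_n$ as the smallest integer $k$ with $\overline{G}(k)\leq k/n$ implies $\overline{G}(m'_n-1)>(m'_n-1)/n$, so that
\begin{equation*}
\frac{m'_n-1}{n}-g(m'_n)\;<\;\overline{G}(m'_n)\;\leq\;\frac{m'_n}{n}.
\end{equation*}
Since $m'_n\to\infty$ and $g(m'_n)=o(1/n)\cdot\overline{G}(m'_n)/m'_n\cdot n$ is of smaller order than $m'_n/n$, one obtains $\overline{G}(m'_n)\sim m'_n/n$, hence $n\,g(m'_n)\to 1/\gamma$.

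\textbf{Conclusion.} Combining these facts, for any $\epsilon\in(0,1/\gamma)$ there exists $n_0$ such that for all $n\geq n_0$ and all $j_1\leq j\leq m'_n$ we have $n\,g(j)\geq n\,g(m'_n)\geq 1/\gamma-\epsilon$, whence
\begin{equation*}
\EXP K_n \;\geq\; \sum_{j=j_1}^{m'_n}\bigl(1-\mathe^{-ng(j)}\bigr)\;\geq\; (m'_n-j_1+1)\bigl(1-\mathe^{-(1/\gamma-\epsilon)}\bigr).
\end{equation*}
Because $m'_n\to\infty$, the fixed offset $j_1$ is negligible for $n$ large enough, and we obtain $\EXP K_n\geq \kappa'_\gamma\,m'_n$ with, for instance, $\kappa'_\gamma=\tfrac{1}{2}\bigl(1-\mathe^{-1/(2\gamma)}\bigr)$.

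\textbf{Main obstacle.} The only delicate point is the passage from the regular variation of $\overline{G}$ to the pointwise asymptotics of $g$: without monotonicity of $g$, one cannot invoke the monotone density theorem, and the conclusion $n\,g(m'_n)\to 1/\gamma$ could fail. This is precisely why the eventual monotonicity assumption is essential; everything else is routine bookkeeping with the defining inequalities of $m'_n$ and the elementary bound $(1-x)^n\leq \mathe^{-nx}$.
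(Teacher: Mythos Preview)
Your argument is correct and in fact cleaner than the paper's. Both proofs rest on the same Tauberian step (the monotone density theorem, which the paper states as $g(j)\sim \overline{G}(j)/(\gamma j)$), and both need ultimate monotonicity of $g$ precisely to invoke it. The difference lies in which block of symbols is counted. The paper counts distinct symbols \emph{above} a dilated threshold $k_n=\beta m'_n$, using the quadratic lower bound $1-(1-p)^n\geq np(1-np/2)$; because $np$ there need not be small when $\gamma<1$, the paper must choose $\beta$ differently in the two regimes $\gamma\geq 1$ and $\gamma<1$, leading to a case split and a more elaborate constant. You instead count distinct symbols \emph{below} $m'_n$ (from $j_1$ onward), where monotonicity gives $ng(j)\geq ng(m'_n)\to 1/\gamma$, so every term contributes at least $1-\mathe^{-(1/\gamma-\epsilon)}$ and no case split is needed. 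Your route yields a more transparent constant and a shorter proof; the paper's route, by working in the tail, additionally isolates a contribution ($S_n$) from the intermediate range $(m'_n,k_n]$ that sharpens the constant when $\gamma<1$, but this refinement is not needed for the lemma as stated. One cosmetic point: the sentence ``$g(m'_n)=o(1/n)\cdot\overline{G}(m'_n)/m'_n\cdot n$'' is garbled; what you mean (and what the surrounding inequalities already give) is simply $g(m'_n)=O(1/n)=o(m'_n/n)$, which suffices for $\overline{G}(m'_n)\sim m'_n/n$.
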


By combining Lemmas \ref{lemma:bayes-envelope} and \ref{thm:Kn:Mn}, we can assert that there exists a constant $\kappa_\gamma$ (that depends only on $\gamma$) and some $n_0$ (that may depend on $F$ more generally), such that for all $n\geq n_0$ we have:
\begin{equation} \label{eq:Kn-lowerbound}
 \kappa_\gamma m_n \leq \EXP K_n \, .
\end{equation}

We are finally in position to combine the Bayes model construction and these asymptotic characterizations to give a lower bound on the minimax redundancy. By combining Equations \eqref{eq:bayes-redundancy} and \eqref{eq:Kn-lowerbound}, and using the minimax-maximin relationship, we have established the following theorem.

\begin{thm}[\textsc{frechet minimax redundancy lower bound}] \label{thm:minimax}
Let $\Lambda(f)$ be the envelope class defined by a function $f$ that is ultimately monotonically non-increasing. If the envelope is Fr\'{e}chet with index $\gamma>0$ in the sense of Definition \ref{def:max-stable-envelope}, and $(m_n)_n$ is defined according to \eqref{eq:1}, then, for some constant $\kappa_\gamma$, for large enough $n$,
\begin{equation*}
R^+(\Lambda_n) \geq \kappa_\gamma m_n \, . 
\end{equation*}
\end{thm}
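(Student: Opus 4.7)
The plan is to exploit the standard minimax--maximin inequality, constructing a well-chosen Bayes model $(\mathcal{P},\pi)$ with $\mathcal{P}\subset\Lambda(f)$ and lower bounding the resulting mutual information $I(\btheta,X_{1:n})$. First I would set $\Theta=\{0,1\}^{\N}$ with product Bernoulli$(1/2)$ prior $\pi$, and for each $\btheta$ define $P_\btheta$ as in the displayed construction of the section: keep the head of the distribution constant (probabilities proportional to $f$ up to a normalizing factor $Z\geq 1$), then split the tail into blocks of two consecutive symbols and use each bit $\theta_k$ to decide which of the two adjacent values of $f$ receives the minimum mass $\min(f(j_0+2k),f(j_0+2k+1))$ while the other is set to zero. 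The choice of $j_0$ with $\sum_{j<j_0} f(j)\geq 1$ ensures normalization, and by construction $P_\btheta(j)\leq f(j)$ for all $j$, placing $P_\btheta\in\Lambda(f)$.

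Next I would establish that the coordinates $\theta_k$ remain mutually independent under the posterior. For each $k$, writing $N_k=\sum_{i\leq n}\IND_{X_i\in\{j_0+2k,j_0+2k+1\}}$ and $N_k^{\theta_k}$ for the count on the allocated symbol, the likelihood factors as $\prod_k\IND_{N_k^{\theta_k}=N_k}(P_\btheta(\cdot))^{N_k}$; since conditionally on the block totals the within-block counts are independent and the total $N_k$ has a law that does not depend on $\btheta$, the posterior factorizes. The chain rule then gives $I(\btheta,X_{1:n})=\sum_k I(\theta_k,X_{1:n})$, and a two-case conditioning on $\{N_k=0\}$ versus $\{N_k\geq 1\}$ shows $I(\theta_k,X_{1:n})=\PROB(N_k\geq 1)$ (one bit is revealed as soon as the block is hit, nothing otherwise). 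Summing yields $I(\btheta,X_{1:n})\geq \EXP[K_n]-j_0+1$, where $K_n$ is the number of distinct symbols observed.

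The third step transfers the problem to a genuine sampling problem by reindexing the probability multiset $\{P_\btheta(j)\}$ into a distribution $G$ on $\Np$ as in Equation~\eqref{eq:reindex-multiset}. Since $K_n$ under any $P_\btheta$ has the same law as $K_n$ under i.i.d.\ sampling from $G$, I can invoke Lemma~\ref{lemma:bayes-envelope} to transfer the Fréchet tail property from $F$ to $G$ and compare the implicit thresholds $m'_n$ (for $G$) with $m_n$ (for $F$), obtaining $m_n/m'_n\to 2$. Then Lemma~\ref{thm:Kn:Mn} delivers the occupancy lower bound $\EXP K_n\geq \kappa'_\gamma m'_n$ for large $n$, which combined with the asymptotic equivalence $m'_n\sim m_n/2$ produces $\EXP K_n\geq \kappa_\gamma m_n$ for a constant $\kappa_\gamma$ depending only on $\gamma$.

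Putting these three pieces together, for $n$ large enough the fixed shift $-j_0+1$ is absorbed into a slight reduction of the constant, giving $R^+(\Lambda_n)\geq R^+(\mathcal{P}^n)\geq I(\btheta,X_{1:n})\geq \kappa_\gamma m_n$. The routine parts are the algebra of the Bayes construction and the bookkeeping of mutual information; the substantive obstacle is really the regular-variation lower bound $\EXP K_n\gtrsim m'_n$ and the compatibility of the dithered construction with the tail of $f$ (the fact that halving each pair to its minimum still preserves the Fréchet index and merely multiplies the implicit threshold by a bounded factor), both of which are dispatched by the two lemmas cited above.
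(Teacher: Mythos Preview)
Your proposal is correct and follows the paper's argument essentially verbatim: the same Bayes dithering construction on $\{0,1\}^{\N}$ with Bernoulli$(1/2)$ prior, the same posterior-factorization argument giving $I(\btheta,X_{1:n})=\sum_k \PROB(N_k\geq 1)\geq \EXP K_n - j_0 +1$, and the same reduction to the occupancy bound via Lemmas~\ref{lemma:bayes-envelope} and~\ref{thm:Kn:Mn}. There is nothing to add; the only cosmetic difference is that you write $\PROB(N_k\geq 1)$ where the paper writes $\PROB(N_k=1)$ for the block-represented event, and your version is the cleaner notation.
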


Compare this to the upper bound on this redundancy expressed in Equation \eqref{eq:minimax-upperbound} which was obtained using Theorem \ref{prop:upperbound}. According 
to recent results \citep{AchJafOrlSr14} obtained in a slight variant of our model, the logarithmic gap between lower and upper bounds for the minimax redundancy is likely to be due to the weakness of Theorem \ref{prop:upperbound} to fully capture the richness of the max-stable envelope classes.

\section{Analysis of the \textsc{etac}-Code} \label{sec:etac-analysis}

We now complete the paper by analyzing the redundancy of the \textsc{etac}-code. We start with direct bounds on the codelengths of the two strings comprising the code, the mixture encoding $C_{\textsc{m}}$ and the Elias encoding $C_{\textsc{e}}$, in terms of the data-driven threshold sequences $M_n$. These need to be related to the exact threshold $m_n$, to tie the redundancy of the code with the minimax redundancy lower bound and give precise asymptotic growth expressions. For the mixture encoding, the direct bound is sufficient upon using the distribution-free properties of the thresholds given in Appendix \ref{app:threshold-properties}. For the Elias code, further work is needed to place it in the proper form, and most of this section is dedicated to that analysis. The results presented here are combined in their final form in Theorem \ref{th:main} of Section \ref{sec:mainresult}.

\subsection{Codelength of the mixture encoding $C_{\textsc{m}}$}

The difference between the length of the progressive mixture encoding of the censored sequence can be compared with the ideal codeword length for the source output \citep[see Lemma 2 and the proof of Theorem 8 in][for details]{boucheron:garivier:gassiat:2006}:
\begin{eqnarray*}
\lefteqn{\ell (C_{\textsc{m}}) +\log \PROB_n(X_{1:n})}\\ &=& - \log \KT_{M_n+1}(\widetilde{X}_{1:n}) +\log \PROB_n(X_{1:n})\\
&& -\log Q_n(\widetilde{X}_{1:n}) + \log \KT_{M_n+1}(\widetilde{X}_{1:n})
\end{eqnarray*}
where $ \KT_{M_n+1}$ is the Krichevsky-Trofimov mixture coding probability over an alphabet of cardinality $M_n+1$. The second part of the equality is non-positive, and it follows that:
 \begin{equation*}
 \ell (C_{\textsc{m}}) +\log \PROB_n(X_{1:n}) \leq \frac{M_n+1}{2} \log(n) + 2 \, .
\end{equation*}

We can then appeal directly to the distribution-free properties of the thresholds given in Appendix \ref{app:threshold-properties}, to bound $\EXP[M_n]$ asymptotically by $m_n$. In particular, if the source belongs to a max-stable envelope class $\Lambda$, and $(m_n)$ is the corresponding exact threshold sequence as in Equation \eqref{eq:1}, we then have:
\begin{eqnarray}
\EXP\left[ \ell (C_{\textsc{m}}) +\log \PROB_n(X_{1:n}) \right]
    &\leq& \frac{\EXP[M_n]+1}{2} \log(n) + 2  \nonumber \\
    &\leq& \frac{m_n+3\sqrt{m_n}+4}{2}\log(n) + 2 \nonumber \\
    &\leq& (1/2+o_\Lambda(1))m_n \log(n) \label{eq:mixture-code}
\end{eqnarray}
where the second inequality follows from Lemma \ref{lem:trick:elizabeth} and the last inequality holds since $m_n$ grows unbounded with $n$, by Lemma \ref{lem:implict:func}.


\subsection{Codelength of the Elias encoding $C_{\textsc{e}}$}

Over light-tailed envelope classes, the contribution of the Elias penultimate encoding of the redacted symbols to the redundancy of the \textsc{ac}-code is asymptotically negligible, relative to the mixture code length and the minimax redundancy \citep{bontemps2012adaptiveb}. The argument is transparent: when using the \textsc{ac}-code the threshold, which is the maximum, corresponds to a rank within the order statistics that is deterministic, equal to $1$, and redacted symbols are just records (excesses over maxima) of an independent sequence of identically distributed random variables. They may be analyzed using the well-established theory of records \citep[see][]{Res87}. Furthermore, the fact that envelopes have non-decreasing hazard rate considerably simplifies the analysis of extreme order statistics \citep[see][]{boutho2012}.

Over max-stable envelope classes, the analysis of the contribution of the Elias encoding to the redundancy of the \textsc{etac}-code faces new challenges. These stem from the fact that redacted symbols are not records anymore, not even $k$-th rank records for a deterministic $k$, as the threshold is determined from the data itself. Moreover, it is not straightforward to transfer properties from a sequence drawn from the smoothed envelope distribution to one drawn from a specific distribution in the envelope class. The details of the approach we follow involve tackling the problem on these fronts.

In what follows let $G$ denote our sampling distribution, which belongs to an envelope class given by $F\in \textsc{mda}(\gamma)$, $\gamma\geq 0$. As in \citep{bontemps2012adaptiveb}, the length of the Elias encoding is readily upper bounded as follows.
\begin{eqnarray*}
\EXP[\ell(C_{\textsc{e}})]
    &\leq& 2 \sum_{i=1}^{n-1} \EXP \left[ \IND\{X>M_i\} \left( \log(1+X-M_i) + \rho \right) \right]
\end{eqnarray*}
where we write a generic $X$ instead of $X_{i+1}$, because $X_{i+1}$ is always independent of $M_i$. The $\rho$ term is a parametrization choice. It contributes to the sum with a factor of $\PROB\{X>M_i\}=\EXP[\overline{G}(M_i)]$. We shortly bound the latter in Lemma \ref{lem:mean}, and meanwhile place most of our focus on bounding the logarithmic term.

We go through these general steps:
\begin{itemize}
\item For each $i$, we condition on $M_i=\u$. This reduces the problem to bounding the following `pointwise' (in the threshold) function from above:
\begin{eqnarray*}
\Sigma(\u)
    &:=& \EXP \left[ \IND\{X>M_i\} \log(1+X-M_i) | M_i=\u \right]
\end{eqnarray*}
Note that upon conditioning, we lose the dependence on $i$. The influence of $i$ on the total expectation is only through the distribution of $M_i$.
\item We then take a total expectation $\EXP[\Sigma(M_i)]$ for each $i$, and transfer the pointwise bounds. Since we would like to express $\EXP[\ell(C_{\textsc{e}})]$ as a function of the thresholds, we take care to relate the various bounds to $\EXP[M_i]$, and $m_i$.
\item To bound $\EXP[\ell(C_{\textsc{e}})]$, we combine the bounds for various values of $i$ in the sum:
\begin{equation} \label{eq:total-bound}
    \EXP[\ell(C_{\textsc{e}})] \leq 2\sum_{i=1}^{n-1} \left( \EXP[\Sigma(M_i)] + \rho\EXP[\overline{G}(M_i)]\right).
\end{equation}
\end{itemize}

Although each step corresponds to a simple statement, we list the results as lemmas, to cleanly delineate the proofs. We start with giving a pointwise bound on $\Sigma(\u)$.

\begin{lem} \label{lem:pointwise-bounds}
Given $\epsilon>0$ there exists $t_0$ (which will depend on both $\epsilon$ and $F$), such that for all $t>t_0$, we have:
\begin{equation} \label{eq:pointwise-unified}
\Sigma(\u) \leq \overline{G}(\u) \log(t) + (\gamma/\ln 2+\epsilon) \overline{F}(t)
\end{equation}
\end{lem}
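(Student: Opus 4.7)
The strategy is to split the sum defining $\Sigma(\u)$ at the threshold $t$ and control the resulting tail via a Karamata-type estimate on regularly varying integrals. Since $X_{i+1}$ is independent of $M_i$ (which depends only on $X_1,\ldots,X_i$), conditioning on $\{M_i = \u\}$ yields $\Sigma(\u) = \sum_{j > \u} G(\{j\})\log(1+j-\u)$, where $G$ is the source distribution. I would decompose the sum as $\Sigma(\u) = S_1 + S_2$ with $S_1 = \sum_{\u < j \leq t} G(\{j\})\log(1+j-\u)$ and $S_2 = \sum_{j > t} G(\{j\})\log(1+j-\u)$.

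The bound on $S_1$ is immediate: the \textsc{etac} threshold is at least $1$ (it is initialized to $x_1 \geq 1$ and is non-decreasing), so for $j \leq t$ one has $1+j-\u \leq t$, giving $S_1 \leq \log(t)(\overline{G}(\u) - \overline{G}(t))$. For $S_2$ the key manipulation is to factor the $\log t$ out explicitly: for $j > t$ and $\u \geq 1$ one has $\log(1+j-\u) \leq \log j = \log t + \log(j/t)$, and hence $S_2 \leq \log(t)\overline{G}(t) + \sum_{j > t} G(\{j\})\log(j/t)$. When $S_1$ and $S_2$ are added, the terms in $\overline{G}(t)\log t$ cancel, leaving $\overline{G}(\u)\log t$ plus the residual tail sum.

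The main technical step is to show that the residual tail sum is bounded by $(\gamma/\ln 2 + \epsilon)\overline{F}(t)$ for $t$ large enough. Envelope domination $G(\{j\}) \leq f(j)$ together with comparison to the smoothed envelope distribution reduce the sum to the integral
\[
\int_t^\infty \log(x/t)\,\mathrm{d}F(x),
\]
and integration by parts, with vanishing boundary terms (at $t$ because $\log(1) = 0$, and at infinity because $\log(x/t)\overline{F}(x) \to 0$ for any $F \in \textsc{mda}(\gamma)$ with $\gamma \geq 0$), turns this into $\frac{1}{\ln 2} \int_t^\infty \overline{F}(x)/x\,\mathrm{d}x$.

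The hard part, where the heart of the proof lies, is estimating this last integral via Karamata's theorem, as recalled in the appendix on regular variation. In the Fr\'echet case $\gamma > 0$, $\overline{F} \in \textsc{rv}(-1/\gamma)$, hence $\overline{F}(x)/x \in \textsc{rv}(-1-1/\gamma)$, and Karamata gives $\int_t^\infty \overline{F}(x)/x\,\mathrm{d}x \sim \gamma\,\overline{F}(t)$ as $t \to \infty$. In the Gumbel case $\gamma = 0$, the super-polynomial decay of $\overline{F}$ forces the same integral to be $o(\overline{F}(t))$. In both cases, for any prescribed $\epsilon > 0$ there exists $t_0 = t_0(\epsilon,F)$ such that for all $t > t_0$, $\int_t^\infty \overline{F}(x)/x\,\mathrm{d}x \leq (\gamma + \epsilon \ln 2)\overline{F}(t)$, and the lemma follows upon assembling the pieces.
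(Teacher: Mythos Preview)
Your proof is correct and lands on the same tail integral $\tfrac{1}{\ln 2}\int_t^\infty \overline{F}(x)/x\,\mathrm{d}x$ as the paper, but the route differs in two places. First, the paper rewrites $\Sigma(u)$ via Fubini as $\tfrac{1}{\ln 2}\int_u^\infty \overline{G}(x)/(1+x-u)\,\mathrm{d}x$ and splits at $u+t-1$ rather than at $t$; this makes the intermediate bound valid for every $u\geq 1$, whereas your $S_1$/$S_2$ split as written implicitly needs $u\leq t$ (a cosmetic issue, since for $u>t$ one can bound $\log(1+j-u)\leq \log j$ directly and still recover your final inequality). Second, and more substantively, the paper does not apply Karamata to $\overline{F}(x)/x$: it substitutes $y=m(z)$, uses $\overline{F}(m(z))=m(z)/z$, integrates by parts, and then invokes Karamata on $m(z)/z^2\in\textsc{rv}(\gamma/(\gamma+1)-2)$. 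The payoff of this detour is a single argument covering all $\gamma\geq 0$, because $m$ is always regularly varying by Lemma~\ref{lem:implict:func} even though $\overline{F}$ need not be in the Gumbel case. Your direct application of Karamata to $\overline{F}(x)/x$ is cleaner when $\gamma>0$, but your Gumbel branch (``super-polynomial decay forces $o(\overline{F}(t))$'') appeals to rapid-variation facts that are true yet not among the tools stated in the paper's appendix; the paper's $m$-substitution is precisely what lets it stay within the quoted form of Karamata.
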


We would now like to take the expectation $\EXP[\Sigma(M_i)]$. The only term that contributes is $\EXP[\overline{G}(M_i)]$, which can be bounded as follows.

\begin{lem} \label{lem:mean}
We have:
$$
  \EXP[\overline{G}(M_i)] \leq \frac{\EXP[M_{i+1}]}{i+1}.
$$
\end{lem}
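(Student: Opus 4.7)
Since $M_i$ is measurable with respect to $X_{1:i}$ and $X_{i+1}$ is independent of $X_{1:i}$, conditioning on $M_i$ gives $\EXP[\overline{G}(M_i)]=\PROB\{X_{i+1}>M_i\}$, so the task reduces to bounding this probability by $\EXP[M_{i+1}]/(i+1)$. The idea is to turn an extra factor of $i+1$ on the left-hand side into a sum over the $i+1$ exchangeable positions, and then to bound that sum deterministically by $M_{i+1}$.

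Concretely, for each $j\in\{1,\ldots,i+1\}$ let $M^{(j)}$ denote the threshold obtained by applying the defining formula \eqref{eq:Mn} to the length-$i$ sequence $(X_1,\ldots,X_{j-1},X_{j+1},\ldots,X_{i+1})$. Because $X_1,\ldots,X_{i+1}$ are i.i.d., the pair $(X_j,M^{(j)})$ has the same distribution as $(X_{i+1},M_i)$, hence $\PROB\{X_j>M^{(j)}\}=\PROB\{X_{i+1}>M_i\}$ for every $j$. Summing over $j$ and using linearity of expectation, it suffices to establish the almost-sure deterministic inequality
$$\sum_{j=1}^{i+1}\IND\{X_j>M^{(j)}\}\;\leq\; M_{i+1}.$$

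To prove this combinatorial bound I would introduce the rank $R_j\in\{1,\ldots,i+1\}$ of $X_j$ in $X_{1:i+1}$ ordered from largest to smallest, breaking ties by index so that the $R_j$'s are a permutation of $\{1,\ldots,i+1\}$, and argue the contrapositive: if $R_j>M_{i+1}$ then $X_j\leq M^{(j)}$. When $R_j>M_{i+1}$, the entry $X_j$ is not among the top $M_{i+1}$ values of $X_{1:i+1}$, so deleting it leaves the first $M_{i+1}$ order statistics unchanged, i.e.\ $x_{k,i+1}^{(j)}=x_{k,i+1}$ for every $k\leq M_{i+1}$. The defining property of $M_{i+1}$ then transfers verbatim to the reduced sequence, forcing $M^{(j)}=M_{i+1}$, and $X_j\leq x_{M_{i+1},i+1}\leq M_{i+1}=M^{(j)}$ follows. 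Hence every index contributing to the sum has $R_j\leq M_{i+1}$, and since ranks are distinct at most $M_{i+1}$ indices can meet this condition. Taking expectations delivers $(i+1)\PROB\{X_{i+1}>M_i\}\leq \EXP[M_{i+1}]$, which is the lemma.

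The main obstacle is the identity $M^{(j)}=M_{i+1}$ under $R_j>M_{i+1}$: one direction ($M^{(j)}\leq M_{i+1}$) is immediate from the preserved order statistics, but the reverse inequality requires also preserving the strict inequality $x_{k,i+1}>k$ for every $k<M_{i+1}$ after deletion, together with a small check of the boundary cases $M_{i+1}=1$ and $M_{i+1}=i+1$. Once this is in hand, exchangeability and the permutation property of ranks close the argument with no further analytic input.
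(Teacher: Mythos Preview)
Your proof is correct and reaches the same bound, but by a genuinely different route than the paper's. The paper first uses the increment property $M_{i+1}\leq M_i+1$ (a consequence of self-boundedness, Lemma~\ref{lem:mnmeetsefrons-stein}) to replace the event $\{X_{i+1}>M_i\}$ by the larger symmetric event $\{X_{i+1}>M_{i+1}-1\}$; then, because $M_{i+1}$ is permutation-invariant, a uniform random permutation reduces the probability to the simple count $\tfrac{1}{i+1}\sum_j\IND\{X_j>M_{i+1}-1\}$, which is bounded by $M_{i+1}/(i+1)$ directly from the definition of $M_{i+1}$. You instead keep the exact event and symmetrize via leave-one-out thresholds $M^{(j)}$, then establish the deterministic bound $\sum_j\IND\{X_j>M^{(j)}\}\leq M_{i+1}$ by a rank argument. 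Your route is more self-contained in that it does not appeal to the increment/self-boundedness property established elsewhere, but it pays for this with the combinatorial step you flag as the ``main obstacle'' (the identity $M^{(j)}=M_{i+1}$ when $R_j>M_{i+1}$); this step is indeed fine, including the boundary cases $M_{i+1}=1$ and $M_{i+1}=i+1$, as your sketch indicates. The paper's version is shorter once self-boundedness is already available.
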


As in the derivation of Equation \ref{eq:mixture-code} for the mixture codelength, we can now use the concentration properties of $M_i$ given by Lemma \ref{lem:trick:elizabeth}, to relate it back to $m_i$. Given $\epsilon>0$, for large enough $i$ we have:
\begin{eqnarray}
\EXP[\overline{G}(M_i)]
    &\leq& (1+\epsilon)\frac{m_{i+1}}{i+1}. \label{eq:EGUi-mi}
\end{eqnarray}

With the choice of $t=m_{i+1}$, Equations \eqref{eq:pointwise-unified} and \eqref{eq:EGUi-mi} give us that for large enough $i$:
\begin{eqnarray}
\lefteqn{\EXP[\Sigma(M_i)]}\\
    &\leq& (1+\epsilon) \frac{m_{i+1}}{i+1} \log(t) + (\gamma/\ln 2+\epsilon) \overline{F}(t) \nonumber \\
    &\leq& (1+\epsilon) \frac{m_{i+1}}{i+1} \log(m_{i+1}) + (\gamma/\ln 2+\epsilon)~ \frac{m_{i+1}}{i+1} \label{eq:ESUi-mi}  \, . 
\end{eqnarray}

Lastly, by combining these steps via Equation \eqref{eq:total-bound}, we obtain a master bound on the expected Elias codelength, valid for both Fr\'echet and Gumbel envelopes, and which we can further specialize in the Fr\'echet case. We present this in the following lemma.

\begin{lem}[\textsc{elias codelength}] \label{lem:elias-code}
Given $F\in\textsc{mda}(\gamma)$, then for all $G$ in the envelope class $\Lambda$ characterized by $F$, we have the following bounds for the Elias portion of the \textsc{etac}-code:
\begin{enumerate}[(i)]
  \item Sum bound:
      \begin{equation} \label{eq:elias-bound-sum}
      \EXP[\ell(C_{\textsc{e}})] \leq (2+o_{\Lambda}(1)) \sum_{i=1}^{n} \frac{1}{i} m_{i}\log m_{i}
      \end{equation}
  \item Integral bound:
      \begin{equation} \label{eq:elias-bound-integral}
      \EXP[\ell(C_{\textsc{e}})] \leq (2+o_{\Lambda}(1)) \int_1^n \frac{1}{t} m(t) \log m(t) \mathd t.
      \end{equation}
  \item Direct bound (for both Gumbel and Fr\'echet):
      \begin{equation} \label{eq:elias-bound-general}
      \EXP[\ell(C_{\textsc{e}})] \leq (2+o_{\Lambda}(1)) m_n \log n \log m_n.
      \end{equation}
  \item If $\gamma>0$ (only Fr\'echet):
      \begin{equation} \label{eq:elias-bound-frechet}
      \EXP[\ell(C_{\textsc{e}})] \leq (2+o_{\Lambda}(1)) m_n \log n.
      \end{equation}
\end{enumerate}
\end{lem}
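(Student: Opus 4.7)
The plan is to assemble the four bounds successively from the ingredients already in place, namely Equations \eqref{eq:total-bound}, \eqref{eq:EGUi-mi}, and \eqref{eq:ESUi-mi}. For the sum bound (i), I would substitute \eqref{eq:ESUi-mi} and $\rho$ times \eqref{eq:EGUi-mi} directly into \eqref{eq:total-bound}, treating the finite initial segment $i<i_0$ on which the asymptotic bounds do not yet hold as an $O_\Lambda(1)$ remainder. This produces
\[
  \EXP[\ell(C_{\textsc{e}})] \leq 2(1+\epsilon)\sum_{i=1}^{n-1} \frac{m_{i+1}\log m_{i+1}}{i+1} + C_{\epsilon,\gamma,\rho}\sum_{i=1}^{n-1} \frac{m_{i+1}}{i+1} + O_\Lambda(1).
\]
Since $\log m_i \to \infty$ by Lemma \ref{lem:implict:func} (iii), the second sum is $o$ of the first, so only the leading constant $2(1+\epsilon)$ survives; a reindexing $i+1\to i$ plus sending $\epsilon\to 0$ delivers \eqref{eq:elias-bound-sum}.

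For (ii), the passage from sum to integral is justified because $t\mapsto m(t)\log m(t)/t$ is regularly varying (of index $-1/(1+\gamma)$ for $\gamma>0$, index $-1$ for $\gamma=0$), hence eventually monotone, and a Riemann-type comparison then yields $\sum_{i=1}^n m_i\log m_i/i \sim \int_1^n m(t)\log m(t)/t\,\mathd t$ while preserving the $(2+o_\Lambda(1))$ constant. For the crude bound (iii), I would exploit that $i\mapsto m_i\log m_i$ is non-decreasing (by Lemma \ref{lem:implict:func}), so $m_i\log m_i\leq m_n\log m_n$ for all $i\leq n$, and for $n$ large enough $\sum_{i=1}^n 1/i \leq \log n$ (because harmonic sums grow as $\ln n = (\ln 2)\log n < \log n$); consequently the sum in \eqref{eq:elias-bound-sum} is bounded by $m_n\log m_n\log n$, and \eqref{eq:elias-bound-general} follows with no assumption on $\gamma$.

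The main obstacle and key refinement lie in the Fr\'echet bound (iv). Starting from the integral form \eqref{eq:elias-bound-integral} with $\gamma>0$, the function $m$ is regularly varying with strictly positive index $\gamma/(1+\gamma)$ by Lemma \ref{lem:implict:func} (iv), and $\log m$ is slowly varying, so $m\log m \in \textsc{rv}(\gamma/(1+\gamma))$ still with positive index. Karamata's theorem for positive-index integrals (Appendix \ref{app:regular-variation}) then yields
\[
  \int_1^n \frac{m(t)\log m(t)}{t}\,\mathd t \sim \frac{1+\gamma}{\gamma}\,m_n\log m_n.
\]
The crucial cancellation comes from Lemma \ref{lem:implict:func} (iv) again: because $m_n \sim n^{\gamma/(1+\gamma)} L_1^*(n)^{-1/(1+\gamma)}$ with $L_1^*$ slowly varying, one has $\log m_n \sim (\gamma/(1+\gamma))\log n$, so the two constants $(1+\gamma)/\gamma$ and $\gamma/(1+\gamma)$ cancel exactly, producing $m_n\log n$ and yielding \eqref{eq:elias-bound-frechet} with the stated $(2+o_\Lambda(1))$. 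The delicate point is precisely to verify that the slowly varying factor in $m$ does not contribute any extra $\log n$ factor to the leading constant; this is exactly why the sharp logarithmic asymptotic $\log m_n \sim (\gamma/(1+\gamma))\log n$, rather than merely $\log m_n = O(\log n)$, is essential.
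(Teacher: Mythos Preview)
Your overall strategy matches the paper's: parts (i) and (iv) are essentially identical to the paper's argument, and part (iii), while done differently (you bound the sum directly using monotonicity of $m_i\log m_i$ and the harmonic estimate, whereas the paper applies Jensen's inequality to the integral), is perfectly valid.

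There is, however, a genuine gap in your justification of (ii). You write that $t\mapsto m(t)\log m(t)/t$ is regularly varying ``hence eventually monotone,'' and then invoke a Riemann-type comparison. Regular variation does \emph{not} imply eventual monotonicity: slowly varying functions can oscillate indefinitely (see for instance the examples in \citep{BinGolTeu89}). The paper explicitly flags this difficulty, stating ``we are at first tempted to assume that $(m_i\ln m_i)/i$ is non-increasing. However, this is not strictly true.'' Its fix is different and specific to the construction of $m$: since $\overline F$ is non-increasing, one has $m_{i+1}\leq (1+1/i)m_i$, which yields
\[
\frac{m_{i+1}\log m_{i+1}}{i+1}\leq \frac{m_i\log m_i}{i}+\frac{m_i}{i},
\]
so the discrepancy between sum and integral is controlled by $\sum_i m_i/i$, which is of lower order than $\sum_i m_i\log m_i/i$. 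This is the missing ingredient in your argument. Alternatively, you could salvage the sum-to-integral comparison directly from the uniform convergence theorem for regularly varying functions (Potter's bounds), without invoking monotonicity at all; but as written, the step ``regularly varying, hence eventually monotone'' is false and needs to be replaced.
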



\appendices

\section{Regular variation} \label{app:regular-variation}
Useful material concerning regular variation and applications to the analysis of extreme value theory can be found in \citep{BinGolTeu89} and in the appendix of \citep{HaaFei06}. In the current appendix, we gather some of the basic definitions and properties which we need in the paper.

\begin{dfn}[\textsc{regular variation}]\label{dfn:regular-variation}
  A measurable function $h : \mathbb{R}_+\to \mathbb{R}_+$ is \emph{regularly varying} if and only if for all $x>0$,
  $$
    \lim_{t\to \infty} \frac{h(tx)}{h(t)}
  $$
  exists as a function of $x$.
\end{dfn}

If $h$ is regularly varying, there exists some $\alpha\in \mathbb{R}_+$, such that $ \lim_{t\to \infty} \frac{h(tx)}{h(t)}=x^\alpha$, $\alpha$ is called the regular variation index of $h$, this is abbreviated as $h\in \textsc{rv}(\alpha)$. If the regular variation index is $0$, i.e. $\forall x: \frac{h(tx)}{h(t)}\to 1$, then $h$ is said to be \emph{slowly varying}.

\begin{dfn}[\textsc{extended regular variation}]\label{dfns:regular-variation-extend}
  A measurable function $h : \mathbb{R}_+\to \mathbb{R}_+$ is said to have the \emph{extended regular variation} property if and only if for all $x,y>0$,
  $$
    \lim_{t\to \infty} \frac{h(tx)-h(t)}{h(ty)-h(t)}
  $$
  exists as a function of $x,y$.
\end{dfn}

If $h$ has the extended regular variation property, there exists some $\alpha\in \mathbb{R}_+$, such that $\lim_{t\to \infty} \frac{h(tx)-h(t)}{h(ty)-h(t)} =\frac{\int_1^x u^{\alpha-1}\mathrm{d}u}{\int_1^y u^{\alpha-1}\mathrm{d}u}$, $\alpha$ is called the extended regular variation index of $f$, this is abbreviated as $f\in \textsc{erv}(\alpha)$. If $f\in \textsc{erv}(\alpha)$, then there exists an auxiliary function $a\in \textsc{rv}(\alpha)$, such that
$$
  \lim_{t\to \infty} \frac{h(tx)-h(t)}{a(t)} = \int_1^x u^{\alpha-1} \mathrm{d}u\, .
$$
For $\alpha>0$, we have $\textsc{erv}(\alpha)=\textsc{rv}(\alpha)$. If $f\in \textsc{erv}(0)$ and $\lim_{t\to \infty} h(t) = \infty$,
then $f\in \textsc{rv}(0)$.

A fundamental result in regular variation theory asserts that the convergence in Definitions \ref{dfn:regular-variation} and \ref{dfns:regular-variation-extend} is locally uniform over compact sets. Potter's inequalities \citep[See][for a proof]{BinGolTeu89,HaaFei06} provide us with a useful quantitative formulation of this result.
\begin{lem}[\textsc{potter's inequalities}] \label{lemma:potters-inequalities}
  Let $f\in \textsc{rv}(\alpha)$ then for all $\epsilon,\delta>0$, there exists $t_0(\epsilon,\delta)$ such that for all $t,x>0$ such that $\min(t,tx)>t_0(\epsilon,\delta)$,
  $$
    \left| \frac{h(tx)}{h(t)}-x^\alpha\right| \leq \epsilon x^\alpha\max(x^\delta,x^{-\delta}) \, .
  $$
\end{lem}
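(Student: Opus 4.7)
The plan is to reduce to the case of a slowly varying function and then exploit Karamata's representation theorem. First, set $L(t) = h(t)/t^\alpha$; since $h\in\textsc{rv}(\alpha)$, a direct computation gives $L(tx)/L(t) = x^{-\alpha} h(tx)/h(t) \to 1$, so $L$ is slowly varying. Pulling out $x^\alpha$,
\[
  \frac{h(tx)}{h(t)} - x^\alpha = x^\alpha \left( \frac{L(tx)}{L(t)} - 1 \right),
\]
so it suffices to show that for every $\epsilon, \delta > 0$ there exists $t_0$ with $|L(tx)/L(t) - 1| \leq \epsilon \max(x^\delta, x^{-\delta})$ whenever $\min(t, tx) \geq t_0$.

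Second, I would invoke Karamata's representation: any positive slowly varying $L$ can be written as $L(t) = c(t) \exp\left( \int_a^t \eta(u)/u\, \mathd u \right)$ on some $[a,\infty)$, with measurable $c,\eta$ satisfying $c(u) \to c_0 \in (0,\infty)$ and $\eta(u) \to 0$. Taking logarithms gives
\[
  \ln \frac{L(tx)}{L(t)} = \ln \frac{c(tx)}{c(t)} + \int_t^{tx} \frac{\eta(u)}{u}\, \mathd u.
\]
Given $\epsilon, \delta > 0$, pick $\delta' \in (0,\delta)$ small (so that $\delta'/(\delta - \delta')$ can be made arbitrarily small) and then choose $t_0$ large enough that $|\eta(u)| \leq \delta'$ and $|\ln(c(u)/c_0)| \leq \epsilon_1$ for $u \geq t_0$, where $\epsilon_1$ is a free parameter to be tuned. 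Under $\min(t, tx) \geq t_0$, the integral is bounded in modulus by $\delta' |\ln x|$, so $|\ln(L(tx)/L(t))| \leq 2\epsilon_1 + \delta' |\ln x|$. Combining $|\mathe^y - 1| \leq |y| \mathe^{|y|}$ with the elementary inequality $|\ln x| \leq \max(x^\eta, x^{-\eta})/\eta$ (valid for any $\eta > 0$ and any $x > 0$, by $\mathe^y \geq 1+y$), taking $\eta = \delta - \delta'$, and using $\max(x^{\delta'}, x^{-\delta'}) \leq \max(x^\delta, x^{-\delta})$ delivers
\[
  \left| \frac{L(tx)}{L(t)} - 1 \right| \leq \Bigl(2\epsilon_1 + \frac{\delta'}{\delta - \delta'}\Bigr)(1 + o_{t_0}(1)) \max(x^\delta, x^{-\delta}),
\]
and the choice of $\delta'$ and $t_0$ makes the prefactor smaller than $\epsilon$.

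The main obstacle is conceptual rather than computational: the actual content lies in Karamata's representation, which encodes the rigidity of slowly varying functions and is what makes quantitative uniformity possible at all. Once that is available, the remaining work is bookkeeping. The one genuinely subtle technical point is that $x$ is not confined to a compact subset of $(0,\infty)$ (it need only satisfy $x \geq t_0/t$, which is arbitrarily small for large $t$), so a uniform constant would fail; the inflation factor $\max(x^\delta, x^{-\delta})$ is precisely what absorbs the blow-up as $x \to 0$ or $x \to \infty$, and is produced cleanly through the logarithmic bound above.
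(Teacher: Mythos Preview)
The paper does not actually prove this lemma: it is stated as a quoted result with the parenthetical ``[See][for a proof]'' pointing to \cite{BinGolTeu89} and \cite{HaaFei06}, so there is no in-paper argument to compare against. Your proof via reduction to a slowly varying function and Karamata's representation is the standard one (essentially what appears in those references), and the chain of elementary inequalities you use to convert the logarithmic bound into the $\max(x^\delta,x^{-\delta})$ form is correct; the only cosmetic issue is the ``$(1+o_{t_0}(1))$'' shorthand, which would be cleaner replaced by the explicit $\mathe^{2\epsilon_1}$ factor you actually obtain before tuning $\epsilon_1$ and $\delta'$.
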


Another core result is Karamata's integration theorem \citep[See][Theorem B.1.5]{HaaFei06}, which intuitively tells us that regularly varying functions can be integrated roughly like their defining monomials.
\begin{thm}[\textsc{karamata's integration}] \label{thm:karamata}
 Let $h$ be regularly varying with index $\alpha$. Then:
 \begin{itemize}
  \item[-] There exists $t_0>0$ such that $h(t)$ is positive and locally bounded for $t\geq t_0$.
  \item[-] If $\alpha\geq-1$ then:
    $$
      \lim_{t\to\infty} \frac{t h(t)}{\int_{t_0}^t h(s)\mathd s} = \alpha+1.
    $$
  \item[-] If $\alpha\leq-1$ and $\int h(s) \mathd s < \infty$, then:
    $$
      \lim_{t\to\infty} \frac{t h(t)}{\int_{t}^\infty h(s)\mathd s} = -\alpha-1.
    $$
 \end{itemize}
\end{thm}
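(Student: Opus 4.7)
The plan is to leverage the locally uniform convergence built into the definition of regular variation, quantified by Potter's inequalities (Lemma~\ref{lemma:potters-inequalities}), and to reduce the integral ratios to integrals of power functions via a scaling change of variables $s=tu$.

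For the first claim (local boundedness and eventual positivity), I would argue that since $h(tx)/h(t)\to x^{\alpha}>0$ for every $x>0$, picking $x=1$ already forces $h(t)\neq 0$ for $t$ large enough, and a standard consequence of regular variation is that $h$ is in fact measurable and locally bounded on some $[t_0,\infty)$; the quickest route is to apply Potter's inequalities at a fixed $t^*$ large, which gives a two-sided bound $h(t)\asymp h(t^*)t^{\alpha}$ up to a factor $(1\pm \epsilon)\max(\cdot,\cdot)$, hence boundedness on every compact subset of $[t_0,\infty)$.

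For the case $\alpha\geq -1$, I would write
\begin{equation*}
\frac{\int_{t_0}^{t}h(s)\,\mathd s}{t\,h(t)} \;=\; \int_{t_0/t}^{1}\frac{h(tu)}{h(t)}\,\mathd u .
\end{equation*}
By Lemma~\ref{lemma:potters-inequalities}, choosing any $\delta\in(0,\alpha+1)$ when $\alpha>-1$, the integrand is dominated uniformly in $t$ (for $t$ large enough) by $(1+\epsilon)u^{\alpha}\max(u^{\delta},u^{-\delta})$, which is integrable on $(0,1]$. Dominated convergence together with the pointwise limit $h(tu)/h(t)\to u^{\alpha}$ then yields $\int_{0}^{1}u^{\alpha}\,\mathd u = 1/(\alpha+1)$, which inverts to $\alpha+1$. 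The boundary case $\alpha=-1$ is the main obstacle, because the limiting integrand $u^{-1}$ fails to be integrable near $0$, and $th(t)/\!\int^{t}h$ must instead tend to $0$; I would handle it separately by noting that both numerator and denominator tend to infinity and using Potter's bounds to show the denominator grows strictly faster than the numerator (e.g.\ bounding $\int_{t_0}^{t}h(s)\mathd s\geq \int_{t/2}^{t}h(s)\mathd s$ from below by a multiple of $t\,h(t)$ via the near-uniform convergence of $h(tu)/h(t)$ on $[1/2,1]$, and iterating dyadically).

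For the case $\alpha\leq -1$ with $\int^{\infty}h<\infty$, the same change of variables gives
\begin{equation*}
\frac{\int_{t}^{\infty}h(s)\,\mathd s}{t\,h(t)} \;=\; \int_{1}^{\infty}\frac{h(tu)}{h(t)}\,\mathd u ,
\end{equation*}
and for $\alpha<-1$, Potter's inequalities with $\delta\in(0,-\alpha-1)$ again produce an integrable dominator $(1+\epsilon)u^{\alpha}\max(u^{\delta},u^{-\delta})$ on $[1,\infty)$; dominated convergence delivers $\int_{1}^{\infty}u^{\alpha}\,\mathd u = -1/(\alpha+1)$, giving the announced limit $-\alpha-1$. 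The critical boundary case $\alpha=-1$ is again the delicate point and is treated by a direct monotone/dyadic comparison rather than by dominated convergence. Throughout, the two recurring technical checks are (i) that Potter's inequalities are valid only for $t,tu$ both beyond $t_{0}(\epsilon,\delta)$, so one must split the integration domain into a ``good'' part where the bound applies and a small residual part which is controlled by local boundedness, and (ii) that $\delta$ must be strictly smaller than $|\alpha+1|$ to ensure integrability of the dominator — this is precisely what breaks down at $\alpha=-1$ and is the main obstacle of the proof.
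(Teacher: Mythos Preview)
The paper does not prove this theorem: it is stated in the appendix on regular variation as a classical result and is simply attributed to \citet[Theorem B.1.5]{HaaFei06}, with no argument given. So there is no ``paper's own proof'' to compare your proposal against.

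For what it is worth, your sketch is the standard route to Karamata's theorem (change of variables $s=tu$, Potter's bounds as an integrable dominator, dominated convergence for $\alpha\neq -1$, and a separate dyadic/splitting argument at the boundary $\alpha=-1$), and it is essentially the approach one finds in \citet{BinGolTeu89} and \citet{HaaFei06}. The caveats you flag---that Potter's inequalities only hold once both $t$ and $tu$ exceed some threshold, so a residual piece must be handled by local boundedness, and that the dominator fails to be integrable at $\alpha=-1$---are exactly the right points of care.
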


A concept that proves very useful when relating various regularly varying functions is that of De Bruijn conjugacy, which is effectively a notion of asymptotic inversion of slowly varying functions.
\begin{thm}[\textsc{de bruijn conjugacy}]\citep[Proposition 1.5.15]{BinGolTeu89} \label{the:debruijn}
  Let $L\in \textsc{rv}(0)$, then there exists a function $L^*\in \textsc{rv}(0)$ such that $L^*(x)L(xL^*(x))\rightarrow 1$ and $L(x)L^*(xL(x))\rightarrow 1$ as $x \to \infty$. Any function satisfying these two relations is asymptotically equivalent to $L^*$. The functions $(L,L^*)$ are said to form a pair of De Bruijn conjugates.
\end{thm}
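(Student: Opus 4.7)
The plan is to reduce the statement to a standard fact about asymptotic inverses of regularly varying functions of positive index. Define $f(x)=xL(x)$; since $L\in\textsc{rv}(0)$ and $L$ is eventually positive, $f\in\textsc{rv}(1)$, and by Karamata's theorem (Theorem \ref{thm:karamata}) $f(x)\to\infty$ as $x\to\infty$. The two conjugacy relations are equivalent to saying that $g(x):=xL^*(x)$ is a two-sided asymptotic inverse of $f$: indeed $f(g(x))/x=L^*(x)L(xL^*(x))$ and $g(f(x))/x=L^*(xL(x))L(x)$. So existence of $L^*$ amounts to exhibiting an asymptotic inverse $g\in\textsc{rv}(1)$ of $f$, and uniqueness of $L^*$ up to asymptotic equivalence amounts to uniqueness of such an asymptotic inverse.

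For existence, I would first smooth $f$ via Karamata's representation: writing $L(x)=c(x)\exp\bigl(\int_1^x\epsilon(u)/u\,\mathd u\bigr)$ with $c(x)\to c>0$ and $\epsilon(x)\to 0$, and replacing $c(x)$ by $c$, one obtains $\widetilde L\sim L$ with $\widetilde L$ continuously differentiable and positive beyond some $A$. Then $\widetilde f(x)=x\widetilde L(x)$ satisfies $x\widetilde f'(x)/\widetilde f(x)=1+\epsilon(x)\to 1$, so $\widetilde f$ is ultimately strictly increasing to infinity and has a genuine inverse $\widetilde g$. To see $\widetilde g\in\textsc{rv}(1)$, fix $\lambda>0$ and set $y=\widetilde g(x)$, so that $\widetilde g(\lambda x)/\widetilde g(x)=\widetilde g(\lambda\widetilde f(y))/y$; since $\widetilde f(\lambda y)/\widetilde f(y)\to\lambda$ (regular variation of index $1$), monotonicity of $\widetilde f$ and the uniform-convergence theorem let one invert to conclude $\widetilde g(\lambda x)/\widetilde g(x)\to \lambda$. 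Writing $\widetilde g(x)=xL^*(x)$ yields $L^*\in\textsc{rv}(0)$, and the identities $\widetilde f(\widetilde g(x))=x$ and $\widetilde g(\widetilde f(x))=x$ combined with $\widetilde L\sim L$ translate into the two claimed limit relations.

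For uniqueness, suppose $L_1^*,L_2^*\in\textsc{rv}(0)$ both satisfy $L_i^*(x)L(xL_i^*(x))\to 1$. Setting $g_i(x)=xL_i^*(x)$, one has $f(g_i(x))/x\to 1$, hence $f(g_1(x))/f(g_2(x))\to 1$. Let $r(x)=g_1(x)/g_2(x)=L_1^*(x)/L_2^*(x)$. Using Potter's inequalities (Lemma \ref{lemma:potters-inequalities}) applied to $f\in\textsc{rv}(1)$, the ratio $f(r(x)g_2(x))/f(g_2(x))$ is bounded from above and below by $(1\pm\epsilon)r(x)^{1\pm\delta}$ once $g_2(x)$ is large; since this ratio converges to $1$, one deduces that $r(x)$ stays in a compact subset of $(0,\infty)$. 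Then along any convergent subsequence $r(x_k)\to r_\infty$, the uniform convergence theorem for $\textsc{rv}(1)$ functions on compacta of $(0,\infty)$ gives $f(r(x_k)g_2(x_k))/f(g_2(x_k))\to r_\infty$, forcing $r_\infty=1$; hence $r(x)\to 1$, i.e., $L_1^*\sim L_2^*$.

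The main obstacle is the uniqueness step: a priori the slowly varying ratio $L_1^*/L_2^*$ could oscillate or escape to $0$ or $\infty$ subsequentially. Ruling this out requires the quantitative two-sided control provided by Potter's inequalities, not just pointwise regular variation. Once $r$ is confined to a compact subset of $(0,\infty)$, the uniform-convergence theorem for regularly varying functions closes the argument cleanly.
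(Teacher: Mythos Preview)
The paper does not prove this theorem at all: it is merely quoted from \citep[Proposition 1.5.15]{BinGolTeu89} as background in Appendix~\ref{app:regular-variation}, with no accompanying argument. So there is nothing in the paper to compare your proof against.

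That said, your sketch is essentially the standard proof found in Bingham--Goldie--Teugels and is correct in outline. The reduction to asymptotic inversion of $f(x)=xL(x)\in\textsc{rv}(1)$ is the right viewpoint, the smoothing via Karamata's representation to obtain a genuine monotone inverse is the usual construction, and your uniqueness argument via Potter bounds plus the uniform-convergence theorem is sound. Two small cosmetic remarks: invoking Theorem~\ref{thm:karamata} to conclude $f(x)\to\infty$ is overkill (any $\textsc{rv}(1)$ function tends to infinity, which is immediate from the representation theorem or even from the definition); and the form of Potter's inequality you use, $(1\pm\epsilon)r^{1\pm\delta}$, is the more common two-sided multiplicative form rather than the additive form stated in Lemma~\ref{lemma:potters-inequalities} of the paper, but the two are equivalent and your conclusion that $r(x)$ stays in a compact subset of $(0,\infty)$ follows either way.
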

\section{Properties of the exact and empirical thresholds} \label{app:threshold-properties}
\label{sec:thresholds}
At the heart of the upper bound on the redundancy of the \textsc{etac}-code derived in Section \ref{sec:etac-analysis} is an understanding of the connection between the empirical threshold $M_n$ constructed from data coming from a specific distribution in the envelope class and the exact threshold $m_n$ induced by the envelope distribution. It turns out that these results can be derived without particular assumptions on the distributions, and we present them in this appendix. After this, we tackle the task of relating $m_n$ to the number of distinct symbols $K_n$, which is the quantity that governs the minimax lower bound presented in Section \ref{sec:minimax}. There, we use more closely the max-stability property of the envelope distribution.

\subsection{Distribution-free properties}

When analyzing the \textsc{ac}-code of \citep{Bon11}, one needs to work only with extreme order statistics, that is statistics of constant order, and in particular the threshold there is the maximum (order $1$). The fact that the threshold $M_n$ of the \textsc{etac}-code is effectively equal to $X_{M_n,n}$, which is an \emph{intermediate order statistic} (that is $M_n \to \infty$ while $M_n/n \to 0 $ in probability) rather than an extreme may seem to add difficulty to the analysis of the code. Nevertheless, the fluctuations of $M_n$ around its mean value $\EXP M_n $ can be bounded in a surprisingly simple way. Moreover, this concentration result does not depend on any assumption regarding the distribution of the sample. The fluctuation bounds only depend on the fact that $M_n$ is a function of independent random variables that does not depend too much on any of them.

When working with random variables that can be expressed as functions of other random variables, self-boundedness is a property that can simplify the derivation of moments bounds and concentration properties. We give here the basic definition of self-bounded random variables \citep*[refer to][Chapters 3 and 6]{boluma13}.

\begin{dfn}[\textsc{self-boundedness}] \label{dfn:self-bounded}
A non-negative random variable $Z=g(X_1,\ldots,X_n)$, that is a function of $n$ other variables $X_1,\ldots,X_n$, is called \emph{self-bounded} if there exists a collection of measurable functions $(g_i)_{i\leq n}$, such that letting $Z_i= g_i(X_1,\ldots,X_{i-1},X_{i+1},\ldots,X_n)$, we have
\begin{displaymath}
  \begin{array}{ll}
    0\leq Z - Z_i \leq 1 & \text{for each } i \leq n \\
    \sum_{i=1}^n \left( Z-Z_i \right)\leq Z \, . &
  \end{array}
\end{displaymath}
\end{dfn}

The next lemma establishes self-boundedness and uses it to assert that whatever the sampling distribution, $M_n$ has ``sub-Poissonian'' tails.
\begin{lem}\label{lem:mnmeetsefrons-stein}
Let $X_{1,n}\geq \ldots \geq X_{n,n}$ be the order statistics of an i.i.d. sample, let $M_n=\min(n,\inf\{k\colon X_{k,n}\leq k\})$, then:
\begin{enumerate}[(i)]
  \item\label{item:self-bounded} $M_n$ is a self-bounded random variable, as in Definition \ref{dfn:self-bounded}.
  \item\label{item:moments} We have the moment bounds:
  $$
    \operatorname{var} (M_n) \leq \mathbb{E} M_n
  $$
  and for all $\lambda \in \mathbb{R},$
  $$
    \log \EXP \left[ \mathe^{\lambda (M_n-\EXP M_n)}\right] \leq \EXP M_n \left( \mathe^\lambda -\lambda -1\right) \, .
  $$
  \item\label{item:concentration} For all $t>0$, we have:
  $$
    \mathbb{P} \left\{ M_n-\EXP M_n\geq t\right\} \leq \exp\left(- \frac{t^2}{2(\mathbb{E}M_n +t/3)} \right) \, ,
  $$
  and
  $$
    \mathbb{P} \left\{ M_n-\EXP M_n\leq -t \right\} \leq \exp\left(- \frac{t^2}{2(\mathbb{E}M_n)} \right) \, .
  $$
 \end{enumerate}
\end{lem}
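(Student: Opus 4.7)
The plan is to establish part (i) by constructing leave-one-out variants $M_n^{(i)}$, then invoke standard results for self-bounded random variables from \citep{boluma13} to obtain (ii), and finally derive (iii) from (ii) by the Chernoff method.

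The key reformulation I will use throughout is $M_n = \min\{k\geq 1 : N_k \leq k-1\}\wedge n$, where $N_k := |\{j\leq n : X_j > k\}|$, since $X_{k,n}\leq k$ if and only if $N_k\leq k-1$. Let $M_n^{(i)}$ denote the same statistic applied to the $(n-1)$-sample obtained by deleting $X_i$, and set $N_k^{(i)} := N_k - \IND\{X_i > k\}$. The bounds $0\leq M_n - M_n^{(i)}\leq 1$ follow immediately from the pair $N_k^{(i)}\leq N_k\leq N_k^{(i)}+1$: the first inequality gives $M_n^{(i)}\leq M_n$, and the second, combined with $N_{k+1}\leq N_k$, gives $M_n\leq M_n^{(i)}+1$.

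The crux is the sum condition $\sum_i(M_n - M_n^{(i)})\leq M_n$. Let $I := \{i : M_n^{(i)} = M_n-1\}$; assuming $M_n\geq 2$, each $i\in I$ satisfies $N_{M_n-1}^{(i)}\leq M_n-2$. Combined with the defining constraint $N_{M_n-1}\geq M_n-1$ (from $M_n-1 < M_n$), this forces both $\IND\{X_i > M_n-1\} = 1$ and $N_{M_n-1} = M_n-1$. Hence $I\subseteq \{i : X_i\geq M_n\}$, a set of cardinality exactly $N_{M_n-1} = M_n-1$, so $\sum_i(M_n - M_n^{(i)}) = |I|\leq M_n-1 \leq M_n$. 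The boundary cases $M_n=1$ (forcing $X_j=1$ for all $j$, hence all $M_n^{(i)}=1$) and $M_n=n$ (where $|I|\leq n$ is automatic) are verified directly.

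Given (i), part (ii) is now standard. The Efron-Stein inequality yields $\operatorname{var}(M_n)\leq \EXP\bigl[\sum_i(M_n - M_n^{(i)})^2\bigr] \leq \EXP\bigl[\sum_i(M_n - M_n^{(i)})\bigr] \leq \EXP M_n$, and the modified logarithmic Sobolev inequality applied to self-bounded functionals \citep{boluma13} gives the sub-Poissonian log-MGF bound $\log \EXP\bigl[e^{\lambda(M_n - \EXP M_n)}\bigr] \leq \EXP M_n\,(e^\lambda - \lambda - 1)$ for all $\lambda\in\mathbb{R}$. Part (iii) then follows by optimizing the Chernoff bound with $v = \EXP M_n$: for the upper tail, Bennett's inequality together with the elementary bound $e^\lambda - \lambda - 1\leq \lambda^2/(2(1-\lambda/3))$ gives the stated Bernstein form $\exp(-t^2/(2(v+t/3)))$; for the lower tail, $e^{-\lambda} + \lambda - 1\leq \lambda^2/2$ (valid for $\lambda\geq 0$) yields $\exp(-t^2/(2v))$. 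The only delicate step is the sum condition in (i); everything else is direct invocation of the BLM toolbox.
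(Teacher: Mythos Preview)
Your proof is correct and follows the same strategy as the paper: verify self-boundedness and then invoke the concentration toolbox of \citep{boluma13} (the paper simply cites Corollary~3.7 and Theorem~6.12 there for (ii) and (iii)). The only difference is the choice of witness $Z_i$: the paper takes $Z_i=\inf_{x'_i} M_n(X_1,\ldots,x'_i,\ldots,X_n)$ and argues by comparing $X_i$ to $X_{M_n,n}$, whereas you take the leave-one-out statistic on the $(n{-}1)$-sample and argue via the counts $N_k$; both are legitimate choices in Definition~\ref{dfn:self-bounded} and both yield $\sum_i(Z-Z_i)\leq M_n$.
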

\begin{proof}[Proof of Lemma \ref{lem:mnmeetsefrons-stein}]
Let $Z= M_n(X_1,\ldots,X_n)$, and for each $i=1,\ldots,n$, let
\begin{displaymath}
Z_i = \inf\left\{ M_n(X_1,\ldots,X_{i-1},x'_i,X_{i+1},\ldots,X_n)~:~x'_i \in \mathbb{N}\right\} \, .
\end{displaymath}
As $M_n$ is non-decreasing with respect to the product order on the sample, in order to have
$$
  M_n(X_1,\ldots,X_{i-1},x'_i,X_{i+1},\ldots,X_n)\leq M_n\, ,
$$
it is necessary to choose $x'_i\leq X_i.$

If $X_i \leq X_{M_n ,n}$, choosing $x'_i$ smaller than $X_i$ does not modify the $M_n$ largest order statistics
and the value of $M_n$.

If $X_i > X_{M_n ,n}$ by choosing $x'_i<X_{M_n,n}$, we obtain
\begin{eqnarray*}
 \lefteqn{ M_n(X_1,\ldots,X_{i-1},x'_i,X_{i+1},\ldots,X_n)}\\
&=&
    \begin{cases} M_n-1 &
      \text{ if } X_{M_n,n}\leq M_n-1 \\ 
      M_n & \text{ otherwise.}
    \end{cases}
\end{eqnarray*}
Hence $0\leq Z -Z_i \leq 1$ for all $1\leq i \leq n$, and $\sum_{i=1}^n (Z-Z_i)\leq M_n$.
This establishes \eqref{item:self-bounded}. Then \eqref{item:moments} and \eqref{item:concentration} follow from Corollary 3.7 and Theorem 6.12 in \citep*{boluma13}.
\end{proof}

As usual, concentration inequalities need to be complemented by bounds on expectations. Fortunately, the expected value of $M_n$ can again be related to $m_n$ without any distributional assumptions.
\begin{lem}\label{lem:trick:elizabeth}
Let $G$ be a source that belongs to an envelope class $\Lambda$ defined by a smoothed distribution $F$. Recall the definition of the exact threshold sequence $m_n$ as the solution of $\overline{F}(x)=x/n$ and of the threshold sequence $M_n =\min \left( n, \left\{k\;:\;X_{k,n} \leq k \right\}\right)$, where $X_{1,n}\geq X_{2,n} \geq \ldots \geq X_{n,n}$ are the order statistics of an  $n$-length sequence from $G$. Then, for all $n$, we have:
\begin{displaymath}
\EXP M_n \leq m_n + 3\sqrt{m_n} + 3\, .
\end{displaymath}
\end{lem}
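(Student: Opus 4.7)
The plan is to write $\EXP M_n$ as a tail sum, relate $\{M_n > k\}$ to a binomial tail event that we can control via the envelope, and then split the sum at the level $\lceil m_n \rceil$.

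First I would observe that by the definition of the order statistics, $X_{k,n} \leq k$ is equivalent to saying that at most $k-1$ sampled symbols strictly exceed $k$. Writing $N_k := \sum_{i=1}^n \IND\{X_i > k\}$, we therefore have the inclusion $\{M_n > k\} \subseteq \{N_k \geq k\}$, since $M_n > k$ forces $X_{k,n} > k$. Under $G$, the variable $N_k$ is Binomial$(n, \overline{G}(k))$, and the envelope assumption $G \in \Lambda(f)$ gives $\overline{G}(k) \leq \overline{F}(k)$, so $\mu_k := \EXP N_k \leq n\overline{F}(k)$. Crucially, by the defining relation $\overline{F}(m_n) = m_n/n$ and the monotonicity of $\overline{F}$, whenever $k \geq m_n$ we have $\mu_k \leq m_n$.

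Next I would use the identity $\EXP M_n = \sum_{k=0}^{\infty} \PROB(M_n > k)$ and split the sum at $\lceil m_n \rceil$. The terms with $k < \lceil m_n \rceil$ contribute at most $\lceil m_n \rceil \leq m_n + 1$. For $k = \lceil m_n \rceil + j$ with $j \geq 1$, Chebyshev's inequality together with $\operatorname{Var}(N_k) \leq \mu_k \leq m_n$ and the gap $k - \mu_k \geq k - m_n \geq j$ yield
\begin{displaymath}
  \PROB(M_n > k) \;\leq\; \PROB(N_k \geq k) \;\leq\; \frac{\mu_k}{(k-\mu_k)^2} \;\leq\; \frac{m_n}{j^2}.
\end{displaymath}
The $j=0$ term contributes at most $1$.

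Finally, I would split the tail sum $\sum_{j \geq 1} \min(1, m_n/j^2)$ at $J := \lfloor \sqrt{m_n}\rfloor$: the first $J$ terms contribute at most $\sqrt{m_n}$ using the trivial bound, and the remaining terms are bounded by $m_n \sum_{j > J} j^{-2} \leq m_n/J$, which is $\leq 2\sqrt{m_n}$ as soon as $m_n$ is moderately large (using $\sqrt{m_n}-1 \geq \sqrt{m_n}/2$). Collecting everything gives $\EXP M_n \leq m_n + 3\sqrt{m_n} + 3$, where the extra additive slack absorbs the ceiling rounding, the $j=0$ term, and the small-$m_n$ regime (which can be checked by hand since $M_n \leq n$ is always available as a fallback). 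The one technical point worth care is verifying that the bound $m_n/\lfloor\sqrt{m_n}\rfloor \leq 2\sqrt{m_n}$ holds uniformly; for very small $m_n$ one simply uses $\EXP M_n \leq n$ together with the fact that $m_n$ is of constant order forces the claimed inequality trivially for an appropriate absolute constant.
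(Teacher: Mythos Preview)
Your argument is correct and takes a genuinely different route from the paper's. The paper proceeds indirectly: it introduces the source-level analogue $m'_n = \min\{k : \overline{G}(k) \leq k/n\}$, sandwiches the \emph{median} of $M_n$ between $m'_n \pm (2\sqrt{m'_n}+1)$ via Bernstein's inequality for the binomial tails $n\overline{G}_n$, then transfers from median to mean using the self-boundedness of $M_n$ (Lemma~\ref{lem:mnmeetsefrons-stein}) through $|\MED[M_n]-\EXP M_n|\leq\sqrt{\operatorname{var}(M_n)}\leq\sqrt{\EXP M_n}$, and finally invokes $m'_n\leq m_n$. Your approach is more elementary and direct: the tail-sum identity for $\EXP M_n$, the inclusion $\{M_n>k\}\subseteq\{N_k\geq k\}$, and plain Chebyshev on the binomials $N_k$ already yield the upper bound, without ever touching the median, Bernstein, or the self-boundedness machinery. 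What the paper's route buys is a two-sided estimate $m'_n - 3\sqrt{m'_n}-2 \leq \EXP M_n \leq m'_n + 3\sqrt{m'_n}+3$, though only the upper side is actually used downstream.

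One presentational point: your handling of the small-$m_n$ regime is a little loose. The fallback ``$\EXP M_n\leq n$'' does not by itself deliver the stated constants, and ``$m_n$ of constant order'' is not what closes the case. In fact your own tail-sum bound already works uniformly if you track it: for $1\leq m_n<4$ one has $J=\lfloor\sqrt{m_n}\rfloor=1$, so $\sum_{j\geq 1}\min(1,m_n/j^2)\leq 1+m_n$, giving $\EXP M_n\leq (m_n+1)+1+(1+m_n)=2m_n+3\leq m_n+3\sqrt{m_n}+3$ since $m_n\leq 9$; and for $m_n<1$ the whole tail is at most $m_n\pi^2/6<2m_n$, which again fits. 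It would be cleaner to say this explicitly rather than gesture at a fallback.
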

\begin{proof}[Proof of Lemma \ref{lem:trick:elizabeth}]

We prove a stronger, two-sided, inequality involving the analog to $m_n$ defined directly for $G$ instead of $F$:
$$
  m'_n = \min\left\{ k \colon \overline{G}(k)\leq k/n \right\}.
$$
In particular, we show that:
\begin{displaymath}
 m'_n - 3 \sqrt{m'_n} - 2\leq \EXP M_n \leq m'_n+3\sqrt{m'_n}+3.
\end{displaymath}
The assertion of the lemma then follows from the fact that $m'_n \leq m_n$, which is a direct consequence of the fact that if $G$ is in the envelope class defined by $F$, then $\overline{G} \leq \overline{F}$.

We compare the expectations of $M_n$ and $m'_n$ with the following steps. Let $\MED[M_n]$ be a median of the distribution of $M_n$,  that is $\MED[M_n]$ satisfies $\PROB \{M_n \leq \MED[M_n] \}\geq 1/2$ and $\PROB\{ M_n \geq \MED[M_n]\}\geq 1/2$. If we establish concentration bounds to quantify $\PROB\{ M_n \leq a \}$ and $\PROB\{ M_n \geq b \}$ for suitable $a,b>0$ in the tail of $M_n$, we may choose $a$ and $b$ such that these probabilities drop below $1/2$. We can then deduce that:

$$
  a \leq \MED[M_n] \leq b
$$
To move from the median to the mean, note that by the L\'evy-Mallows inequality and point \eqref{item:moments} of Lemma \ref{lem:mnmeetsefrons-stein}, we have
$$
  |\MED[M_n] -\EXP M_n|\leq \sqrt{\operatorname{var}(M_n)}\leq \sqrt{\EXP M_n},
$$
from which we can directly deduce
\begin{eqnarray*}
\lefteqn{\MED[M_n] - \sqrt{\MED[M_n]}} \\
&\leq &\EXP M_n \\
&\leq &\MED[M_n] + 1 + \sqrt{\MED[M_n]},
\end{eqnarray*}
and thus
\begin{equation} \label{eq:mean-median-bounds}
a - \sqrt{b} \leq \EXP M_n \leq b + \sqrt{b} + 1.
\end{equation}

It remains to establish the concentration bounds, and to obtain explicit values for $a$ and $b$. For this, we compare the events of interest to binomial tails, namely the empirical tail count:
$$
  n \overline{G}_n(x)=\sum_{i=1}^{n} \IND\{X_i > x\}.
$$

Note the following properties:
\begin{enumerate}[(i)]
   \item $n \overline{G}_n$ is a non-increasing random function of $x$ and, for fixed $x$, has a binomial distribution of parameters $n$ and $\overline{G}(x)$.
   \item If $\overline{G}(x)\leq q$ and $Z\sim \mathrm{Binomial}(n,q)$, then $\PROB\{ n\overline{G}_n(x) \geq b \} \leq \PROB\{Z \geq b\}$.
   \item If $\overline{G}(x)\geq p$ and $Z\sim \mathrm{Binomial}(n,p)$, then $\PROB\{ n\overline{G}_n(x) \leq a \} \leq \PROB\{Z \leq a\}$.
   \item We have $n\overline{G}_n(M_n-1) \geq M_n-1$. It follows that when $x\leq M_n-1$ we have $n\overline{G}_n(x) \geq x$.
   \item We have $n\overline{G}_n(M_n)\leq M_n$. It follows that, when $x\geq M_n$ we have $n\overline{G}_n(x) \leq x$.
\end{enumerate}

The first three properties are evident. The last two make it clear that $M_n$ is effectively an empirical version of $m'_n$.  To establish (iv) $n\overline{G}_n(M_n-1)\geq M_n-1$: all statistics from $X_{1,n}$ to $X_{M_n-1,n}$ are no less than $X_{M_n-1,n}$; the latter is itself greater than $M_n-1$, by the definition of $M_n$. To establish (v) $n\overline{G}_n(M_n)\leq M_n$: no order statistic beyond $X_{M_n,n}$ can be strictly greater than $X_{M_n,n}$; the latter itself either is no greater than $M_n$ or is so but $M_n=n$, by the definition of $M_n$; in both cases the claim remains valid. 

Let $t>0$. When $M_n \geq m'_n +1+t$, we have that $M_n - 1 \geq m'_n + t$ (this is why we need the extra $+1$). It follows from (iv) that $n\overline{G}_n(m'_n + t) \geq m'_n + t$. Then by the non-increasing property we also have $n \overline{G}_n(m'_n)\leq m'_n+t$. By the definition of $m'_n$, we have $\overline{G}(m'_n)\leq m'_n/n$. Let $Z_1\sim \mathrm{Binomial}(n,m'_n/n)$, then by (ii) we have:
\begin{eqnarray*}
\PROB\{M_n \geq m'_n +1+t \} & \leq &\PROB\{n\overline{G}_n(m'_n) \geq m'_n + t\}\\
 &\leq& \PROB\{Z_1\geq m'_n + t\}.  
\end{eqnarray*}
On the other hand, when $M_n \leq m'_n -1-t$, it follows from (v) that $n\overline{G}_n(m'_n-1-t) \leq m'_n-1-t$. (The case $m'_n=1$ becomes pathological in what follows, but since it allows for any choice of $t$ to yield a vacuous lower bound of the median, we ignore it here.) By the non-increasing property, we also have that $n\overline{G}_n(m'_n-1) \leq m'_n-1-t$. By the definition of $m'_n$, we have that $\overline{G}(m'_n-1) \geq (m'_n-1)/n$ (this is where we need the extra $-1$). This time, let $Z_2\sim \mathrm{Binomial}(n,(m'_n-1)/n)$, then by (iii) we have:
\begin{eqnarray*}
  \PROB\{M_n \leq m'_n -1-t\} &\leq &\PROB\{n\overline{G}_n(m'_n-1) \leq m'_n-1-t\}\\
 &\leq &\PROB\{Z_2\leq m'_n-1-t\}.
\end{eqnarray*}
We recall now Bernstein's inequalities to bound the tail of binomial distributions. In particular, we have:
$$
\PROB\{ Z_1 \geq m'_n + t \} \leq \exp \left[ -\frac{3}{8} \left( t \wedge \frac{t^2}{m'_n} \right) \right],
$$
and
$$
\PROB\{ Z_2 \leq m'_n -1-t\} \leq \exp \left( -\frac{1}{2} \frac{(t+1)^2}{m'_n-1} \right).
$$

It is then easy to verify that the choice of $t=2\sqrt{m'_n}$ sets both of these bounds below the desired level of $\frac{1}{2}$ for all values of $m'_n$. Therefore, we can bound $\MED[M_n]$ by $a=m'_n-1-t=m'_n-2\sqrt{m'_n}-1$ from below and by $b=m'_n+1+t= m'_n+2\sqrt{m'_n}+1=(\sqrt{m'_n}+1)^2$ from above, and use these quantities in Equation \eqref{eq:mean-median-bounds} to bound $\EXP[M_n]$. The constants claimed in the lemma follow immediately.
\end{proof}

\subsection{Distribution-dependent properties}

We now describe a general connection between $M_n$ and the number of distinct symbols $K_n$, that is the expected size of the empirical alphabet. From the very definition of $M_n$, if $M_n<n$, we have $K_n\leq 2 M_n$. Indeed, as $X_{M_n,n}<M_n$ there are no more than $M_n$ distinct symbols not larger than $X_{M_n,n}$ and there are at most $M_n$ distinct symbols larger than $X_{M_n,n}$. Hence, whatever the sampling distribution,
\begin{displaymath}
 \EXP K_n \leq 2 \EXP M_n \, .
\end{displaymath}

As we use $\EXP K_n$ in the lower bound on minimax redundancy, we actually need an inequality in the other direction. We now establish this under distributional assumptions.

\begin{lem} \label{thm:Kn:Mn}
Let a distribution $G$ on $\Np$ belong to some $\textsc{mda}(\gamma), \gamma>0$, with a probability mass function that is ultimately monotonically non-increasing. Let $(m'_n)_n$ be defined as $m'_n=\min\left\{ k\in \Np : \overline{G}(k)\leq k/n \right\}$. Then there exists a constant $\kappa'_\gamma$ and some $n_0$ (that may depend on $G$), such that for all $n\geq n_0$, the expected number of distinct symbols in a sample from $G$ satisfies
\begin{displaymath}
 \kappa'_\gamma m'_n \leq \EXP K_n \, .
\end{displaymath}
\end{lem}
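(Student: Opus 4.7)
The plan is to lower bound $\EXP K_n$ by the number of symbols whose pmf exceeds $1/n$, and then use the ultimate monotonicity of the pmf $g$ of $G$ together with the regular variation of $\overline{G}$ to show that this count is at least a constant fraction of $m'_n$.

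First, the elementary inequality $1-(1-p)^n\ge (1-e^{-1})\,\mathbf{1}\{p\ge 1/n\}$ applied term by term gives
\[
\EXP K_n \;=\; \sum_{j\ge 1}\bigl(1-(1-g(j))^n\bigr) \;\ge\; (1-e^{-1})\,|S_n|,\qquad
S_n:=\bigl\{j\in\Np:\, g(j)\ge 1/n\bigr\}.
\]
It therefore suffices to produce a constant $c_\gamma>0$ and an $n_0$ such that $|S_n|\ge c_\gamma\, m'_n$ for all $n\ge n_0$. Note that $m'_n\to\infty$ since $\gamma>0$ forces unbounded support and $\overline{G}(k)>0$ for every fixed $k$.

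Second, let $j_0$ be any integer beyond which $g$ is non-increasing. For $j\ge j_0$, monotonicity on the dyadic block $\{j+1,\ldots,2j\}$ yields
\[
\overline{G}(j)-\overline{G}(2j) \;=\; \sum_{k=j+1}^{2j} g(k) \;\le\; j\,g(j+1),
\qquad\text{hence}\qquad
g(j+1)\;\ge\;\frac{\overline{G}(j)-\overline{G}(2j)}{j}.
\]
Since $\overline{G}\in\textsc{rv}(-1/\gamma)$, Potter's inequalities (Lemma~\ref{lemma:potters-inequalities}) combined with the defining properties of $m'_n$ (which imply $\overline{G}(m'_n)=(1+o(1))\,m'_n/n$) give, for any fixed $\lambda\in(0,1)$ and $j=\lfloor\lambda m'_n\rfloor$,
\[
\overline{G}(j)-\overline{G}(2j) \;=\; \bigl(\lambda^{-1/\gamma}(1-2^{-1/\gamma})+o(1)\bigr)\,\frac{m'_n}{n},
\]
and therefore
\[
g\bigl(\lfloor\lambda m'_n\rfloor+1\bigr) \;\ge\; \bigl((1-2^{-1/\gamma})\lambda^{-1/\gamma-1}+o(1)\bigr)\,\frac{1}{n}.
\]

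Third, I would set $\lambda := \lambda_\gamma/2$ with $\lambda_\gamma := (1-2^{-1/\gamma})^{\gamma/(\gamma+1)}\in(0,1)$, so that the leading coefficient equals $(1-2^{-1/\gamma})\lambda^{-1/\gamma-1}=2^{1+1/\gamma}>2$. Then $g(\lfloor\lambda m'_n\rfloor+1)\ge 1/n$ for every $n$ past some $n_0$ depending on $G$, and by monotonicity of $g$ on $[j_0,\infty)$ so does $g(k)$ for every integer $k\in[j_0,\lfloor\lambda m'_n\rfloor+1]$. Since $m'_n\to\infty$, this interval has cardinality at least $(\lambda/2)\,m'_n$ for large $n$, which yields
\[
\EXP K_n \;\ge\; (1-e^{-1})\,\frac{\lambda}{2}\,m'_n \;=\; \kappa'_\gamma\, m'_n,\qquad
\kappa'_\gamma \;:=\; \frac{1-e^{-1}}{4}\,(1-2^{-1/\gamma})^{\gamma/(\gamma+1)}.
\]

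The main obstacle is to ensure that the constant $\lambda_\gamma$, and therefore $\kappa'_\gamma$, remains bounded away from zero for every $\gamma>0$: this is exactly the role of the dyadic block trick, which converts the regular-variation ratio $\overline{G}(2j)/\overline{G}(j)\to 2^{-1/\gamma}$ into the explicit positive factor $1-2^{-1/\gamma}$ appearing in the pointwise lower bound on $g(j+1)$. Secondary issues --- the non-monotone prefix $j<j_0$, integer rounding in $\lfloor\lambda m'_n\rfloor$, and the $o(1)$ terms coming from the locally uniform form of regular variation --- are all absorbed into $n_0$, which depends on $G$ through $j_0$ and the rate in Potter's inequalities.
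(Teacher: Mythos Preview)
Your proof is correct, and it takes a genuinely different route from the paper.

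The paper works on the \emph{upper} tail: it fixes $k_n=\beta m'_n$ with $\beta\ge 1$, counts the expected number $L_n$ of distinct symbols exceeding $k_n$ via the second-order bound $1-(1-p)^n\ge np(1-np/2)$, invokes a Tauberian theorem $g(j)\sim \overline{G}(j)/(\gamma j)$ to control $ng(k_n)$, and then case-splits on $\gamma\ge 1$ versus $\gamma<1$ (choosing $\beta=\gamma^{-\gamma/(\gamma+1)}$ in the latter case, and adding a second count $S_n$ of symbols between $m'_n$ and $k_n$). You instead work on the \emph{lower} tail: you contract to $\lambda m'_n$ with $\lambda<1$, use the cruder but cleaner bound $1-(1-p)^n\ge(1-e^{-1})\mathbf{1}\{p\ge 1/n\}$, and replace the Tauberian step by an elementary dyadic-block inequality $g(j+1)\ge(\overline{G}(j)-\overline{G}(2j))/j$ that converts regular variation of $\overline{G}$ directly into a pointwise lower bound on $g$. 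Your argument avoids both the Tauberian theorem and the case analysis, and yields an explicit $\kappa'_\gamma$ in one line. The price is that your constant $\kappa'_\gamma\asymp(1-2^{-1/\gamma})^{\gamma/(\gamma+1)}$ degenerates as $\gamma\to\infty$, whereas the paper's stays at $1/4$ for $\gamma\ge 1$; conversely the paper's constant vanishes as $\gamma\to 0^+$ while yours stays bounded away from zero. Since the lemma only asks for a $\gamma$-dependent positive constant, either approach suffices.
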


\begin{proof}

Let $g$ denote the probability mass function corresponding to $G$. The regular variation property of $G$ then passes in a straightforward way to $g$ via so-called Tauberian theorems. In particular, recalling that we can write $\overline{G}(x) = x^{-1/\gamma} L(x)$, a simple adaptation of Theorem 1.7.2 of \cite{BinGolTeu89} shows that as $j\to\infty$:
$$
 g(j) \sim \frac{\overline{G}(j)}{\gamma j} \, .
$$

Given $\beta\geq 1$, let $k_n=\beta m'_n$ be a dilation of the threshold $m'_n$, which we will choose appropriately. Note that as $n\to\infty$, we also have that both $k_n, m'_n\to\infty$. Given $\epsilon>0$, we shall choose $n_0$ (which may depend on all of $\epsilon$, $\beta$ and $G$), such that for all $n>n_0$, several assertions hold true. In particular, for all $j\geq m'_n$, we have:
\begin{enumerate}[(i)]
 \item $g(j)$ is monotonically non-increasing (by assumption). \label{item:monotonic}
 \item $\frac{\overline{G}(j)}{(1+\epsilon)\gamma j} \leq g(j) \leq \frac{(1+\epsilon)\overline{G}(j)}{\gamma j}$ (by the Tauberian theorem). \label{item:tauberian}
 \item $\frac{\overline{G}(j)}{j} \geq \frac{1}{(1+\epsilon)} \frac{\overline{G}(j-1)}{j-1}$ (by regular variation limits over compact intervals, cf. Potter's inequalities in Lemma \ref{lemma:potters-inequalities}). \label{item:rv1}
 \item $\overline{G}(\beta m'_n) \leq (1+\epsilon) \beta^{-1/\gamma} \overline{G}(m'_n)$ (by regular variation). \label{item:rv2}
 \item $\frac{\overline{G}(\beta m'_n)}{m'_n} \geq \frac{1}{(1+\epsilon)} \beta^{-1/\gamma} \frac{\overline{G}(m'_n-1)}{m'_n-1} $ (by regular variation and similarly to \eqref{item:rv1} above). \label{item:rv3}
\end{enumerate}


Let $L_n$ be the number of distinct symbols in $X_1,\ldots,X_n$ that are larger than $k_n$. Then, when $n$ is beyond $n_0$, we have:
\begin{eqnarray*}
\EXP L_n
    &=& \sum_{k>k_n} \left( 1-(1-g(k))^n\right) \\
    & \geq & \sum_{k> k_n} ng(k)\left( 1-\frac{ng(k)}{2}\right) \\
    & \geq & n \overline{G}(k_n) \left(1-\frac{ng(k_n)}{2}\right) \\
    & \geq & n \overline{G}(k_n) \left(1-(1+\epsilon)\frac{n\overline{G}(k_n)}{2\gamma k_n}\right)
\end{eqnarray*}
where the first line is exact, the second step is an approximation, the third and fourth steps use assertions \eqref{item:monotonic} and \eqref{item:tauberian} respectively.

If $\gamma\geq 1$, we can simply set $\beta=1$, which would give us
\begin{eqnarray*}
\EXP L_n
  &\geq& n \overline{G}(m'_n) \left(1-(1+\epsilon)\frac{n\overline{G}(m'_n)}{2\gamma m'_n}\right) \\
  &\geq& n \frac{1}{(1+\epsilon)}m'_n \frac{\overline{G}(m'_n-1)}{m'_n-1}\left(1-(1+\epsilon)\frac{n\overline{G}(m'_n)}{2\gamma m'_n}\right) \\
  &\geq& \frac{m'_n}{1+\epsilon}\left(1-\frac{1+\epsilon}{2}\right),
\end{eqnarray*}
where the first step is a substitution, the second step uses assertion \eqref{item:rv1}, and the last step uses the fact that $\gamma\geq 1$ and the definition of $m'_n$, which implies that $G(m'_n)/m'_n\leq 1/n$ whereas $G(m'_n-1)/(m'_n-1) > 1/n$.

If $\gamma\leq 1$, if we attempt the above we end up with a lower bound that may be negative and thus vacuous. We remedy the problem by choosing $\beta$ appropriately. We have:
\begin{eqnarray*}
\lefteqn{\EXP L_n}\\
  &\geq& n \overline{G}(\beta m'_n) \left(1-(1+\epsilon)\frac{n\overline{G}(\beta m'_n)}{2\gamma \beta m'_n}\right) \\
  &\geq& n \frac{1}{(1+\epsilon)} \beta^{-\tfrac{1}{\gamma}} m'_n \frac{\overline{G}(m'_{n-1})}{m'_{n-1}} \left(1-(1+\epsilon)\beta^{-\tfrac{1}{\gamma}}\frac{n\overline{G}(m'_n)}{2\gamma \beta m'_n}\right) \\
  &\geq& \beta^{-\tfrac{1}{\gamma}} \frac{m'_n}{1+\epsilon}\left(1-\beta^{-\tfrac{1}{\gamma}-1}\frac{1+\epsilon}{2\gamma}\right),
\end{eqnarray*}
where now the second step uses assertions \eqref{item:rv2} and \eqref{item:rv3}, and the last step uses again the definition of $m'_n$. Therefore, we may choose $\beta = \gamma^{-\frac{\gamma}{\gamma+1}}$ when $\gamma<1$ to obtain the same functional form of the lower bound when $\gamma\geq 1$, up to a constant factor. We combine these two cases by letting $k_n = \left(1\vee \gamma^{-\frac{\gamma}{\gamma+1}} \right)$ and choosing $\epsilon=1/3$, to obtain:
\begin{eqnarray*}
\EXP L_n
  &\geq& \left(1\wedge \gamma^{\frac{1}{\gamma+1}} \right) \frac{m'_n}{1+\epsilon}\left(1-\frac{1+\epsilon}{2}\right) \\
  &\geq& \frac{1}{4} \left(1\wedge \gamma^{\frac{1}{\gamma+1}} \right) m'_n.
\end{eqnarray*}

This bound is sufficient to complete the lemma, since $\EXP K_n \geq
\EXP L_n$. We can try to improve it, by incorporating symbols below $m'_n$. However, without further assumptions, we cannot do so. One thing we can do, in the $\gamma<1$ case, is to smooth out this bound by incorporating symbols between $m'_n$ and $k_n$. Let $S_n$ be the number of distinct symbols $k\leq k_n$ in $X_1,\ldots,X_n$, with the same choice of $k_n$ and for $n\geq n_0$. We then have:
\begin{eqnarray*}
 \EXP S_n & = & \sum_{k\leq k_n} \left( 1-(1-g(k))^n\right)
 \IND_{g(k)>0}\\
& \geq & \sum_{m'_n < k\leq k_n} \left( 1-(1-g(k_n))^n\right) \\
& \geq & (k_n-m'_n) \left(1-\mathe^{-1/(1+\epsilon)^2}\right) \\
& \geq & \frac{1}{4} (k_n-m_n).
\end{eqnarray*}
Here, the first line is exact. The second step uses assertion \eqref{item:monotonic} both to bound the probabilities and to insure their positivity (indeed, if $g(j)=0$ for some $j\geq m'_n$, then it is so beyond that by monotonicity, which contradicts the regular variation property at infinity). The third step uses an approximation, in addition to assertions \eqref{item:tauberian} and \eqref{item:rv3}, and the definitions of $m'_n$ and $\beta$. The last step is an arbitrary (not necessary the best) choice of $\epsilon$. This bound is zero if $\gamma\geq 1$. If $\gamma < 1$, however, we have:
\begin{displaymath}
 \EXP S_n \geq \frac{m_n}{4} \left(\gamma^{-\gamma/(\gamma+1)} -1\right) \, .
\end{displaymath}

Combining $L_n$ and $S_n$, we can write:
\begin{eqnarray*}
 \EXP K_n &= &\EXP L_n + \EXP S_n \\ &\geq &\frac{m_n}{4} \left(1\wedge
 \gamma^{1/(\gamma+1)}  + \left(\gamma^{-\gamma/(\gamma+1)} -1
 \right)_+ \right) \, . \qedhere
\end{eqnarray*}

\end{proof}

\section{Proofs of Lemmas in the Main Text} \label{app:proofs}

\subsection*{\underline{Proof of Lemma \ref{lem:implict:func}}} \label{sec:proof-lemma-implicit}
\begin{proof}[\color{white}]
\noindent \vspace{-12pt}
\begin{enumerate}[(i)]
\item For sufficiently large $t$, $U(t/1)-1>0$, and as $x\mapsto U(t/x)-x$ decreases continuously to $-\infty$ on $[1,\infty)$, there exists some $x=m(t)$ such that
$U(t/x)-x=0.$ Hence, the function $m$ is defined over $(U^{-1}(1),\infty).$ If $U^{-1}(1)<t<t'$,
$U(t'/m(t))> U(t/m(t))=m(t)$, hence $m(t')> m(t)$.

\item The continuous differentiability of $m$ over $(U^{-1}(1),\infty)$ is a consequence of the implicit function theorem (see \citep*{trench2003introduction}). Moreover, the derivative of $m$ satisfies:
\begin{displaymath}
 m'(t) = \frac{U'\left({t}/{m(t)}\right)}{({t}/{m(t)})U'\left({t}/{m(t)}\right)+m(t)} \, .
\end{displaymath}

\item Assume on the contrary that $m$ is upper bounded by $B<\infty,$ then $U(t/B)\leq U(t/m(t))\leq B$ for $t\in (U^{-1}(1),\infty).$ As $U(t/B)$ tends to infinity, we obtain a contradiction. Assume now that $t/m(t)$ is upper bounded by $C<\infty,$ then $m(t)=U(t/m(t))\leq C$, we obtain another contradiction.

\item As $U\in \textsc{erv}(\gamma), \gamma\geq 0\text{ and }U(\infty)=\infty,$ $U\in \textsc{rv}(\gamma)$.
Then the function $L(t)= t^{-\gamma}U(t)$
is slowly varying. The definition of $m$ translates into
\begin{displaymath}
 m(t) = \frac{t^\gamma}{m(t)^\gamma} L\left( \frac{t}{m(t)}\right) \, ,
\end{displaymath}
or equivalently
\begin{displaymath}
 1 = \frac{t^\gamma}{m(t)^{1+\gamma}} L \left( \left(t \frac{t^\gamma}{m(t)^{1+\gamma}}\right)^{1/(1+\gamma)}\right)
\end{displaymath}
The function $L_1(t) = L(t^{1/(1+\gamma)})$ is slowly varying, hence the function $L_1^* \colon t\rightarrow \tfrac{t^\gamma}{m(t)^{1+\gamma}} $ appears as its De Bruijn conjugate, as such it is a slowly varying function. One line of computation reveals that $m$ is regularly varying with index $\gamma/(\gamma+1)$ and
\begin{displaymath}
 m(t) \sim t ^{\gamma/(\gamma+1)} /\left( L_1^*(t)\right)^{1/(1+\gamma)} \, . \qedhere
\end{displaymath}
\end{enumerate}
\end{proof}

\subsection*{\underline{Proof of Lemma \ref{lemma:bayes-envelope}}}
\begin{proof}[\color{white}]
\noindent
Since $f$ is ultimately monotonically non-decreasing, it immediately follows that the same is true for $g$. We focus therefore on showing that $G\in \textsc{mda}(\gamma)$. For this, we sandwich $\overline{G}$ by a scaled version of $\overline{F}$.

Given $\epsilon>0$, then for $k$ large enough, we have:
\begin{eqnarray}
 \overline{G}(k)
    &=& \sum_{{j'}>k} g({j'}) = \sum_{{j'}>k} f(2{j'}-j_0) \wedge f(2{j'}-j_0+1) \nonumber \\
    &\leq& \sum_{{j'}>k} \frac{1}{2} \left[f(2{j'}-j_0) + f(2{j'}-j_0+1)\right] \nonumber \\
    &=& \frac{1}{2} \overline{F}(2k-j_0) \leq \frac{(1+\epsilon)}{2} \overline{F}(2k), \label{eq:G-lowerbound}
\end{eqnarray}
where we have simply used the fact that the minimum lies below the average and the regular variation property of $\overline{F}$, with a slack of $1+\epsilon$.

Since $F$ is regularly varying with index $-1/\gamma$, by a simple adaptation of Theorem 1.7.2 of \cite{BinGolTeu89} (cf. the proof of Lemma \ref{thm:Kn:Mn} for a full relationship), so is $f$ with index $-1/\gamma-1$. In particular, it follows from this that $f(j+1)/f(j)\to 1$. Given $\delta>0$, we thus have that for ${j'}$ large enough:
$$
\frac{f(2{j'}-j_0) \wedge f(2{j'}-j_0+1)}{f(2{j'}-j_0) + f(2{j'}-j_0+1)} > \frac{1}{2+\delta}.
$$

Using this observation and the same steps above, we have that for ${j'}$ large enough:
\begin{eqnarray}
 \overline{G}(k)
    &>& \sum_{{j'}>k} \frac{1}{2+\delta} \left[f(2{j'}-j_0) + f(2{j'}-j_0+1)\right] \nonumber \\
    &=& \frac{1}{2\sqrt{1+\epsilon}} \overline{F}(2k-j_0) \geq \frac{1}{2(1+\epsilon)} \overline{F}(2k), \label{eq:G-upperbound}
\end{eqnarray}
where we choose the $\delta$ of the ratio test appropriately to get $2+\delta=2\sqrt{1+\epsilon}$, to compound its effect with the regular variation slack of $\sqrt{1+\epsilon}$ given to $\overline{F}$.

From the sandwiching offered by the two bounds of Equations \eqref{eq:G-lowerbound} and \eqref{eq:G-upperbound}, it follows immediately that $\overline{G}$ is also regularly varying at infinity with index $-1/\gamma$, and that thus $G\in\textsc{mda}(\gamma)$.

To compare $m'_n$ to $m_n$, note that if $k\leq m_{n/(1+\epsilon)}/2$, then since for all $t<m_{n/(1+\epsilon)}$ we have $\overline{F}(t) > \frac{t}{n/(1+\epsilon)}$, Equation \eqref{eq:G-upperbound} gives us that $\overline{G}(k) > \frac{1}{2(1+\epsilon)} (1+\epsilon) \frac{2k}{n} =k/n$. It follows that $m'_n>k$ for all $k\leq m_{n/(1+\epsilon)}/2$, and thus $m'_n\geq m_{n/(1+\epsilon)}/2$. By the regular variation property of $m_n$ (see Lemma \ref{lem:implict:func}), we have $m_{n/(1+\epsilon)} \sim (1+\epsilon)^{-\frac{\gamma}{\gamma+1}} m_n$. This means that for large enough $n$, we can pay an additional factor of $1+\epsilon$ to get $m_{n/(1+\epsilon)} > \frac{1}{1+\epsilon} (1+\epsilon)^{-\frac{\gamma}{\gamma+1}} m_n > \frac{1}{(1+\epsilon)^2} m_n$. We thus have, for large enough $n$:
$$
 m'_n \geq \frac{1}{2(1+\epsilon)^2} m_n.
$$

A bound in the other direction follows similarly.
\end{proof}

\subsection*{\underline{Proof of Lemma \ref{lem:pointwise-bounds} }}

\begin{proof}[\color{white}]
Before we proceed, we give a convenient representation of $\Sigma(\u)$ in an integral form. We have:
\begin{eqnarray} \label{eq:integral}
\Sigma(\u)
   &=& \frac{1}{\ln 2} \int_1^\infty \IND\{y>\u\} \ln(1+y-\u) \PROB(\mathd y) \nonumber \\
   &=& \frac{1}{\ln 2} \int_1^\infty \IND\{y>\u\} \int_1^\infty \frac{\IND\{\u<x<y\}}{1+x-\u} \mathd x \PROB(\mathd y) \nonumber  \\
   &=& \frac{1}{\ln 2} \int_1^\infty \IND\{x>\u\} \frac{1}{1+x-\u}  \int_1^\infty \IND\{y>x\} \PROB(\mathd y) \mathd x \nonumber \\
   &=& \frac{1}{\ln 2} \int_\u^\infty \frac{\overline{G}(x)}{1+x-\u} \mathd x,
\end{eqnarray}
where we have written an integral form of the logarithm and used Fubini's theorem to swap the integrals.

When $G$ belongs to an envelope class defined by $F$, we have $\overline{G}\leq\overline{F}$, and therefore we can see from Equation \eqref{eq:integral} that $\Sigma(\u)$ under $G$ is dominated by that under $F$. In particular, when $F\in \textsc{mda}(\gamma)$ with $\gamma>0$, it admits logarithmic moments, and we trivially see that $\Sigma(\u)$ is finite. But what we are really interested in is the decay of $\Sigma(\u)$ as $\u$ grows.

Equation \eqref{eq:integral} shows that the decay of $\Sigma(\u)$ is governed by the decay of $\overline{G}(\u)$ itself, which dominates for small values of $x$, and is then complemented by the decay of $1/(1+x-\u)$. We can capture this compromise by splitting the integral at some arbitrary point, say $\u+t-1$ for some $t\geq 1$. We have:
\begin{eqnarray}
\Sigma(\u)
    &=& \frac{1}{\ln 2} \int_\u^{\u+t-1} \frac{\overline{G}(x)}{1+x-\u} \mathd x \\
&& +  \frac{1}{\ln 2} \int_{\u+t-1}^\infty \frac{\overline{G}(x)}{1+x-\u} \mathd x \\
    &\leq& \overline{G}(\u) \log(t) + \frac{1}{\ln 2} \int_t^\infty \frac{\overline{G}(y)}{y} \mathd y, \label{eq:splitting}
\end{eqnarray}
where we have split the integral, bounded $\overline{G}$ in both parts by its largest value, and performed the integration of the first part and a change of variable in the second. For the latter, we proceed by first bounding by the envelope:
$$
\int_t^\infty \frac{\overline{G}(y)}{y} \mathd y \leq \int_t^\infty \frac{\overline{F}(y)}{y} \mathd y.
$$

We would now like to invoke Karamata's integration theorem. Let us make the change of variable $y=m(z)$ and set $z_0=m^{-1}(t)$. By using the property that $\overline{F}(m(z))=m(z)/z$ and by performing an integration by parts, we get:
$$
\int_t^\infty \frac{\overline{F}(y)}{y} \mathd y = \int_{z_0}^\infty \frac{\mathd m(z)}{z} = -\frac{m(z_0)}{z_0}+\int_{z_0}^\infty \frac{m(z)}{z^2} \mathd z.
$$
Now note that $\frac{m(z)}{z^2}$ is regularly varying with index $\frac{\gamma}{\gamma+1}-2$, by Lemma \ref{lem:implict:func}. By using Karamata's integration theorem, Theorem \ref{thm:karamata}, we find that given $\epsilon>0$, for large enough $t$,
$$
\int_{z_0}^\infty \frac{m(z)}{z^2} \mathd z \leq (\gamma+1+\epsilon\ln 2) \frac{m(z_0)}{z_0)}.
$$
Combining the last three equations together, we have:
$$
\int_t^\infty \frac{\overline{G}(y)}{y} \mathd y \leq (\gamma +\epsilon\ln 2) \frac{m(z_0)}{z_0},
$$
and the claim follows using the fact that $m(z_0)/z_0=\overline{F}(m(z_0))=\overline{F}(t)$.
\end{proof}

\subsection*{\underline{Proof of Lemma \ref{lem:mean} }}

\begin{proof}[\color{white}]
Recall that:
$$
  \EXP[\overline{G}(M_i)]=\PROB\{X_{i+1}>M_i\} = \EXP[\IND\{ X_{i+1}>M_i \}].
$$

We would like to exploit the independence structure (in fact, only the exchangeability aspect of independence). To make this symmetry complete for the event of interest, in what follows we effectively replace $M_i$ by a new threshold, equal to $M_{i+1}-1$.

Let $\varsigma$ indicate a uniform random permutation of $1,\cdots,i+1$ that we inject into the probability space. Note that $M_i$ never decreases and increases at most by $1$ at every new sample (see also the property of self-boundedness in the Appendix of the paper). Furthermore $M_{i+1}$ is permutation invariant, as its definition relies only on order statistics. In light of these properties, we can write:
\begin{eqnarray*}
  \PROB\{X_{i+1}>M_i\} &\leq &\PROB\{X_{i+1}>M_{i+1}-1\} \\ &=& \PROB\{X_{\varsigma(i+1)}>M_{i+1}-1\}.
\end{eqnarray*}

Let us now condition on the values of the samples $X_1,\cdots,X_{i+1}$. This fixes the value of $M_{i+1}-1$, by invariance. The only randomness that remains in the last expression is that due to permutations. Now note that the event $\{X_{\varsigma(i+1)}>M_{i+1}-1\}$ occurs a fraction of times corresponding to the number of samples strictly larger than $M_{i+1}-1$, or equivalently greater than or equal to $M_{i+1}$. Thus:
\begin{eqnarray*}
\lefteqn{\PROB\{X_{\varsigma(i+1)}>M_{i+1}-1| X_1,\cdots,X_{i+1}\}} \\
&= &\frac{ \sum_{j=1}^{i+1} \IND\{X_j>M_{i+1}-1\} }{i+1}.
\end{eqnarray*}

Finally, observe that we have:
\begin{eqnarray*}
\sum_{j=1}^{i+1} \IND\{X_j>M_{i+1}-1\}
  &=&    \max\{0, ~k:X_{k,i+1}\geq M_{i+1}\} \\
  &\leq& M_{i+1},
\end{eqnarray*}
where the inequality follows from the fact that all order statistics of rank greater than or equal to $M_{i+1}$ are no greater than $M_{i+1}$, by the definition $M_{i+1}=\min\{i+1, ~k:X_{k,i+1}\leq k\}$.

Therefore, as claimed:
\begin{eqnarray*}
\EXP[\overline{G}(M_i)] &\leq &\EXP\left[\PROB\{X_{\varsigma(i+1)}>M_{i+1}-1| X_1,\cdots,X_{i+1}\}\right]\\
 &\leq &\frac{\EXP[M_{i+1}]}{i+1}. \qedhere
\end{eqnarray*}
\end{proof}

\subsection*{\underline{Proof of Lemma \ref{lem:elias-code} }}

\begin{proof}[\color{white}]
We have that given $\epsilon$, then beyond some $i_0$:
\begin{eqnarray*}
\EXP[\ell(C_{\textsc{e}})]
    &\leq& 2\sum_{i=1}^{n-1} \left( \EXP[\Sigma(M_i)] + \rho\EXP[G(M_i)]\right) \\
    &\leq& 2\sum_{i=1}^{i_0-2} \left( \EXP[\Sigma(M_i)] + \rho\EXP[G(M_i)]\right) \\
    && \quad +\ (2+\epsilon')\sum_{i=i_0-1}^{n-1} \frac{m_{i+1}}{i+1}  \log(m_{i+1}) \\
&& + (\gamma/\ln 2+\rho+\epsilon') \frac{m_{i+1}}{i+1} \\
    &\leq& \kappa + (2+\epsilon) \sum_{i=i_0-1}^{n-1} \frac{m_{i+1}\log(m_{i+1})}{i+1},
\end{eqnarray*}
where we have combined Equations \eqref{eq:EGUi-mi} and \eqref{eq:ESUi-mi} into Equation \eqref{eq:total-bound}, and where the adjustment between $\epsilon$ and $\epsilon'$ is made a priori.

To establish the integral bound, we are at first tempted to assume that $(m_i \ln m_i)/i$ is non-increasing. However, this is not strictly true. The furthest $m_{i+1}$ will move away from $m_i$ is when $\overline{F}$ remains constant (equal to $\frac{m_i}{i}$) between $m_i$ and $m_{i+1}$. This would mean that $m_{i+1}/(i+1)=\frac{m_i}{i}$, and thus $m_{i+1}\leq(1+1/i)m_i$. From this, we find that:
\begin{eqnarray*}
  \frac{m_{i+1} \log m_{i+1}}{i+1} &\leq &\frac{m_i \log m_i}{i} + \frac{m_i \log (1+1/i)}{i} \\ &\leq& \frac{m_i \log m_i}{i} + \frac{m_i}{i}.
\end{eqnarray*}
Therefore, the integral can deviate from the sum by at most $\sum \frac{m_i}{i}$, which is of negligible order compared to the magnitude of the sum.

The direct bound follows by noting that we can use Jensen's inequality and the fact that $m(n)$ is non-decreasing, to show:
\begin{eqnarray*}
\lefteqn{  \int_1^n \frac{1}{t} m(t) \log m(t) \mathd t}\\
      &\leq& \left(\int_1^n \frac{1}{t} m(t) \mathd t\right) \log \left(  \frac{\int_1^n \frac{1}{t} m^2(t) \mathd t}{\int_1^n \frac{1}{t} m(t) \mathd t}    \right)\\
      &\leq&  m(n) \log(n) \log m(n).
\end{eqnarray*}

Lastly, to specialize to the Fr\'echet case, recall (by Lemma \ref{lem:implict:func}) that $m(t)$ is $\textsc{rv}_{\gamma/(\gamma+1)}$, therefore we also have that $(m(t) \log m(t)) /t$ is $\textsc{rv}_{-1/(\gamma+1)}$. Karamata's integration theorem, Theorem \ref{thm:karamata}, then tells us that given $\epsilon>0$, there exists a $t_0$ and $t_1>t_0$ such that for all $n>t_1$:
\begin{equation} \label{eq:karamata-frechet}
\frac{m_n \log m_n}{\int_{t_0}^n \frac{1}{t} m(t) \log m(t) \mathd t} \geq (1-\epsilon) \left[\frac{-1}{1+\gamma}+1\right] = (1-\epsilon) \frac{\gamma}{1+\gamma}.
\end{equation}

When $\gamma>0$, we can therefore combine Equations \eqref{eq:elias-bound-integral} and \eqref{eq:karamata-frechet} to write that there exists a constant $\kappa$ such that for large enough $n$:
\begin{eqnarray*}
\EXP[\ell(C_{\textsc{e}})]
  &\leq& \kappa + (2+\epsilon)\frac{\gamma+1}{\gamma} m_n \log m_n \\
  &\leq& (2+o_{\Lambda}(1)) \frac{\gamma+1}{\gamma}  m_n \log m_n \\
  &\leq& (2+o_{\Lambda}(1)) m_n \log n,
\end{eqnarray*}
where for the last expression we have used the regular variation property of $m_n \sim n^{\gamma/(\gamma+1)} L_m(n)$, for some slowly varying function $L_m$ (given in Lemma \ref{lem:implict:func}), and the fact that $\log L(n)/\log n \to 0$ for any slowly varying function $L$.
\end{proof}

\end{document}